\newif\ifdraft \draftfalse
\newif\iffull \fulltrue
\newif\ifec \ecfalse
\makeatletter \@input{tex.flags} \makeatother
\definecolor{DarkGreen}{rgb}{0.1,0.5,0.1}
\definecolor{DarkRed}{rgb}{0.5,0.1,0.1}
\definecolor{DarkBlue}{rgb}{0.1,0.1,0.5}
\newcommand{\sw}[1]{\ifdraft \textcolor{blue}{[Steven: #1]}\fi}
\newcommand\NN{\mathbb{N}}
\newcommand\RR{\mathbb{R}}
\newcommand\cA{\mathcal{A}}
\newcommand\cC{\mathcal{C}}
\newcommand\cE{\mathcal{E}}
\newcommand\cF{\mathcal{F}}
\newcommand\cG{\mathcal{G}}
\newcommand\cR{\mathcal{R}}
\newcommand\cX{\mathcal{X}}
\newcommand\cD{\mathcal{D}}
\newcommand\cU{\mathcal{U}}
\newcommand{\cI}{\mathcal{I}}
\newcommand{\optdet}{\ensuremath{\mathtt{OPT}^{\mathtt{det}}}}
\newcommand{\optave}{\ensuremath{\mathtt{OPT}^{\mathtt{ave}}}}
\newcommand{\prior}{\psi} 
\newcommand{\joint}{\psi_*} 
\newcommand{\support}{\mathtt{support}}
\newcommand{\trueState}{\theta_0} 
\newcommand{\PiBIC}[1][\mathtt{BIC}]{\Pi_{#1}}
\newcommand{\PiBICdet}[1][\mathtt{BIC}]{\Pi_{#1}^{\mathtt{det}}}
\newcommand{\explorableD}[1][\trueState]{\cA^{\mathtt{det}}_{#1}} 
\newcommand{\pimax}{\pi^{\max}} 
\newcommand{\maxSupport}{{\sf ComputeMaxSupport}}
\newcommand{\maxSupportN}{{\sf ComputeMaxSupport}^\delta}
\newcommand{\MaxEx}{{\sf MaxExplore}} 
\newcommand{\SMaxEx}{{\sf MaxExplore}^{\delta}}
\newcommand{\IndMax}{{\sf RepeatMaxExplore}}  
\newcommand{\SIndMax}{{\sf RepeatMaxExplore}^\delta}
\newcommand{\signal}{\mathtt{AllInfo}}
\newcommand{\signalD}{\mathtt{AllInfo}}
\newcommand{\sigStruc}{S^{\mathtt{str}}} 
\newcommand{\pmin}{p_{\mathtt{min}}} 
\newcommand{\psig}{p_{\mathtt{sig}}} 
\newcommand{\unifLB}{\Lambda^{\delta}_{\mathtt{sig}}} 
\newcommand{\myGap}{\zeta_{\mathtt{sep}}} 
\newcommand{\refeq}[1]{Eq.~(\ref{#1})}
\newcommand{\indicator}[1]{{\bf 1}_{\{#1\}}}
\newcommand{\REW}{\mathtt{REW}}  
\newcommand{\opt}{\ensuremath{\mathtt{OPT}}}
\newcommand{\eps}{\varepsilon}
\def\epsilon{\varepsilon}
\DeclareMathOperator{\OPT}{OPT}
\newcommand{\rbr}[1]{\left(\,#1\,\right)}
\newcommand{\sbr}[1]{\left[\,#1\,\right]}
\newcommand{\cbr}[1]{\left\{\,#1\,\right\}}
\newcommand{\cel}[1]{{\lceil {#1} \rceil}}
\newcommand{\ie}{{\em i.e.,~\xspace}}
\newcommand{\eg}{{\em e.g.,~\xspace}}
\newcommand{\myTab}{\hspace{5mm}}
\DeclareMathOperator{\poly}{poly}
\DeclareMathOperator*{\Expectation}{\mathbb{E}}
\newcommand{\Ex}{{\sf EX}}
\newcommand{\SEx}{{\sf EX}^\delta}
\newcommand{\bic}{{\sf BIC}}
\newcommand{\denoise}{{\sf DeNoise}}
\DeclareMathOperator*{\argmin}{\mathrm{argmin}}
\newtheorem{theorem}{Theorem}[section]
\newtheorem{corollary}[theorem]{Corollary}
\newtheorem{claim}[theorem]{Claim}
\newtheorem{lemma}[theorem]{Lemma}
\newtheorem{definition}[theorem]{Definition}
\newtheorem{assumption}[theorem]{Assumption}
\newtheorem{remark}[theorem]{Remark}
\newcommand{\xhdr}[1]{\vspace{1mm} \noindent{\bf #1}}
\newcommand{\LDOTS}{\, ,\ \ldots\ ,}     
\newcommand{\OMIT}[1]{}
\newenvironment{OneLiners}[1][\ensuremath{\bullet}]
    {\begin{list}
        {#1}
     	{\setlength{\itemsep}{0pt}
	    \setlength{\parsep }{0pt}
      	\setlength{\topsep }{0pt}}}
    {\end{list}}
\newcommand{\E}{\operatornamewithlimits{\mathbb{E}}} 
\newcounter{MyPropertyCounter}
\title{
\vspace{-10mm}Bayesian Exploration:\\
Incentivizing Exploration in Bayesian Games%
\footnote{An extended abstract of this paper was published in
\emph{ACM Conf. on Economics and Computation (ACM-EC)}, 2016.
\newline \indent
The working paper has been available at {\tt https://arxiv.org/abs/1602.07570} since Feb 2016.
Revisions focused on presentation; all results (except Appendix~\ref{sec:LB}) have been present since the initial version.
\newline \indent
This research was done while Y. Mansour was a Principal Researcher at Microsoft Research (Herzliya, Israel), and Z.S.Wu was a student at University of Pennsylvania  and a research intern at Microsoft Research NYC.
\newline \indent
The authors wish to thank Dirk Bergemann, Yeon-Koo Che, Shaddin Dughmi, Johannes Horner, Bobby Kleinberg, and Stephen Morris for stimulating discussions on incentivizing exploration and related topics. Also, we are grateful to reviewers of \emph{ACM-EC 2016} and \emph{Operations Research} for numerous helpful suggestions.
 }}
\newcommand{\email}[1]{Email: {\tt #1}.}
\author{
\hspace{-0.75cm} \rule{0.0in}{0pt}
Yishay Mansour
\thanks{Tel Aviv University, Tel Aviv, Israel. \email{mansour@tau.ac.il}}
\and
\hspace{-0.75cm} \rule{0.0in}{0pt}
Aleksandrs Slivkins\thanks{Microsoft Research, New York, NY, USA. \email{slivkins@microsoft.com}}
\and
\hspace{-0.75cm} \rule{0.0in}{0pt}
Vasilis Syrgkanis\thanks{Microsoft Research, Cambridge, MA, USA. \email{vasy@microsoft.com}}
\and
\hspace{-0.75cm} \rule{0.0in}{0pt}
Zhiwei Steven Wu\thanks{Carnegie Mellon University, Pittsburgh, PA, USA. \email{zstevenwu@cmu.edu}}
\hspace{-0.75cm} \rule{0.0in}{0pt}
}
\begin{document}

\date{First version: February 2016\\This version: April 2021}
\maketitle

\vspace{-5mm}
\begin{abstract}
We consider a ubiquitous scenario in the Internet economy when individual decision-makers (henceforth, \emph{agents}) both produce and consume information as they make strategic choices in an uncertain environment. This creates a three-way tradeoff between \emph{exploration} (trying out insufficiently explored alternatives to help others in the future), \emph{exploitation} (making optimal decisions given the information discovered by other agents), and  \emph{incentives} of the agents (who are myopically interested in exploitation, while preferring the others to explore). We posit a principal who controls the flow of information from agents that came before to the ones that arrive later, and strives to coordinate the agents towards a socially optimal balance between exploration and exploitation, not using any monetary transfers. The goal is to design a recommendation policy for the principal which respects agents' incentives and minimizes a suitable notion of \emph{regret}. We extend prior work in this direction to allow the agents to interact with one another in a shared environment: at each time step, multiple agents arrive to play a~\emph{Bayesian game}, receive recommendations, choose their actions, receive their payoffs, and then leave the game forever. The agents now face two sources of uncertainty: the actions of the other agents and the parameters of the uncertain game environment.

Our main contribution is to show that the principal can achieve constant regret when the utilities are deterministic (where the constant depends on the prior distribution, but not on the time horizon), and logarithmic regret when the utilities are stochastic. As a key technical tool, we introduce the concept of {\em explorable actions}, the actions which some incentive-compatible policy can recommend with non-zero probability. We show how the principal can identify (and explore) all explorable actions, and use the revealed information to  perform optimally. In particular, our results significantly improve over the prior work on the special case of a single agent per round, which relies on assumptions to guarantee that all actions are explorable. Interestingly, we do not require the principal's utility to be aligned with the cumulative utility of the agents; instead, the principal can optimize an arbitrary notion of per-round reward.

\end{abstract}

\newpage

\tableofcontents
\vfill
\newpage

\section{Introduction}
\label{sec:intro}


A common phenomenon of the Internet economy is that individual decision-makers (henceforth, \emph{agents}) both produce and consume information as they make strategic decisions in an uncertain environment. Agents produce information through their selection of actions and the resulting outcomes.%
\footnote{This information can be collected explicitly, \eg as reviews on products, or
  implicitly, \eg by observing the routes chosen by a driver and the associated driving times via a GPS-enabled device.}
Agents consume information from other agents who made similar choices in
the past, when and if such information is available, in order to
optimize their utilities.  The collection and dissemination of information relevant to agents' decisions can be instrumented on a very large scale. Numerous online services do it to provide recommendations concerning various products, services and experiences: movies to watch, products to buy, restaurants to dine in, and so forth.

The main issue for this work is that the agents tend to be myopic, optimizing their own immediate reward, whereas the society would benefit if they also explore new or insufficiently explored alternatives. While the tension between acquisition and usage of information is extremely well-studied (under the name of~\emph{exploration-exploitation tradeoff}), a crucial new dimension here is the~\emph{incentives} of the agents: since the agents are self-interested, one cannot expect agents to explore only because they are asked to do so. This creates a new intriguing \emph{three-way tradeoff} between exploration, exploitation and incentives. To study this problem, we introduce a {\em principal}, who abstracts the society or a recommendation system, and whose goal is to optimize some social objective function such as the social welfare.  The principal can control the flow of information from past experiences to the agents, but is not allowed to use  monetary transfers. Absent any new information, an agent will only perform the a priori better action, resulting in no exploration. Full transparency --- revealing to an agent all information currently known by the principal --- is not a good solution, either, because then an agent would only exploit. The goal is to understand how the principal can induce sufficient exploration for achieving near-optimal outcomes.

The prior work on this problem~\citep{Kremer-JPE14,ICexploration-ec15-conf} has a crucial limitation: given the principal's recommendation, the utility of a given agent is assumed to be unaffected by the choices of other agents. This is often not the case in practice. We lift this restriction, and allow the agents to affect one another. Informally, we posit that the agents operate in a shared environment, and an agent's decision can affect this environment for a limited period of time.






\xhdr{Motivating example.}
Let us consider a motivating example based on traffic routing. Consider a GPS-based navigation application such as Waze that gives each user a recommended driving route based on the current traffic conditions in the road network (consuming information received from other drivers), and uses his GPS signal to monitor his progress and the traffic conditions along the route (producing information for future recommendations). A natural goal for the navigation system (the ``principal" in this example) is to minimize the average delay experienced by the users.

If the application is used by only a few drivers, we can view each user in isolation. However, once the application serves a substantial fraction of drivers in the region, there is a new aspect: recommendations may impact the traffic conditions experienced by other drivers. For example, if the application suggests to all the users in a given region to move to a certain lightly loaded route, then when/if they all follow the recommendation, the route may become highly congested. The uncertainty on how congestion affects delays on various routes can be reduced via \emph{exploration}.

A natural simplified model for this scenario is that at each round, multiple drivers arrive and interact through a \emph{routing game} on fixed road network. In this game, the agents simultaneously choose their routes, and the delay at each link is determined by the \emph{load}: the number of routes that use this link. Typically one assumes a parameterized Bayesian model, where the delay at a particular link is a known function of the load and a vector of parameters. The parameter vector is initially not known, but comes from a known Bayesian prior. Thus, we have a Bayesian game among the drivers. Furthermore, there is a principal that recommends routes to drivers: before each round, it recommends a route to each driver that arrives in this round, and observes the associated delays. Since the drivers are not obliged to follow the recommended routes, the recommendations must be ``compatible'' with drivers' incentives, in the sense that they must form a correlated equilibrium in the routing game. At each time-step, the principal has a two-pronged goal: to minimize the driving times (\emph{exploit)} and also to obtain information about the unknown parameters (\emph{explore}) for the sake of giving better recommendations in the future. The challenge is to find an optimal balance between exploration and exploitation under the constraint that even the exploration must be compatible with drivers' incentives.




\asmargincomment{moved all stuff on "results" into the next xhdr}
\xhdr{Our model.}
We put forward a model, called \emph{Bayesian Exploration}, which captures the essence of this motivating example (as well as several other examples discussed below). Initially, some realized state of nature $\trueState$ is selected from a known prior distribution, and never changes since then. There are $T$ rounds. In each round, a new set of $n$ agents arrive and play a game (the same game in all rounds): each agent selects an action, and receives utility determined by the joint action and the state $\trueState$. There is a single principal which, in each round, recommends an action to each participating agent, and observes the chosen actions and the resulting utilities. The recommendations must be Bayesian incentive-compatible (\emph{BIC}), so that the agents are interested in following them.
The principal's utility, a.k.a. \emph{reward}, is also determined by the joint action and the state $\trueState$. The goal of the principal is to maximize her cumulative reward over all rounds, in expectation over the prior (henceforth, \emph{expected reward}). We consider two versions, depending on whether the utilities are ``deterministic" or ``stochastic", \ie whether the utilities of a given joint action are fixed over time (given $\trueState$), or drawn from a fixed distribution.




The major new contribution of our work over \citep{Kremer-JPE14,ICexploration-ec15-conf} is the introduction of multiple agents at each round, and considering an interaction through an arbitrary Bayesian game. This additional interaction allows us to abstract a new host of scenarios, and introduces a new level of complexity to the framework. In particular, the principal now has a variety of joint actions it can recommend, and in many cases it has to randomize its recommendation in order maintain incentive-compatibility. (Technically, in each round the principal needs to select a Bayes-correlated equilibrium \citep{BM-econometrica13}, whereas with a single agent in each round it suffices to select a single action, which is a much simpler object.)

Another important feature of our model is that the utilities of the principal may be unaligned with the (cumulative) utility of all agents. For example, the principal may be interested in some form of fairness, such as maximizing the minimum utility in each round. Such objective is very different from the sum of all utilities. At the extreme, the principal might want to {\em minimize} the cumulative utility. Interestingly, while the BIC constraint generally limits the principal's ability to harm the agents, the principal may still be able to significantly lower the Bayesian-expected social welfare, compared to the worst Bayes-Nash equilibrium that would exist without the principal. (In fact, this effect can be achieved even in a single round, see \citet{PriceOfMediation-DM14}.) This is in stark contrast to the case of single agent per round, where the Bayesian-expected reward of any BIC policy must be at least that of the a-priori best action.

The model is stylized in that we consider the same game in each round and make standard economic assumptions; we discuss this more in Conclusions.

\xhdr{Results and techniques.}
We design policies for the principal that are near-optimal compared to the \emph{best-in-hindsight policy}: a BIC policy which maximizes the expected reward for a particular problem instance. In fact, we compete against a stronger benchmark: optimal expected reward achieved in any one round by any BIC policy.

Our first-order result is a BIC policy whose time-averaged expected reward asymptotically approaches the benchmark. Further, we strive to minimize a version of \emph{Bayesian regret}: the difference between the benchmark and the expected reward of the principal. Following the literature on regret minimization, we are mainly interested in the asymptotic dependence of regret on the time horizon $T$. We achieve optimal asymptotic dependence on $T$:  constant for deterministic utilities, and logarithmic for stochastic utilities.%
\footnote{Our result for stochastic utilities relies on a slightly altered definition of the benchmark, see Section~\ref{sec:model-results} for details.}
The asymptotic constants depend on the Bayesian prior, which appears inevitable because of the BIC constraints. Our policies are computationally efficient: their running time per round is polynomial in the input size.

The stated dependence on $T$ is very non-trivial because the principal faces a game that changes from round to round: namely, the agents' beliefs about the principal depend on the round in which they arrive, because they know that the principal is learning over time. For the sake of argument, consider a modification in which the agents erroneously believe that the principal is not learning. Then the principal plays the same game over and over again, and constant (resp., logarithmic) dependence on $T$ is much easier to obtain.

Our policy for stochastic utilities does not need to know the distribution of ``noise" added to the utilities. This is somewhat surprising because, for example, one would need to know the noise distribution in order to do a Bayesian update on the noisy inputs. Such \emph{detail-free} properties are deemed desirable in the economics literature.

One important technical contribution concerns the concept of the {\em explorable} joint actions: those that can be explored by some BIC policy in some round with positive probability. In general, not all joint actions are explorable, and the set of explorable joint actions may depend on the state $\trueState$. Our high-level approach is to explore all explorable actions. Thus, we face two challenges. First, we need to identify which joint actions are explorable, and explore them. This has to be done inductively, whereby the utilities revealed by exploring one joint action may enable the policy to explore some others, and so forth until no further progress can be achieved. The second challenge is to compute an optimal (randomized) recommendation after all explorable joint actions have been explored, and prove that it outperforms any other BIC policy.

To address these two challenges, we develop the theory of the single-round game in our setting. Termed the \emph{recommendation game}, it is a generalization of the well-known
\emph{Bayesian Persuasion} game \citep{Kamenica-aer11} where the signal observed by the principal is distinct from, but correlated with, the unknown state of nature. (Here the principal's signal captures the history observed in the previous rounds.) This generalization allows us to argue about comparative statics: what happens if the principal's signal is modified to contain more information relevant to the state $\trueState$. We show that a more informative signal can only improve the optimal reward achievable by the principal, and can only increase the set of ``explorable" joint actions. 

The ``monotonicity-in-information" statements mentioned above (that exploring more joint actions or receiving a more informative signal can only help) are non-trivial and not easy to prove, even if they may seem intuitive. It is worth noting that similar monotonicity-in-information statements in game theory are sometimes intuitive but \emph{false} \citep{Wiggans-86,Kessler-98,Syrgkanis-ec15}. More generally, while our algorithms and analyses involve many notions, steps and statements that seem intuitive, formalizing these intuitions required a multi-layered framework to reason about Bayesian Exploration in precise terms. Building up this framework has been a major part of the overall effort.

With stochastic utilities, we have an additional obstacle: the deviations between the expected utilities and our estimates thereof, albeit small and low-probability, may distort agents' incentives. Moreover, the number of different possible observations from the same joint action becomes exponential in the number of rounds, which blows up the running time if we use the techniques from the deterministic case without modifications. To side-step these obstacles, our analysis carefully goes back and forth between the original problem instance and the corresponding problem instance with deterministic utilities.

Our ``algorithmic characterization" of all explorable joint actions is an important contribution even for the case of single agent per round. While \citet{Kremer-JPE14,ICexploration-ec15-conf} provide necessary and sufficient conditions when there are only two actions, for multiple actions \citet{ICexploration-ec15-conf} only give a sufficient condition which is far from necessary. Our work rectifies the situation and provides an explicit algorithm -- in fact, an explicit BIC policy -- to identify all explorable joint actions.

\xhdr{Additional motivating examples.}
The essential features of Bayesian Exploration --- the tradeoff between exploration, exploitation and incentives, and presence of multiple agents in a shared environment --- arise in a variety of scenarios, in addition to the routing scenario described above.

One scenario concerns coordinating participants in a market. Consider sellers that are selling tickets for a particular sports event on an online platform such as StubHub. Implicitly, these sellers are involved in a game where they set the prices and the buyers can select which tickets to buy. A principal (the same platform or a third party) can suggest to the sellers how to set the prices so as to optimize the sellers' combined revenue or other notion of social optimality. The principal learns about the demand of the buyers (through exploration) in order to help the sellers to set the right prices. The price recommendations need to be incentive-compatible because the sellers need to be convinced to follow them. \OMIT{Similarly, one can imagine a principal that coordinates the \emph{buyers} in a similar setting, recommending bidding strategies that optimize some global notion of ``happiness".}

Another scenario concerns optimizing a shared access to a computing resource. For a simple example, consider a university data center shared by the faculty members, where each user can specify the machine to run his jobs on. The principal can collect the information from  multiple users, so as to learn their typical resource requirements, and recommend which machines they should connect to. As a non-critical component, such recommendation system would be easier to maintain compared to a scheduler which enforces the same resource allocation, \eg it can be ignored when/if it malfunctions.

A third scenario addresses a congestion game in which the agents are incentivized towards homophily. Consider a population of individuals choosing which experience to attend in the near future (\eg go to a movie, watch a video on Youtube, read a news article, or attend a sports event). Often people are interested not only in the inherent quality of the experience, but also in sharing it with a particular group. This group may be quite large, \eg it could include many people that share the same demographic, political views, general tastes, etc. Thus, one could imagine a recommendation service that would coordinate people towards sharing the experiences they would like with the people that they would want to share them with. Such recommendation service would need to \emph{explore} to learn people's tastes, and be compatible with people's incentives.

\OMIT{ 
\xhdr{Organization of the paper.}
Model and results are described in Section~\ref{sec:model}. In Section~\ref{sec:warmup}, we solve a simplified version of the problem, and along the way develop some essential tools. Section~\ref{sec:det} presents the main result for the deterministic utilities. Stochastic utilities are treated in Section~\ref{sec:noisy}. A useful fact about the single-round game (a major tool in the preceding sections) is summarized and proved in Section~\ref{app:coupled-signals}. Conclusions are in Section~\ref{sec:conclusions}. In particular, we discuss some limitation of our model, and suggest directions for follow-up work.
} 

\section{Related Work}
\label{sec:related-work}

The study of ``incentivizing exploration" was initiated by~\citet{Kremer-JPE14} and \cite{Che-13}. Most relevantly to this paper, \citet{Kremer-JPE14} ,  introduce a version of Bayesian Exploration with only two actions and without strategic interactions between agents, \ie with a single agent in each round. They provide an optimal policy for deterministic utilities, and a preliminary result for stochastic utilities. \citet{ICexploration-ec15-conf,ICexploration-ec15} obtain optimal regret rate for stochastic utilities, as well as a reduction from an arbitrary non-BIC policy to a BIC one. \citet{Bimpikis-exploration-ms17} consider time-discounted utilities. They achieve ``first-best" utility if the expected utility for one of the actions is known. For the general case, they design an optimal but computationally inefficient mechanism, and propose a computationally efficient heuristic. \citet{Bahar-ec16} enrich the model to allow agents to observe recommendations of their ``friends" in a known social network (but restrict to deterministic utilities and a small number of high-degree nodes). \citet{Frazier-ec14} and \citet{Che-13} study related, but technically different problems, where, resp., the principal can pay the agents and one has continuous information flow and a continuum of agents. While the core challenges in incentivizing exploration arise even with two actions, \citet{ICexploration-ec15-conf,ICexploration-ec15} and \citet{Frazier-ec14} extend their results to multiple actions.

\asmargincomment{rewrote next two paragraphs}
Absent strategic issues, Bayesian Exploration reduces to \emph{multi-armed bandits}, a fundamental and well-studied model for exploration-exploitation tradeoff. A huge literature on bandits has been summarized in several books
\citep{Bubeck-survey12,Gittins-book11,slivkins-MABbook,LS19bandit-book}. More specifically, if the algorithm is not constrained to be BIC and the agents always follow the recommendations, then Bayesian Exloration reduces to multi-armed bandits with stochastic rewards, where action set is $\cA$ and rewards are equal to the principal's utility. This is a well-understood problem, with optimal regret-minimizing algorithms and matching lower bounds
\citep{Lai-Robbins-85,bandits-ucb1,bandits-exp3}.

Bayesian Exploration is closely related to \emph{information design} and \emph{social learning},
two prominent subareas of theoretical economics.
Information design studies the design of information disclosure policies and incentives therein
\citep[see surveys][]{BergemannMorris-survey19,Kamenica-survey19,Dughmi-stoc16}.
In particular, a single round of Bayesian Exploration is a version of the Bayesian Persuasion game \citep{Kamenica-aer11} with multiple agents, where the signal observed by the principal is correlated with the unknown ``state".
Our notion of BIC is a version of Bayes Correlated Equilibrium \citep{BM-econometrica13}.
Social learning studies self-interested agents that interact and learn over time in a shared environment. A version of our setting with long-lived agents and no principal to coordinate them has been studied since \citet{Bolton-econometrica99}, see \citet{Horner-survey16} for a survey. A vast literature on \emph{learning in games} \citep{FLbook} posits reasonable learning dynamics for the agents, as a proxy for their strategic behavior, and studies convergence to a given solution concept. A long line of work on \emph{sequential social learning} posits that agents receive private signals and observe previous agents' actions before choosing their own, see \citet{Golub-survey16} for a survey.

Exploration-exploitation problems with self-interested agents arise in several other, technically incomparable scenarios:
dynamic pricing
    \citep[\eg][]{KleinbergL03,BZ09,DynPricing-ec12,Wang-OR14},
dynamic auctions
    \citep[\eg][]{AtheySegal-econometrica13,DynPivot-econometrica10,Kakade-pivot-or13},
pay-per-click ad auctions
    \citep[\eg][]{MechMAB-ec09,DevanurK09,Transform-ec10-jacm},
human computation
    \citep[\eg][]{RepeatedPA-ec14,Ghosh-itcs13,Krause-www13},
\asedit{and competition between firms
    \citep[\eg][]{bergemann2000experimentation,keller2003price,CompetingBandits-merged}.
A unified perspective on exploration with incentives can be found in Chapter 11.6 in \citet{slivkins-MABbook}.}

\xhdr{Subsequent work.}
\asedit{
The study of Bayesian Exploration has progressed subsequent to the initial version of this paper \citep{ICexplorationGames-ec16-conf}. All work has focused on the version of a single agent per round.
First, the economic model is generalized: \citet{Jieming-multitypes18} study an extension to heterogenous agents with public or private types, and \citet{Bahar-ec19} consider unavoidable information leakage (in a different model compared to \citet{Bahar-ec16}).
Second, results are refined in the basic model with stochastic rewards:
\citet{Selke-BE-2020} characterize the ``price of incentives": loss in performance compared to optimal multi-armed bandit algorithms,
and \cite{Jieming-unbiased18} mitigate economic assumptions and make a direct connection to social learning on networks.
Third, the exploration problem is generalized: \citet{IncentivizedRL} incentivize exploration in reinforcement learning.
}

\OMIT{A related task in reinforcement learning is that of learning the structure of a Markov Decision Process (MDP), which can be done efficiently
\cite{KearnsS02,BrafmanT02}. In our setting, the principle has to uncover the structure of the underlying game, which vaguely resembles the
learning process of uncovering the structure of an MDP. Beyond the fact that the model significantly differ, there is the additional strategic
issue that our principal has to consider. \ymcomment{not sure if we need this}
}

\section{Bayesian Exploration: our model and results}
\label{sec:model}


Our model, called \emph{Bayesian Exploration}, is a game between a principal and multiple agents. The game consists of $T$ rounds, where $T$ is the time horizon.

There is a global parameter $\trueState$, called the \emph{(realized) state}. It is drawn from a Bayesian prior distribution $\prior$ over a finite state space $\Theta$. It is chosen before the first round, and stays the same throughout the game. The state space $\Theta$ and the prior $\prior$ are common knowledge, but the state itself is not revealed neither to the principal, nor to the agents. (But the principal can learn it over time.) Elements of $\Theta$ are called \emph{(feasible) states}, to distinguish them from the realized state $\trueState$.

In each round, there is a fresh set of agents, denoted $[n]:=\{1 \LDOTS n\}$, playing a simultaneous game. Each agent $i$ chooses an action $a_i$ from some finite set $\cA_i$ of possible actions. A tuple $a = (a_i)_{i\in [n]}$ is called a \emph{joint action}.
The set of all possible joint actions, termed the \emph{action set}, is $\cA= \cA_1\times \ldots \times \cA_n$. The utility of each agent $i$ is determined by the joint action $a$ chosen by the agents and the realized state $\trueState$. More formally, it is given by a function $u_i\colon \cA\times\Theta\to [0,1]$, called the \emph{utility function} of agent $i$. The utility of the principal, a.k.a. \emph{reward}, is also determined by the pair $(a,\trueState)$, and given by the \emph{reward function} $f\colon\cA\times \Theta\to [0,1]$.  The action set $\cA$ and the functions $(f; u_1 \LDOTS u_n)$ are the same for all rounds, and are common knowledge.

In each round $t$, the principal recommends a joint action to the agents. Specifically, the round proceeds as follows: a fresh set of $n$ agents arrives; the principal recommends a joint action $a\in\cA$; each agent $i$ only observes his recommended action $a_i$; the agents choose their actions in the simultaneous game, and the utilities are realized. The principal observes the chosen actions and the realized utilities. Agents know $t$, but do not observe what happened in the previous rounds.

The principal commits to an algorithm $\pi$ that proceeds in rounds, so that in each round it outputs a joint action, and then inputs the chosen actions and the realized utilities. This algorithm, called the \emph{iterative recommendation policy}, is known to the agents.

We will now define the~\emph{incentive-compatibility} constraint. Let
$\pi^t$ be the joint action recommended by $\pi$ in round $t$; we interpret it as a
random variable taking values in $\cA$. Let $\cE_{t-1}$ be the
event that the agents have followed principal's recommendations up to
(but not including) round $t$. Here and henceforth, we will use a
standard game-theoretic notation: we will represent a joint action
$a\in\cA$ as a pair $(a_i,a_{-i})$, where $i$ is an agent, $a_i\in\cA$
is its action, and $a_{-i} = (a_1 , \ldots , a_{i-1}, a_{i+1} , \ldots
, a_n)$ is the joint action of all other agents; we will sometimes
write $(a_i,a_{-i};\theta)$ to denote the pair $(a,\theta)$, where $\theta\in\Theta$ is a state. We will
write $\cA_{-i}$ for the set of all possible $a_{-i}$'s. In
particular, $\pi^t_{-i}\in \cA_{-i}$ denotes the joint action of all
other agents chosen by policy $\pi$ in round $t$.


\begin{definition}\label{def:BIC}
An iterative recommendation policy $\pi$ is \emph{Bayesian Incentive
  Compatible (BIC)} if for all rounds $t$ and agents $i\in [n]$ we have
\begin{align}\label{eq:def:BIC}
\E\sbr{ u_i(a_i,\,\pi^t_{-i};\, \trueState) - u_i(a_i',\,\pi^t_{-i};\,\trueState)
    \mid \pi^t_i=a_i,\,\cE_{t-1} }
    \geq 0,
\end{align}
where $a_i,a_i'\in \cA_i$ are any two distinct actions such that
    $\Pr[\pi^t_i=a_i \mid \cE_{t-1}]>0$.
(The probabilities are over the realized state $\trueState\sim \prior$ and the internal randomization in $\pi$.)
\end{definition}

\noindent In words, suppose agent $i$ is recommended to play action $a_i$ in a given round $t$. Assume that all agents followed the recommendations in the previous rounds, and that all agents but $i$ follow the recommendations in round $t$. Then agent $i$ cannot improve his conditional expected utility (given the information available to this agent) by choosing a different action $a_i'$.

Throughout, we assume that the agents follow recommendations of a BIC iterative recommendation policy, so that the agents' behavior is completely specified.

The principal's goal is to optimize its expected total reward, as given by the reward function $f$, subject to the BIC constraint. While we allow $f$ to be arbitrary, one natural version is social welfare, \ie the sum of utilities of all agents in a given round.

\OMIT{ 
\xhdr{Finite support assumption.}
For simplicity of exposition, we assume that the state space $\Theta$ is finite, and the utility distributions $\cD_{(a,\theta)}$, $(a,\theta)\in \cA\times \Theta$  have finite support.
} 

The \emph{utility structure} $\cU$ is a tuple that consists of the action set $\cA$, the state space $\Theta$, the prior $\psi$, the utility functions $(u_1 \LDOTS u_n)$, and the reward function $f$. Note that $\cU$ completely determines the simultaneous game. A problem instance consists of $\cU$ and time time horizon $T$.


\xhdr{Stochastic utilities.} We allow a more general version, termed Bayesian Exploration with \emph{stochastic utilities}. Given the joint action $a\in\cA$ and state $\theta = \trueState$, the vector of all realized utilities is drawn independently from some distribution $\cD_{(a,\theta)}$ over such vectors, so that the expected utility of each agent $i$ and the principal are, respectively, $u_i(a,\theta)$ and $f(a,\theta)$.   All realized utilities lie in $[0,1]$.%
\footnote{Our results extend to a more general case of sub-Gaussian noise; we omit the easy details.} The special case when the realized utilities are always equal to their expectations is termed Bayesian Exploration with \emph{deterministic utilities}.

Each distribution
    $\cD_{(a,\theta)}$
is determined by the utility tuple
    \[ \cU_{(a,\theta)} = \rbr{ u_1(a,\theta) \LDOTS u_n(a,\theta);\;f(a,\theta) }. \]
(We use this assumption to prove Lemma~\ref{lm:model-benchmarks}.)
We assume that the mapping from utility tuples $\cU_{(a,\theta)}$ to distributions $\cD_{(a,\theta)}$ is known to the agents, so that they can form Bayesian posteriors. Whether this mapping is known to the mechanism is irrelevant to our results.

A problem instance of Bayesian Exploration with stochastic utilities consists of the utility structure $\cU$, time horizon $T$, and the distributions
    $\cD_{(a,\theta)}$, $(a,\theta)\in \cA\times \Theta$.

\xhdr{Computational model.} Our results include bounds on the per-round running time of our algorithms. These bounds assume \emph{infinite-precision arithmetic} (infinite-precision arithmetic operations can be done in unit time), and \emph{continuous random seed} (a number can be drawn independently and uniformly at random from the unit interval in unit time). Such assumptions are commonly used in theoretical computer science to simplify exposition.

From the computational point of view, an iterative recommendation policy inputs the utility structure. We assume the utilities are represented generically, as a $(n+1)\times |\cA|\times |\Theta|$ table. We obtain per-round running times that are \emph{polynomial in the input size}.

\xhdr{Discussion.}
One can consider a more general version of Bayesian Exploration in which the principal can send arbitrary messages to agents. However, the restriction of messages to recommended actions is without loss of generality (see Remark~\ref{rem:BIC}).

\asedit{We fix the time horizon $T$ only for ease of presentation. Knowing $T$ is irrelevant to the definition of BIC and hence to the agents, and not needed in our solution for deterministic utilities. For stochastic utilities, knowing $T$ can be mitigated via standard tricks from multi-armed bandits}.%
\footnote{{First, one could use an upper bound $T'\geq T$ instead of $T$. It is immediate from our analysis that such policy is BIC and suffers only $\log(T')$ in the regret bound in Theorem~\ref{thm:model-noisy}. So, even a crude upper bound $T'$ would suffice; this would probably be a preferred solution in practice. Second, one could starts with a BIC iterative recommendation policy parameterized by a known time horizon, and transform it into a policy that runs indefinitely and attains similar performance. Specifically, we partition time into consecutive phases $i=1,2,\,\ldots$, each of predetermined duration $T_i$ rounds, and restart the policy in the beginning of each phase with time horizon $T_i$. It is easy to see that such policy is BIC.  Durations $T_i = 2^{2^i}$ suffice for our purposes, to guarantee $O(\log T)$ dependence on regret in each round $T$.}}

Formally, we will assume that every BIC iterative recommendation policy $\pi$ is well-defined and BIC for all rounds $t\in \NN$, even if it is tailored to a particular time horizon $T$. This is without loss of generality: \eg one can extend $\pi$ by setting $\pi^t=\pi^T$ for all rounds $t>T$.

\subsection{Statement of the main results}
\label{sec:model-results}

Given an iterative recommendation policy $\pi$, the total expected reward is denoted
    \[ \REW(\pi) = \textstyle \sum_{t=1}^T \REW_t(\pi^t),\quad \REW_t(\pi) = \E\sbr{ f(\pi^t,\trueState)}.\]
Here $\REW_t(\pi)$ is the expected reward of $\pi$ in a given round $t$. The expectation is taken over the realized state $\trueState$, the internal randomness in $\pi$, and possibly the noise in the rewards.

\xhdr{Deterministic utilities.}
We compete with all BIC iterative recommendation policies with infinite time horizon; the class of all such policies is denoted $\PiBIC$. Our benchmark  is the best expected reward achieved in any one round by any policy in a given policy class $\Pi$:
\begin{align}\label{eq:model-benchmark}
\opt(\Pi) := \sup_{t\in \NN,\, \pi\in\Pi} \REW_t(\pi).
\end{align}
Our main result is a BIC policy whose time-averaged expected reward is close to $\opt(\PiBIC)$.

\begin{theorem}\label{thm:model-det}
Consider Bayesian Exploration with deterministic utilities. There exists a BIC iterative recommendation policy $\pi$ whose Bayesian regret with respect to $\PiBIC$ satisfies
\begin{align}\label{eq:thm:model-det}
T\cdot\opt(\PiBIC) - \REW(\pi) \leq C,
\end{align}
where $C$ depends only on the utility structure, but not on the time horizon.%
\footnote{\asedit{Specifically, $C = |\cA|^2\cdot |\Theta|/\psig^*$, where $\psig^*$ is spelled out in Remark~\ref{rem:prior-dep}.}}
The per-round running time of $\pi$ is polynomial in the generic input size.
\end{theorem}


\xhdr{Stochastic utilities.} Our results for stochastic utilities compete against a slightly restricted class of BIC policies. For a given parameter $\delta> 0$, a policy is called \emph{$\delta$-BIC} if it satisfies a stronger version of Definition~\ref{def:BIC} in which right-hand side of \eqref{eq:def:BIC} is $\delta$ rather than $0$.%
\footnote{For ease of notation, we extend this definition to $\delta=0$, in which case ``$\delta$-BIC" is the same as ``BIC".}
 The class of all such policies is denoted $\PiBIC[\delta]$. We construct a BIC policy whose time-averaged expected reward is close to $\opt(\PiBIC[\delta])$.

\begin{theorem}\label{thm:model-noisy}
Consider Bayesian Exploration with stochastic utilities. For any given $\delta>0$, there exists a BIC iterative recommendation policy $\pi$ whose Bayesian regret with respect to $\PiBIC[\delta]$ satisfies
\begin{align}\label{eq:thm:model-stoch}
T\cdot \opt(\PiBIC[\delta]) - \REW(\pi) \leq C_\delta\cdot \log T,
\end{align}
where $C_\delta$ depends only on the utility structure and the parameter $\delta$, but not on the time horizon.%
\footnote{\asedit{The exact guarantee is spelled out in \refeq{eq:noisy-final}}.}
The per-round running time of $\pi$ is polynomial in
the generic input size.
Policy $\pi$ does not input the parameterized utility distributions $\cD_{(a,\theta)}$, $(a,\theta)\in\cA\times\Theta$.
\end{theorem}


The ``structure-dependent" constants in Theorem~\ref{thm:model-det} and  Theorem~\ref{thm:model-noisy} are in line with prior work on Bayesian Exploration: \eg all results in \citet{ICexploration-ec15-conf,ICexploration-ec15} (which are for the special case of single agent per round) have similar `structure-dependent" constants. As in prior work, it is unclear whether the dependence on particular parameters is necessary or optimal.%
\footnote{\asedit{A recent follow-up paper \citep{Selke-BE-2020} partially characterizes optimal dependence on parameters in the special case of one agent per round, assuming that the prior is independent across actions. They focus on polynomial vs. exponential dependence on the number of actions and the strength of beliefs.}}

In Appendix~\ref{sec:LB}, we prove that the $O(\log T)$ regret rate in Theorem~\ref{thm:model-noisy} is essentially optimal for Bayesian Exploration with stochastic utilities, even with a single agent per round and only two actions. This is in line with the fact that $O(\log T)$ regret rate is optimal in multi-armed bandits \citep{Lai-Robbins-85,bandits-ucb1}.  Our proof builds on the negative result for bandits, and also invokes a positive result for Bayesian Exploration from \citet{ICexploration-ec15-conf,ICexploration-ec15}.



\subsection{Discussion of the benchmarks}
\label{sec:model-benchmarks}

\OMIT{ 
We prove a stronger version of Theorem~\ref{thm:model-noisy}, in which $\opt(\PiBIC[\delta])$ is replaced with a similar benchmark for \emph{deterministic} utilities. Given a problem instance with stochastic utilities, we consider the \emph{deterministic instance}: a version of the original problem instance where all utilities are deterministically equal to the corresponding expected utilities in the original instance. For a policy class $\Pi$, we define $\optdet(\Pi)$ as the value of $\opt(\Pi)$ for the deterministic instance. We focus on the class of all iterative recommendation policies that are $\delta$-BIC for the deterministic instance, denoted $\PiBICdet[\delta]$. The new benchmark is defined as
    $\optdet(\PiBICdet[\delta])$.
It is at least as strong as $\opt(\PiBIC[\delta])$, in the sense that
$\optdet(\PiBICdet[\delta])\geq \opt(\PiBIC[\delta])$ for all $\delta>0$.
} 

Recall that our generic benchmark $\opt(\Pi)$ considers the optimal reward in any one round, and does not have a built-in time horizon. Therefore, it is at least as strong as the optimal \emph{time-averaged} reward for the same policy class,
    $\optave(\Pi) := \sup_{\pi\in\Pi,\; T\in\NN} \REW(\pi)/T$.
While the latter benchmark can be achieved by some policy in $\Pi$ (up to an arbitrarily small precision), this is not trivially the case for $\opt(\Pi)$. However, our theorems imply that these two benchmarks are equal: for deterministic rewards with $\Pi = \PiBIC$, and for stochastic rewards with $\Pi=\PiBIC[\delta]$, $\delta>0$.

Our benchmarks $\opt(\PiBIC)$ and $\opt(\PiBIC[\delta])$ are, in general, weaker than the \emph{best-action benchmark} (the expected reward of the best action), a standard benchmark in multi-armed bandits. In our notation, this benchmark equals
    $\E_{\trueState\sim\prior}\sbr{ \max_{a\in\cA} f(a,\trueState) }$.
BIC policies cannot always match the best action because not all actions are explorable, and even explorable actions are not necessarily implementable with high probability by a BIC policy. We know, however, that the benchmark $\opt(\PiBIC)$ for deterministic utilities does reduce to the best action in the special case when there is a single agent per round and the principal's reward coincides with the agent's utility, as long as all actions are explorable (see Corollary~\ref{cor:deterministic-benchmark-characterization}).%
\footnote{For this special case, \citet{ICexploration-ec15-conf,ICexploration-ec15} provide sufficient conditions when all arms are explorable. In particular, this happens when the prior is independent across actions, and for each action the prior on the mean reward has full support on $[0,1]$. \citet{ICexploration-ec15-conf,ICexploration-ec15} also provide another sufficient condition which allows for correlated priors. However, we cannot immediately apply these sufficient conditions to help characterize our benchmark $\PiBICdet[\delta]$ for stochastic rewards. This is because the results in \citet{ICexploration-ec15-conf,ICexploration-ec15} only guarantee that all actions are explorable by some BIC policy, but not necessarily by a $\delta$-BIC policy for a given $\delta>0$.}

Our benchmark for deterministic utilities can be characterized as follows: $\opt(\PiBIC)$ is the maximal per-round expected reward of a BIC policy that has access to the utilities of all explorable joint actions, see Theorem~\ref{thm:deterministic-benchmark-characterization} for a precise statement.

\OMIT{Similarly, $\optdet(\PiBICdet[\delta])$ is the maximal per-round expected reward of a $\delta$-BIC policy that has access to the utilities of all joint actions that are explorable by a $\delta$-BIC policy.}

\OMIT{
We also assume that the utility structure is \emph{well-separated}, in the following sense: there is a parameter $\zeta>0$ such that for each agent $i$, each joint action $a\in\cA$, and any two states $\theta,\theta'\in\Theta$ with
    $u_i(a,\theta) \neq u_i(a,\theta')$
we have that
    $|u_i(a,\theta) - u_i(a,\theta')|\geq\zeta$.
} 


%

\section{Warm-up and tools}
\label{sec:warmup}

As a warm-up, let us consider the version with deterministic utilities, and focus on a relatively simple scenario when a BIC iterative recommendation policy explores all joint actions, and then \emph{exploits} (in the sense that we make precise later). We show that this policy can achieve optimal per-round performance once all joint actions are explored. Recall that our benchmark is $\opt(\PiBIC)$, where $\PiBIC$ is the class of all BIC iterative recommendation policies, and $\opt(\cdot)$ is defined in \eqref{eq:model-benchmark}.

\begin{lemma}\label{lm:warmup}
Consider Bayesian Exploration with deterministic utilities. Let $\pi$ be a BIC  iterative recommendation policy that explores all joint actions by a fixed time $T_\pi\leq T$. Then there exists a BIC policy $\pi'$ which coincides with $\pi$ before round $T_\pi$, and achieves expected reward at least
    $\opt(\PiBIC)$
in all subsequent rounds. Therefore,
$f(\pi') \geq (T-T_\pi)\;\opt(\PiBIC)$.
\end{lemma}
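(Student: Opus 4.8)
The plan is to construct $\pi'$ as follows: it mimics $\pi$ for the first $T_\pi$ rounds (so that it remains BIC on that prefix, and so that by time $T_\pi$ all joint actions have been explored, i.e., the principal has observed the utilities $u_i(a,\trueState)$ and $f(a,\trueState)$ for every $a\in\cA$ — since utilities are deterministic, one observation per joint action suffices). After round $T_\pi$, the principal knows the realized state $\trueState$ exactly on the event that exploration succeeded... actually, not necessarily the state, but it knows the full utility vector $(f(a,\trueState), u_1(a,\trueState),\dots,u_n(a,\trueState))_{a\in\cA}$, which is what matters. The key point is that the posterior on $\Theta$ after seeing all of this has collapsed onto an equivalence class of states that are indistinguishable via utilities, and on that equivalence class every relevant function is constant. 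So from round $T_\pi+1$ onward, conditioned on the observed history $h$, the principal faces a Bayesian game where it effectively knows all the payoff-relevant data.

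**Next I would** define the round-$(T_\pi+1)$ recommendation of $\pi'$: for each possible history $h$ of the exploration phase, the principal computes the Bayes-optimal recommendation distribution — a distribution $\sigma_h$ over joint actions $\cA$ — subject to the BIC constraint, using the posterior $\psi(\cdot\mid h)$. This is exactly the single-round "recommendation game" optimization referred to in the introduction (Section~\ref{sec:rec-game}): the principal's signal is $h$, correlated with $\trueState$, and it picks a Bayes correlated equilibrium maximizing $\E[f]$. Then $\pi'$ plays $\sigma_h$ in every round $t > T_\pi$. Because the agents arriving in round $t>T_\pi$ do not observe past rounds, and because $\pi'$ plays the same (history-dependent but round-independent) rule, each such round is "the same game"; the agents' posterior given their recommended action $a_i$ and the event $\cE_{t-1}$ equals the posterior they would form in the single recommendation game, so BIC in that static game implies BIC for $\pi'$ in every round $t>T_\pi$. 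Here I need to check that conditioning on $\cE_{t-1}$ (all agents followed so far) does not distort the computation — but since $\pi'$ is BIC on the prefix by construction and agents are assumed to follow BIC policies, $\cE_{t-1}$ holds with probability one in the relevant expectations, so this is a non-issue; the relevant randomness is just $\trueState\sim\psi$ and the internal coins of $\pi'$ through round $t$.

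**The main obstacle** — and the crux of the lemma — is showing that the per-round reward of $\pi'$ after exploration is at least $\opt(\PiBIC)$, i.e., that having explored everything the principal does at least as well \emph{in one round} as the supremum over \emph{all} BIC policies and \emph{all} rounds. Two things must be argued. First, a "monotonicity in information" statement: any round-$t$ behavior of any BIC policy $\tilde\pi$ can be simulated by a BIC policy that has access to the full utility vector (the information obtained from exploring all joint actions), without decreasing expected reward and without violating BIC — this is the comparative-statics result about more informative signals in the recommendation game promised in the introduction, applied with the "fully-explored" signal being a refinement of whatever signal $\tilde\pi$'s history constitutes. Intuitively, extra information can only enlarge the feasible set of BIC recommendation distributions and can only help the objective, but as the authors note this is genuinely delicate. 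Second, given that reduction, $\opt(\PiBIC) = \sup_{t,\tilde\pi}\REW_t(\tilde\pi)$ is bounded by the optimum of the single-round recommendation game with the fully-explored signal, which is precisely $\E_h[\,\E[f(\sigma_h,\trueState)\mid h]\,] = \REW_{T_\pi+1}(\pi')$. Combining, $\REW_t(\pi')\ge \opt(\PiBIC)$ for all $t>T_\pi$, and summing over the $T-T_\pi$ rounds after exploration (discarding the nonnegative reward contributed by rounds $\le T_\pi$, since $f\ge 0$) gives $f(\pi')\ge (T-T_\pi)\,\opt(\PiBIC)$, as claimed.
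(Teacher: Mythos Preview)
Your proposal is correct and follows essentially the same route as the paper. The paper packages the argument into reusable tools---the signal $\signal(\cA)$ encoding the full utility vector, the single-round recommendation game, the monotonicity-in-information lemma (your ``main obstacle'' is their Claim~\ref{cl:restricted-subset} via Lemma~\ref{lm:coupled-exploit}), and a composition lemma for BIC subroutines (Claim~\ref{cl:warmup-composition}) that formalizes your observation that BIC on the prefix plus BIC in each static exploitation round yields BIC overall---but the underlying construction and logic are exactly what you outline.
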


\noindent While very intuitive, this result is surprisingly technical to prove from scratch. Essentially, one needs to specify what ``exploitation" means in this context, and argue that this notion of exploitation is BIC and can only benefit from having full information about the utility structure. Thus, we develop a framework to reason about this, which will be an essential toolbox throughout the paper. More specifically, we define and analyze a game which captures a single round of Bayesian Exploration, and formulate a framework to combine BIC ``subroutines" into a BIC iterative recommendation policy. A proof of Lemma~\ref{lm:warmup} using these tools is in the very end of this section.

While Lemma~\ref{lm:warmup} relies on the ability to explore all joint actions, this ability is not guaranteed. This can be seen even in the special case of a single agent per round and only two actions. For this special case, \citet{Kremer-JPE14} and \citet{ICexploration-ec15} present necessary and sufficient conditions under which all actions are explorable, as well as simple examples when these conditions fail. \citet{ICexploration-ec15} also provides sufficient conditions for the version with a single agent per round and an arbitrary number of actions.

\subsection{The recommendation game: a single round of Bayesian Exploration}
\label{sec:rec-game}

We view a single round of Bayesian Exploration as a stand-alone game between the principal and the agents, termed the \emph{recommendation game}. Here the principal observes an auxiliary ``signal", which represents the information received in the previous rounds (and possibly also the internal random seed). Then the principal recommends a joint action, and the agents choose their actions.

Formally, the recommendation game is a version of the Bayesian Persuasion game \citep{Kamenica-aer11} with multiple agents. Unlike the original Bayesian Persuasion game, in our version the signal observed by the principal is distinct from (but correlated with) the state.

For a problem instance of Bayesian Exploration with a given utility structure, the corresponding \emph{recommendation game} proceeds as follows:

\begin{OneLiners}
\item the state $\trueState$ is drawn from a Bayesian prior distribution $\psi$ over $\Theta$;
\item the principal observes a \emph{signal} $S$, then recommends action $a_i\in\cA_i$ for each agent $i$;
\item the agents choose their actions in the simultaneous game;
\item the principal and the agents receive utilities according to the utility structure.
\end{OneLiners}
\noindent Signal $S$ is an arbitrary random variable with finite support $\cX$ (the elements of $\cX$ are called \emph{feasible signals}). The signal can be correlated with the state: formally, $S$ and $\trueState$ are random variables on the same probability space. The \emph{signal structure} associated with $S$ is the tuple $(\Theta,\cX,\joint)$, where $\joint$ is the joint distribution of $(S,\trueState)$. An important special case is the \emph{empty signal}: one which always takes the same value. Such signal will be denoted as $S=\bot$.


The utility structure and the signal structure are common knowledge. The realized state $\trueState$ is not revealed to the principal (other than through the signal $S$). Each agent $i$ only observes his own recommendation $a_i$; it does not observe the state $\trueState$, the signal $S$, or the other recommendations

The principal commits to a \emph{recommendation policy}: a randomized
mapping $\pi \colon \cX \to \cA$, which takes as input a feasible signal and outputs an action for each agent. The corresponding incentive-compatibility constraint is defined as follows:

\begin{definition}\label{def:rec-BIC}
Given signal $S$, a recommendation policy $\pi$ is~\emph{Bayesian incentive compatible (BIC)} if for each agent $i\in [n]$ and any two distinct actions $a_i,a_i'\in \cA_i$ such that
    $\Pr[\pi_i(S)=a_i]>0$
we have
\begin{align}\label{eq:def:rec-BIC}
\E \sbr{ u_i(a_i,\,\pi_{-i}(S);\, \trueState) - u_i(a_i',\,\pi_{-i}(S)\,;\trueState) \mid \pi_i(S)=a_i }
    \geq 0.
\end{align}
\end{definition}

In words, whenever agent $i$ is recommended to play some action $a_i$, he could not improve his expected utility (given the information available to this agent, and assuming that all other agents follow the recommendations) by choosing a different action $a_i'$.
We assume that the agents follow the recommendations of a BIC policy, so that the expected reward is well-defined.

For the recommendation game, the distinction between stochastic and deterministic utilities is unimportant (for statements that only involve expected utilities and rewards). In particular, a given recommendation policy is BIC for stochastic utilities if and only if it is BIC for the corresponding problem instance with deterministic utilities.

For the computational results, we assume that the joint distribution of $(S,\trueState)$ is given explicitly, as a $|\cX|\times|\Theta|$ table of probabilities.

\begin{remark}
Consider an iterative recommendation policy $\pi$.  W.l.o.g., the internal random seed $\omega$ of policy $\pi$ is chosen once, before the first round, and persists throughout. Let $H_t$ be the \emph{history} up to round $t$: the chosen joint actions and the realized utilities over all past rounds. Then policy $\pi$ can be represented as a sequence
    $(\pi^{(t)}:\, t\in\NN)$,
where for each round $t$, $\pi^{(t)}$ is a randomized mapping from $S_t=(\omega,H_t)$ to joint actions (called the \emph{restriction} of $\pi$ to round $t$). Each round $t$ can be seen as a recommendation game with signal $S_t$, and $\pi^{(t)}$ is a recommendation policy in this game. It is easy to see that $\pi$ is BIC if and only if $\pi^{(t)}$ is BIC for all $t$.
\end{remark}

\begin{remark}
\label{rem:BIC}
The notion of Bayesian Incentive Compatibility in the recommendation game is closely connected to the notion of Bayes Correlated Equilibrium \citep{BM-econometrica13}: modulo the differences in terminology, the former is a special case in which the agents do not receive private signals. In particular, it follows that a BIC recommendation policy always exists.

One can consider a more general version of the recommendation game in which the principal can send arbitrary messages to agents. Then the principal commits to a \emph{messaging policy}: a randomized mapping which inputs the signal $S$ and outputs a message $m_i$ for  each agent $i$. However, such messaging policies can without loss of generality be restricted to recommendation policies. More precisely, suppose a messaging policy $\tau$ induces a Bayes Nash Equilibrium $\rho$ (which, in our notation, is a randomized mapping that inputs the joint message $(m_1 \LDOTS m_n)$ and outputs a joint action $a\in\cA$). Then the composition $\rho\oplus \tau$ is a randomized mapping that inputs signal $S$ and outputs a joint action, \ie a recommendation policy. According to \cite{BM-econometrica13}, this composition is BIC.%
\footnote{This follows from the "only if" direction of Theorem 1 in \cite{BM-econometrica13}. In their notation, we use the special case where the information structure $S$ is empty, and $S^*$ is induced by the messaging policy $\tau$. Then the ``decision rule" $\pi$ in the theorem corresponds to the recommendation policy $\rho\oplus \tau$.}
Thus, the principal can use $\rho\oplus \tau$ instead of $\tau$.
\end{remark}

\subsection{Optimality in the recommendation game}
\label{sec:rec-game-props}

Given a recommendation policy $\pi$ for signal $S$, its expected reward is
\begin{align*}
    \REW(\pi)
        = \E\sbr{ f(\pi(S),\trueState) }
        = \E_{(s,\theta)\sim \joint} \sbr{ f(\pi(s),\theta) }.
\end{align*}
The expectation is over the joint distribution of signal $S$ and realized state $\trueState$, and the internal randomness in policy $\pi$. The \emph{optimal reward} given signal $S$ is defined as
\begin{align*}
    \REW^*[S] = \sup_{\text{BIC recommendation policies $\pi$ for $S$}} \REW(\pi).
\end{align*}

\noindent Thus, it is the largest expected reward achievable in a recommendation game with signal $S$.%
\footnote{We write $\REW^*[S]$ rather than $\REW^*(S)$ to emphasize that the optimal reward  is a function of the random variable $S$, rather than a function of a particular realization of this random variable. We will use a similar notation $\mathtt{function}[S]$ for some other functions of signal $S$ as well, \eg for the polytope $\bic[S]$.}
A BIC policy $\pi$ in this game will be called \emph{optimal} for $S$ if
    $\REW(\pi) = \REW^*[S]$.

We will represent the problem of finding an optimal recommendation policy $\pi$ as a linear program (henceforth, \emph{LP}). We represent $\pi$ as a set of numbers
    $x_{a,s} = \Pr[\pi(s)=a]$,
for each joint action $a\in \cA$ and each feasible signal $s\in \cX$. These numbers, termed the \emph{LP-representation} of $\pi$, will be the decision variables in the LP. The linear program is as follows:

\begin{framed}
\vspace{-5mm}
\begin{align*}
& \mbox{maximize }
&\sum_{a\in \cA,\; s\in\cI,\; \theta\in \Theta}
    \prior(\theta)
    \cdot  \joint(s\mid\theta)
    \cdot x_{a,s}\cdot f(a,\theta) \\
& \mbox{for all } i\in [n], a_i, a_i'\in \cA_i,
&\sum_{a_{-i}\in \cA_{-i},\; s\in\cI,\; \theta\in \Theta}
    \prior(\theta)
    \cdot\joint(s\mid\theta)
    \cdot \left( u_i(a_i, a_{-i}; \theta) - u_i(a_i', a_{-i}; \theta) \right) \cdot x_{a, s} \geq 0\\
&\mbox{for all } s\in \cI, a\in\cA
    & x_{a,s} \geq 0
        \quad\text{and}\quad
        \textstyle \sum_{a\in\cA} x_{a, s} = 1
\end{align*}
\vspace{-7mm}
\end{framed}

\noindent In the above LP, $\prior(\theta)$ stands for $\Pr[\trueState=\theta]$, and $\joint(s\mid\theta)$ stands for
    $\Pr[S=s \mid \trueState = \theta] $.
The objective is $\REW(\pi)$, written in terms of the LP-representation, and the first constraint states that $\pi$ is BIC.%
\footnote{For a particular agent $i$ and actions $a_i,a'_i\in \cA_i$, the BIC constraint in Definition~\ref{def:rec-BIC} states that $\Pr[\cE]>0$ implies $\E[W\mid \cE]\geq 0$, where
    $W = u_i(a_i,\pi_{-i}(S),\trueState) - u_i(a'_i,\pi_{-i}(S),\trueState)$
and
    $\cE = \{\pi_i(S)=a_i\}$.
This is equivalent to $\E[{\bf 1}_\cE\cdot W]\geq 0$. And $\E[{\bf 1}_\cE\cdot W]$ is precisely the left-hand side of the first constraint in the LP.}
The feasible region of this LP is polytope in $\RR^{|\cI| \times |\cA|}$, which we denote by $\bic[S]$.


\begin{claim}\label{cl:LP-rep}
A policy $\pi$ is BIC if and only if its LP-representation lies in $\bic[S]$.
\end{claim}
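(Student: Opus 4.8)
The plan is to unwind both definitions and observe that the BIC constraint of Definition~\ref{def:rec-BIC} and the linear constraint in the LP are the same inequality, once the expectation is rewritten as a sum over $(s,\theta)$. Concretely, fix an iterative mapping $\pi$ with LP-representation $x_{a,s}=\Pr[\pi(s)=a]$; note that since the internal randomization of $\pi$ is independent of $(S,\trueState)$, we have $\Pr[\pi(s)=a \mid S=s, \trueState=\theta] = x_{a,s}$ for every feasible pair $(s,\theta)$. The first thing I would do is establish that the numbers $(x_{a,s})$ always satisfy the normalization constraints $x_{a,s}\ge 0$ and $\sum_{a\in\cA} x_{a,s}=1$, which is immediate since for each fixed $s$ the values $\{x_{a,s}\}_a$ form a probability distribution over $\cA$; these hold regardless of whether $\pi$ is BIC, so they are never the obstruction.

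Next I would handle the substantive equivalence. Fix an agent $i$ and two distinct actions $a_i,a_i'\in\cA_i$. Write $\cE=\{\pi_i(S)=a_i\}$ and
\[
W = u_i(a_i,\pi_{-i}(S);\trueState) - u_i(a_i',\pi_{-i}(S);\trueState).
\]
By Definition~\ref{def:rec-BIC}, $\pi$ is BIC for this pair exactly when: $\Pr[\cE]>0$ implies $\E[W\mid\cE]\ge 0$. As noted in the paper's own footnote, this conditional statement is equivalent to the unconditional inequality $\E[\mathbf{1}_\cE\cdot W]\ge 0$ — indeed, if $\Pr[\cE]=0$ then $\E[\mathbf{1}_\cE W]=0$ and the LP inequality is trivially satisfied with equality, while the BIC condition imposes nothing; and if $\Pr[\cE]>0$ then $\E[\mathbf{1}_\cE W] = \Pr[\cE]\,\E[W\mid\cE]$ has the same sign as $\E[W\mid\cE]$. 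So it remains only to check that $\E[\mathbf{1}_\cE\cdot W]$ equals the left-hand side of the BIC constraint in the LP. I would do this by conditioning on $(S,\trueState)=(s,\theta)$ and then on the recommendation: using independence of the random seed from $(S,\trueState)$,
\[
\E[\mathbf{1}_\cE\cdot W]
= \sum_{s\in\cX,\,\theta\in\Theta} \prior(\theta)\,\joint(s\mid\theta)\;
\E\!\left[\mathbf{1}_{\{\pi_i(s)=a_i\}}\bigl(u_i(a_i,\pi_{-i}(s);\theta)-u_i(a_i',\pi_{-i}(s);\theta)\bigr)\right],
\]
and since $\pi(s)$ has distribution $(x_{a,s})_a$, the inner expectation expands to $\sum_{a_{-i}\in\cA_{-i}} x_{(a_i,a_{-i}),s}\bigl(u_i(a_i,a_{-i};\theta)-u_i(a_i',a_{-i};\theta)\bigr)$. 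Substituting gives precisely the $(i,a_i,a_i')$ row of the LP's first constraint.

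Putting the pieces together: $\pi$ is BIC (i.e., satisfies Definition~\ref{def:rec-BIC} for every $i$ and every admissible pair $a_i,a_i'$) iff for every such triple the quantity $\E[\mathbf{1}_\cE W]\ge 0$, iff $x=(x_{a,s})$ satisfies all the BIC rows of the LP; and $x$ always satisfies the normalization rows. Hence $\pi$ is BIC iff $x\in\bic[S]$. Conversely, any point $x\in\bic[S]$ is realized as the LP-representation of a recommendation policy (sample $\pi(s)$ from the distribution $(x_{a,s})_a$ independently of everything else), so the correspondence is exactly as claimed. I do not expect any real obstacle here — the only thing to be careful about is the $\Pr[\cE]=0$ edge case (where Definition~\ref{def:rec-BIC} vacuously imposes nothing but the LP still has a — trivially satisfied — inequality), and the clean bookkeeping of turning a conditional expectation over the random seed and $(S,\trueState)$ into the double sum with weights $\prior(\theta)\joint(s\mid\theta)$. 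This is essentially the content already sketched in the paper's footnote to the LP, so the proof is a short formalization of that remark.
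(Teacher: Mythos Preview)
Your proposal is correct and is precisely the formalization of the argument the paper sketches in the footnote attached to the LP; the paper itself does not give a separate proof of this claim beyond that footnote. Your handling of the $\Pr[\cE]=0$ edge case and the expansion of $\E[\mathbf{1}_\cE W]$ into the double sum is exactly what is needed, and nothing more.
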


\begin{corollary}\label{cor:LP-rep}
An optimal recommendation policy exists. Given the utility structure and the signal structure, an optimal recommendation policy and the optimal reward $\REW^*[S]$ can be computed in time polynomial in $|\cA|\cdot |\cX|\cdot |\Theta|$. Further, a convex combination of BIC policies is also BIC.
\end{corollary}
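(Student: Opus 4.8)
The plan is to obtain all three parts of the corollary from Claim~\ref{cl:LP-rep} --- which identifies BIC recommendation policies with the points of the polytope $\bic[S]$ --- together with two elementary facts about $\bic[S]$. First, $\bic[S]$ is \emph{convex}, being the feasible region of a linear program. Second, it is \emph{compact}: the constraints $x_{a,s}\ge 0$ and $\sum_{a}x_{a,s}=1$ confine it to the cube $[0,1]^{|\cX|\times|\cA|}$, so it is bounded, and it is closed as an intersection of halfspaces and hyperplanes. Finally, $\bic[S]$ is \emph{nonempty}: a BIC recommendation policy always exists (Section~\ref{sec:rec-game}, via the reduction to Bayes Correlated Equilibria), so by Claim~\ref{cl:LP-rep} its LP-representation is a point of $\bic[S]$.

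For the existence of an optimal policy, I would note that the objective $\REW(\pi)$, expressed in the LP-representation, is a linear functional, hence attains a maximum over the nonempty compact set $\bic[S]$ at some point $x^*$; by Claim~\ref{cl:LP-rep}, $x^*$ is the LP-representation of a BIC policy $\pi^*$, and $\REW(\pi^*)=\REW^*[S]$ by definition of the latter, so $\pi^*$ is optimal. For closure under convex combinations: given BIC policies $\pi_1,\dots,\pi_k$ with LP-representations $x^{(1)},\dots,x^{(k)}\in\bic[S]$ and weights $\lambda_j\ge 0$ with $\sum_j\lambda_j=1$, let $\pi$ be the policy that on input signal $s$ first draws $j$ with probability $\lambda_j$ and then plays $\pi_j(s)$. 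Its LP-representation is $\sum_j\lambda_j x^{(j)}$, which lies in $\bic[S]$ by convexity, so $\pi$ is BIC by Claim~\ref{cl:LP-rep}.

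For the computational claim I would count the size of the LP: it has $|\cX|\cdot|\cA|$ variables, and its constraints are $|\cX|\cdot|\cA|$ nonnegativity constraints, $|\cX|$ normalization constraints, and $\sum_{i\in[n]}|\cA_i|(|\cA_i|-1)$ incentive constraints. After discarding agents with a single action (which change nothing), $\sum_i |\cA_i|\le \prod_i |\cA_i| = |\cA|$, so the number of incentive constraints is at most $|\cA|^2$; hence the LP has size polynomial in $|\cA|\cdot|\cX|\cdot|\Theta|$. Its coefficients are explicit products of entries of the given tables for $\prior$, $\joint$, $f$, and the $u_i$, so they are computable from the input. Running a standard polynomial-time LP solver then returns an optimal LP-representation $x^*$, from which one reads off an optimal policy $\pi^*$ and the value $\REW^*[S]=\REW(\pi^*)$, all in time polynomial in $|\cA|\cdot|\cX|\cdot|\Theta|$.

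I do not expect a genuine obstacle; the only points needing care are (i) invoking the a-priori existence of some BIC policy to guarantee $\bic[S]\ne\emptyset$ (without which "an optimal policy exists" would be vacuous), and (ii) the bookkeeping inequality $\sum_i|\cA_i|\le|\cA|$, which is what makes the number of incentive constraints --- and hence the running time --- polynomial in $|\cA|$ rather than merely in $\sum_i|\cA_i|$.
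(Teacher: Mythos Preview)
Your proposal is correct and follows exactly the approach the paper intends: the paper states the corollary without proof, treating it as an immediate consequence of Claim~\ref{cl:LP-rep} and the LP formulation, and your argument spells out precisely those consequences (nonemptiness via existence of BCE, compactness and linearity for existence of an optimum, convexity for closure under mixtures, and a size count for the polynomial-time claim). There is nothing to add.
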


Let us study a recommendation game with a signal that corresponds to exploring a given (possibly randomized) subset $B\subset \cA$ of joint actions. Formally, $B$ will be a \emph{$2^\cA$-valued signal}: a signal whose values are subsets of $\cA$; \eg its realization may depend on the realized state $\trueState$.

Let us consider the signal which corresponds to exploring all joint actions in $B$. This signal consists of all relevant utilities, and also includes $B$ itself:
\begin{align}\label{eq:signalB-def}
\signal(B) :=
\left(\; B;\quad
    \left(f(a,\trueState); u_1(a,\trueState) \LDOTS u_n(a,\trueState)\right):
    \; a\in B
\right).
\end{align}
Note that it is a random variable, because it depends on random variables $B$ and $\trueState$.

Despite a rather complicated definition, $S=\signal(B)$ is just a signal in a recommendation game. In particular, we can consider an optimal recommendation policy given this signal. By Corollary~\ref{cor:LP-rep}, such a policy exists, and its LP-representation (and its expected reward) can be computed in time polynomial in  $|\Theta|\cdot |\cA|\cdot |\support(S)|$.


We prove that if a BIC iterative recommendation policy is restricted to joint actions in $B$, then its expected per-round reward cannot exceed $\REW^*\sbr{\signal(B)}$ (the proof requires additional machinery, developed in the next subsection).

\begin{claim}\label{cl:restricted-subset}
Consider Bayesian Exploration with stochastic utilities. Let $B\subset\cA$ be a $2^\cA$-valued signal (\ie a randomized set of joint actions). Let $\pi$ be a BIC iterative recommendation policy such that before a given round $t$ it can only choose joint actions in $B$. Then
$\REW_t(\pi) \leq \REW^*\sbr{ \signal(B) }$.
\end{claim}

\subsection{Recommendation game with coupled signals}
\label{sec:coupled}

Throughout, we consider a fixed utility structure $\cU$, but allow the signal $S$ to vary from one game instance to another. We suppress $\cU$ from our notation, but make the dependence on $S$ explicit. To study how the properties of a game depend on a particular signal $S$, we consider multiple signals that are random variables in the same probability space as the true state $\trueState$; such signals will be called \emph{coupled}. While each of these signals corresponds to a separate game instance, we will refer to all these game instances jointly as \emph{recommendation game with coupled signals}.

We prove that $\REW^*[S]$ can only increase if signal $S$ becomes more informative. To make this statement formal, we consider a recommendation game with coupled signals $S,S'$. We give a definition that compares the two signals in terms of their ``state-relevant" information content, and state the ``monotonicity-in-information" lemma (proved in Appendix~\ref{app:coupled-signals}).

\OMIT{We say that signal $S$ is \emph{at least as informative} as $S'$ if the value
of $S$ determines the value of $S'$, \ie if $S' = g(S)$ for some function $g:\cX\to\cX'$.}

\begin{definition}\label{def:more-informative}
Consider a recommendation game with coupled signals $S,S'$, with resp. supports $\cX,\cX'$. We say that signal $S$ is \emph{at least as informative} as $S'$ if
\begin{align*}
     \Pr[\trueState=\theta \mid S=s,\, S'=s'] = \Pr[\trueState=\theta \mid S=s]
     \qquad \forall s\in \cX,\, s'\in\cX, \theta\in\Theta.
\end{align*}
An important special case is when $S$ \emph{determines} $S'$: $S'=g(S)$ for some $g:\cX\to\cX'$.
\end{definition}


\begin{lemma}[monotonicity-in-information]
\label{lm:coupled-exploit}
In a recommendation game with coupled signals $S,S'$, if signal $S$ is at least as  informative as $S'$ then $\REW^*[S]\geq \REW^*[S']$.
\end{lemma}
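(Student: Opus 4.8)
The plan is to show that any BIC recommendation policy $\pi'$ for the less informative signal $S'$ can be ``lifted'' to a BIC policy $\pi$ for $S$ with the same expected reward; taking the supremum over $\pi'$ then yields $\REW^*[S]\geq\REW^*[S']$. Since $S$ is at least as informative as $S'$ in the sense of Definition~\ref{def:more-informative}, the conditional law of $\trueState$ given $(S,S')$ depends only on $S$. I would first argue that this implies $\Pr[S'=s'\mid S=s,\trueState=\theta]$ does not depend on $\theta$ — i.e., $S'$ is conditionally independent of $\trueState$ given $S$ — by a short Bayes-rule computation. Intuitively, $S'$ is (a randomized function of) $S$ that carries no extra information about $\trueState$.

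Given this, define the lifted policy: on observing $s\in\cX$, first sample $s'\in\cX'$ from the conditional distribution $\Pr[S'=s'\mid S=s]$ (using the internal random seed), then play $\pi'(s')$. Equivalently, in LP-representation terms, set $x_{a,s} = \sum_{s'} \Pr[S'=s'\mid S=s]\cdot x'_{a,s'}$. The key point is that, because of the conditional independence, the joint distribution of $(\pi(S),\trueState)$ under this lifted policy equals the joint distribution of $(\pi'(S'),\trueState)$ under the original policy: one checks $\Pr[\pi(S)=a,\trueState=\theta] = \sum_{s,s'}\Pr[\trueState=\theta]\Pr[S=s\mid\theta]\Pr[S'=s'\mid S=s]\,x'_{a,s'}$, and the conditional independence lets us collapse $\sum_s \Pr[S=s\mid\theta]\Pr[S'=s'\mid S=s] = \Pr[S'=s'\mid\theta]$, recovering exactly $\Pr[\pi'(S')=a,\trueState=\theta]$.

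Once the joint law of $(\text{recommended joint action},\trueState)$ is identical, two things follow immediately. First, $\REW(\pi)=\E[f(\pi(S),\trueState)] = \E[f(\pi'(S'),\trueState)] = \REW(\pi')$, since the reward depends only on this pair. Second, every BIC constraint in Definition~\ref{def:rec-BIC} — which, as the LP footnote notes, is equivalent to $\E[\mathbf 1_{\{\pi_i(S)=a_i\}}(u_i(a_i,\pi_{-i}(S);\trueState)-u_i(a_i',\pi_{-i}(S);\trueState))]\geq 0$, a quantity determined solely by the joint law of $(\pi(S),\trueState)$ — is preserved. Hence $\pi$ is BIC for $S$, so $\REW^*[S]\geq\REW(\pi)=\REW(\pi')$, and taking the sup over BIC $\pi'$ for $S'$ gives the claim.

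The main obstacle is the conditional-independence step: extracting ``$\Pr[S'=s'\mid S=s,\trueState=\theta]$ is independent of $\theta$'' from Definition~\ref{def:more-informative} cleanly, handling the measure-zero/boundary cases (signals $s$ with $\Pr[S=s]=0$, or pairs with $\Pr[S=s,S'=s']=0$), and making sure the special case ``$S$ determines $S'$'' is absorbed without separate treatment. Everything after that is bookkeeping with the LP-representation, and the fact that a convex combination of the $x'_{\cdot,s'}$ stays in $\bic[S]$ is consistent with Corollary~\ref{cor:LP-rep}. I would present the argument directly in terms of the joint distribution of $(\pi(S),\trueState)$ to keep it short.
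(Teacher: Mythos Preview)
Your proposal is correct and follows essentially the same route as the paper: the paper also lifts a BIC policy $\pi'$ for $S'$ to an \emph{induced} policy for $S$ via $\Pr[\pi(s)=a]=\Pr[\pi'(S')=a\mid S=s]$ (identical to your construction), and then shows $\E[h(\pi(S),\trueState)]=\E[h(\pi'(S'),\trueState)]$ for every $h:\cA\times\Theta\to\RR$, which is exactly your ``same joint law'' statement. The only cosmetic difference is that the paper uses Definition~\ref{def:more-informative} directly (as $\E[h(\trueState)\mid S=s]=\E[h(\trueState)\mid S=s,S'=s']$) rather than first extracting the equivalent conditional-independence formulation $\Pr[S'=s'\mid S=s,\trueState=\theta]=\Pr[S'=s'\mid S=s]$ that you highlight.
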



A recommendation policy $\pi$ for signal $S$ is also well-defined for signal $S'$ that determines $S$. Formally, $\pi$ induces a recommendation policy $\pi'$ that inputs $S'$, maps it to the corresponding value of $S$, and returns $\pi(S)$. Note that $\pi$ and $\pi'$ choose the same joint actions, and $\pi$ is BIC given $S$ if and only if $\pi'$ is BIC given $S'$. Henceforth we identify all such ``induced" policies $\pi'$ with $\pi$.

\begin{proof}[Proof of Claim~\ref{cl:restricted-subset}]
Consider the recommendation game that corresponds to the $t$-th round of Bayesian Exploration. Recall that the signal in this game is $S_t = (\omega,H_t)$, where $\omega$ is the internal random seed of policy $\pi$, and $H_t$ is the history before round $t$. Then $\pi^{(t)}$, the restriction of policy $\pi$ to round $t$, is a BIC recommendation policy in this game.

The expected reward of policy $\pi$ in round $t$ is the same as the expected reward of the restricted policy $\pi^{(t)}$, i.e.:
    $\REW_t(\pi) =\REW\rbr{\pi^{(t)}}$.
By optimality of $\REW^*[S_t]$ we have
    $\REW\rbr{\pi^{(t)}}\leq \REW^*[S_t]$.
Since policy $\pi$ is restricted to actions in $B$, signal $\signal(B)$ is at least as informative as signal $S_t$. So by Lemma~\ref{lm:coupled-exploit} we have
    $\REW^*[S_t] \leq \REW^*\sbr{ \signal(B) }$,
completing the proof.
\end{proof}

\subsection{Composition of subroutines}
\label{sec:warmup-subroutines}

We design iterative recommendation policies in a modular way, via ``subroutines" that comprise multiple rounds and accomplish a particular task. In particular, we need a formal framework to argue that our ``subroutines" are BIC if considered separately, and jointly form a BIC policy.

We model a ``subroutine" as an iterative recommendation policy that inputs the history of the previous rounds and chooses its own duration. More generally, we consider a common generalization of Bayesian Exploration and the recommendation game: the state $\trueState$ is drawn and the signal $S$ is observed exactly as in the recommendation game, and then the game proceeds over multiple rounds, exactly as in Bayesian Exploration. (Note that the recommendation game is simply a special case with time horizon $T=1$).

We focus on a version where the time horizon is infinite (and irrelevant). Instead, each iterative recommendation policy $\pi$ runs for $T_\pi$ rounds, where the number $T_\pi$, termed \emph{duration}, is chosen by the policy rather than given exogenously. The duration is chosen before the signal is observed, and thus can only depend on the utility structure and the signal structure. (This restriction is crucial, see Remark~\ref{rem:composition}.) The notion of BIC carries over word-by-word from Definition~\ref{def:BIC}.
Iterative recommendation policies in this model will also be called \emph{subroutines}.

From the computational point of view, the input to a subroutine consists of the utility structure, the signal structure, and the realization of the signal. The output of a subroutine is the tuple $(S,H)$, where $H$ is the \emph{history}: the chosen joint actions and the rewards/utilities for all rounds in the execution, or any function of $(S,H)$.

A recommendation game and a General Bayesian Exploration game with the same utility/signal structure are called \emph{associated}. As in Section~\ref{sec:coupled}, we fix the utility structure, but allow the signal $S$ to vary from one subroutine to another. A subroutine $\pi$ initialized with a particular feasible signal $s$ will be denoted $\pi(s)$.

Now we can define the \emph{composition} of subroutines. Consider two subroutines $\pi,\pi'$ with respective signals $S,S'$. Then the \emph{composition} of $\pi$ followed by $\pi'$, denoted $\pi\oplus\pi'$, is a subroutine with signal $S$ and duration $T_\pi+T_{\pi'}$. The first $T_\pi$ rounds of $\pi\oplus\pi'$ are controlled by $\pi$, and the subsequent $T_{\pi'}$ rounds are controlled by $\pi'$. In order for the composition to be well-defined, the signal for $\pi'$ should be expressed in terms of the output of $\pi$. Formally, we say that $\pi'$ is a \emph{valid sequel} for $\pi$ if the pair $(S,H)$ determines $S'$, where $H$ is the \emph{history} of $\pi$.

\begin{claim}\label{cl:warmup-composition}
Fix the utility structure. Consider BIC subroutines $\pi,\pi'$ such that $\pi'$ is a valid sequel for $\pi$. Then the composition $\pi\oplus\pi'$ is a BIC subroutine.
\end{claim}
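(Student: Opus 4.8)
The plan is to verify the BIC condition (Definition~\ref{def:BIC}) for the composed subroutine $\pi\oplus\pi'$ round by round, splitting into two cases: rounds controlled by $\pi$ and rounds controlled by $\pi'$. Throughout, I fix an agent $i\in[n]$ and a round $t$ of the composition, and I want to show the one-round incentive inequality conditioned on the event $\cE_{t-1}$ that all agents followed recommendations in rounds $1,\dots,t-1$ of $\pi\oplus\pi'$.

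\textbf{Case 1: round $t$ is one of the first $T_\pi$ rounds.} Here the composed policy simply runs $\pi$ from the start, so its behavior in rounds $1,\dots,t$ coincides (as a randomized mapping of the signal $S$) with that of $\pi$ itself. Crucially, $T_\pi$ is chosen before the signal is observed and depends only on the utility/signal structure, so the event ``$t\leq T_\pi$'' is deterministic and carries no information about $\trueState$; thus conditioning on being in this case does not change any posterior. Hence the event $\cE_{t-1}$ for $\pi\oplus\pi'$ is identical to the corresponding event for $\pi$, the recommended action $\pi^t_i$ and the profile $\pi^t_{-i}$ have the same joint distribution with $\trueState$ as under $\pi$, and inequality~\eqref{eq:def:BIC} for $\pi\oplus\pi'$ is literally inequality~\eqref{eq:def:BIC} for $\pi$, which holds since $\pi$ is BIC.

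\textbf{Case 2: round $t = T_\pi + r$ for some $1\leq r\leq T_{\pi'}$.} Now round $t$ is the $r$-th round of $\pi'$, but $\pi'$ has been initialized with signal $S'$, which by the valid-sequel hypothesis is a deterministic function of $(S,H)$, where $H$ is the history produced by $\pi$. The event $\cE_{t-1}$ that all recommendations of $\pi\oplus\pi'$ were followed through round $t-1$ decomposes as the intersection of (a) the event $\cE_{T_\pi}$ that $\pi$'s recommendations were followed in all $T_\pi$ of its rounds, and (b) the event $\cE'_{r-1}$ that $\pi'$'s recommendations were followed in its first $r-1$ rounds. Under event $\cE_{T_\pi}$ the history $H$ is exactly the history $\pi$ would generate when its recommendations are obeyed, so $S'$ is well-defined, and the conditional law of $\trueState$ given $(\cE_{T_\pi},\, S'=s')$ is the same as the law of $\trueState$ in $\pi'$'s own game given signal realization $s'$. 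Then, conditioning further on $\cE'_{r-1}$ and on $\pi'$ recommending $a_i$ to agent $i$ in its $r$-th round, inequality~\eqref{eq:def:BIC} for $\pi\oplus\pi'$ at round $t$ becomes exactly the BIC inequality for $\pi'$ at its round $r$ (Definition~\ref{def:BIC} applied to $\pi'$), which holds because $\pi'$ is BIC. I would make this precise by writing the conditional expectation as an average over realizations $s'$ of $S'$ and over the internal randomness of $\pi$ conditioned on $\cE_{T_\pi}$, noting that for each fixed $s'$ the inner expectation is nonnegative by $\pi'$'s BIC property, hence so is the average.

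\textbf{Main obstacle.} The delicate point is the bookkeeping of conditioning events: one must check that conditioning on $\cE_{T_\pi}$ (that $\pi$'s recommendations were obeyed) is exactly the conditioning that makes $\pi'$'s input signal $S'$ well-defined and distributed as in $\pi'$'s stand-alone game, and that the \emph{agents'} recommendations being obeyed in $\pi'$'s rounds lines up with $\pi'$'s own event $\cE'_{r-1}$ once we are inside $\cE_{T_\pi}$. This is where the requirement that $T_\pi$ (and hence $T_{\pi'}$, and the identity of which subroutine controls which round) be fixed in advance, independent of the signal, is essential --- it guarantees that the partition of rounds between $\pi$ and $\pi'$ leaks no information about $\trueState$ and that the agents at round $t$ correctly infer they are in the $r$-th round of $\pi'$. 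Once these identifications are set up cleanly, both cases reduce verbatim to the assumed BIC property of $\pi$ and of $\pi'$ respectively, and no further computation is needed.
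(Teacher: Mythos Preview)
The paper states this claim without proof, so there is no ``paper's own proof'' to compare against; your two-case split (rounds controlled by $\pi$ versus rounds controlled by $\pi'$) is exactly the natural argument the paper has in mind, and your identification of the key role of the deterministic duration $T_\pi$ matches the paper's parenthetical remark that ``this restriction is crucial for proving Claim~\ref{cl:warmup-composition}.''

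There is, however, a slip in your plan for making Case~2 precise. You propose to write the conditional expectation as an average over realizations $s'$ of $S'$ and then assert that ``for each fixed $s'$ the inner expectation is nonnegative by $\pi'$'s BIC property.'' That is stronger than what BIC gives you: the BIC condition for $\pi'$ (Definition~\ref{def:BIC} applied to the subroutine $\pi'$ with signal $S'$) is an inequality on the expectation taken \emph{jointly} over $(S',\trueState)\sim\psi'_*$ and the internal randomness of $\pi'$, conditioned on $\{\pi'^r_i=a_i\}\cap\cE'_{r-1}$. It does \emph{not} assert nonnegativity for each fixed signal realization $s'$ separately. The correct way to finish is not termwise nonnegativity but \emph{equality of the two expressions}: once you condition on $\cE_{T_\pi}$, the joint law of $(S',\trueState)$ in the composition is by construction exactly the signal structure $\psi'_*$ used to define $\pi'$, and the remaining randomness ($\omega_{\pi'}$ and any utility noise in $\pi'$'s rounds) is fresh and independent; moreover $\pi'^r$, $\Delta u_i$, and $\cE'_{r-1}$ depend on the first-phase data $(S,\omega_\pi,H)$ only through $S'$. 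Hence the BIC expectation for $\pi\oplus\pi'$ at round $t=T_\pi+r$ \emph{equals} the BIC expectation for $\pi'$ at round $r$, and the latter is $\geq 0$. Replace your ``each term nonnegative'' step with this identification and the proof goes through.
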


\begin{proof}
\asedit{Consider the BIC condition for round $t$ of the composition $\pi\oplus\pi'$. If $t\leq T_{\pi}$ then this condition is equivalent for the BIC condition for round $t$ of subroutine $\pi$. Else, it is equivalent to the BIC condition for round $t-T_{\pi}$ of subroutine $\pi'$.}
\end{proof}

\begin{remark}\label{rem:composition}
\asedit{It is crucial for Claim~\ref{cl:warmup-composition} that the subroutine's durations are fixed in advance. If $T_{\pi}$ is state-dependent, observing a recommended action at a given round $t$ in the execution of $\pi\oplus\pi'$ may shed light on whether $\pi$ has terminated. This may leak something about the true state, breaking the incentives provided by $\pi$ and $\pi'$. Formally, the proof of Claim~\ref{cl:warmup-composition} breaks for the following reason: when one invokes the BIC property of subroutine $\pi$ (resp., $\pi'$), this property is now conditional on $t\leq T_{\pi}$ (resp., $t>T_{\pi}$), and therefore may no longer hold.}
\end{remark}

A sequence of subroutines $(\pi,\pi',\pi'',\ldots)$ is called \emph{valid} if the every next subroutine is a valid sequel for the previous one. It is easy to see that the composition is associative:
    $\pi\oplus(\pi'\oplus\pi'') = (\pi\oplus\pi')\oplus\pi''$,
so it is uniquely defined by the sequence, and can be denoted $\pi\oplus\pi'\oplus\pi''$. Thus, a BIC iterative recommendation policy can be presented as a composition of a valid sequence of subroutines, where the first subroutine inputs an empty signal, and all durations sum up to $T$.

\OMIT{As a shorthand, the \emph{completion} of subroutine $\pi$ with a recommendation policy $\pi^*$ (assuming $\pi^*$ is a valid sequel for $\pi$) is the composition of $\pi$ followed by infinitely many copies of $\pi^*$.}


\subsection{Proof of Lemma~\ref{lm:warmup}}

Let $\pi$ be the policy from the lemma statement, and let $\sigma$ be the subroutine of duration $T_\pi$ which coincides with $\pi$ on the first $T_\pi$ rounds. (In other words, $\sigma$ is a version of policy $\pi$ that is ``truncated" after round $T_\pi$.) Let $\pi^*$ be an optimal recommendation policy for signal $S=\signal(\cA)$. Note that
    $\REW^*[S]\geq \opt(\PiBIC)$
by Claim~\ref{cl:restricted-subset}.
Let $\pi'$ be an iterative recommendation policy defined as the composition of $\sigma$ followed by $T-T_\sigma$ copies of $\pi^*$. Note that $\pi'$ is BIC by Claim~\ref{cl:warmup-composition}, and receives expected reward $\REW^*[\cA]$ in each round after $T_\pi$ by construction.



\section{Bayesian Exploration with deterministic utilities}
\label{sec:det}

In this section, we focus on deterministic utilities, and prove
Theorem~\ref{thm:model-det}. In particular, we will construct a BIC
iterative recommendation policy $\pi$ whose expected reward
$\REW(\pi)$ satisfies \eqref{eq:thm:model-det}.

Rather than comparing $\pi$ directly to $\OPT(\PiBIC)$, we will use an intermediate benchmark of the form $\REW^*\sbr{ \signal(\explorableD)}$, for some subset $\explorableD\subset \cA$ of joint actions that depends on the realized state $\trueState$. While we used $\explorableD=\cA$ to prove Lemma~\ref{lm:warmup}, this is not a ``fair" intermediate benchmark, because there may be some joint actions that cannot be explored by any BIC policy. Instead, we will only consider joint actions that can be explored.

\begin{definition}[explorability]\label{def:explorability}
A joint action $a\in\cA$ is called \emph{eventually-explorable} given a state $\theta\in\Theta$ if
    $\Pr[\pi^t=a\mid\trueState=\theta]>0$
for some BIC iterative recommendation policy $\pi$ and some round $t\in \NN$. The set of all such joint actions is denoted $\explorableD[\theta]$.
\end{definition}

Thus, $\explorableD$ is the set of all joint actions that can be explored by some BIC policy in some round, given the realized state $\trueState$. Note that it is a random variable whose realization is determined by $\trueState$. Using $\REW^*[\explorableD]$ as an intermediate benchmark suffices because, by Claim~\ref{cl:restricted-subset} we have
\begin{align}\label{eq:explorableD}
\REW^*\sbr{\explorableD}\geq \opt(\PiBIC).
\end{align}

\begin{remark}\label{rem:explorable}
To see that $\explorableD$ may depend on $\trueState$, consider the following simple example. Assume a basic scenario with a   single agent per round, two possible actions $a_1,a_2$, and deterministic rewards (denoted $R_1$ and $R_2$). The prior for each action chooses reward independently and uniformly from two possible values: $\{\tfrac12, 1\}$ for $R_1$, and $\{0,1\}$ for $R_2$. Then $a_2$ can be explored if and only if $R_1=\tfrac12$. (Why? In the first round the algorithm must recommend $a_1$. In every subsequent round before $a_2$ is chosen, the algorithm's policy is a randomized mapping $\pi:R_1\to \{a_1,a_2\}$. If $\pi$ puts a positive probability on $a_2$ after seeing $R_1=1$, then $\pi$ is not BIC: indeed, an agent would prefer to deviate to $a_1$ when recommended $a_2$. On the other hand, a policy $\pi$ which deterministically chooses $a_1$ when $R_1=1$, and $a_2$ otherwise, is BIC.)
\end{remark}

We construct a BIC iterative recommendation policy which explores all of $\explorableD$. Note that the existence of such policy does not immediately follow from Definition~\ref{def:explorability}, because the definition only guarantees a BIC policy separately for each $(\theta,a)$ pair, whereas we need one BIC policy which ``works" for the specific (and unknown) realized state $\trueState$, and all ``matching" joint actions $a$.

\begin{definition}
A subroutine is called \emph{maximally-exploring} if it is a BIC subroutine that inputs an empty signal and explores all joint actions in $\explorableD$ by the time it stops.
\end{definition}

A maximally-exploring subroutine can be followed by \emph{exploitation}, using the corresponding optimal recommendation policy. The resulting  BIC iterative recommendation policy achieves the regret bound claimed in Theorem~\ref{thm:model-det}.

\begin{lemma}\label{lm:exploring-reward}
Let $\sigma$ be a maximally-exploring subroutine of duration $T_\sigma$. Let $\pi^*$ be an optimal recommendation policy for signal $\signal(\explorableD)$. Let $\pi$ be the composition of subroutine $\sigma$ followed by $T-T_\sigma$ copies of $\pi^*$. Then $\pi$ is a BIC iterative recommendation policy that satisfies
\begin{align}
\REW(\pi) \geq (T-T_\sigma)\;\opt(\PiBIC).
\end{align}
\end{lemma}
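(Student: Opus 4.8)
The plan is to follow the template of the proof of Lemma~\ref{lm:warmup}, with the maximally-exploring subroutine $\sigma$ playing the role of the truncated full-exploration policy and $\signal(\explorableD)$ playing the role of $\signal(\cA)$. Concretely, $\pi=\sigma\oplus\pi^*\oplus\cdots\oplus\pi^*$ is $\sigma$ followed by $T-T_\sigma$ copies of $\pi^*$, where each recommendation policy $\pi^*$ is viewed as a subroutine of duration $1$ (we may assume $T\geq T_\sigma$, since otherwise the claimed bound is vacuous as $\REW(\pi)\geq 0$). Two things need to be established: that $\pi$ is BIC, and that $\REW_t(\pi)=\REW^*[\signal(\explorableD)]$ for every round $t>T_\sigma$. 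Granting these, the bound follows by summing over $t>T_\sigma$, discarding the nonnegative reward of the first $T_\sigma$ rounds, and invoking $\REW^*[\signal(\explorableD)]\geq\opt(\PiBIC)$, which is \eqref{eq:explorableD}.

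For BIC, by Claim~\ref{cl:warmup-composition} and associativity of composition it suffices to check that $\pi^*$ is a valid sequel for $\sigma$, and for $\sigma\oplus\pi^*\oplus\cdots\oplus\pi^*$ with any number of copies of $\pi^*$; since later histories only extend the history $H$ produced by $\sigma$, it is enough to show that $\signal(\explorableD)$ is a deterministic function of the output $(\bot,H)$ of $\sigma$. The key observation is that the set of joint actions recorded in $H$ equals $\explorableD$ with probability $1$: it is contained in $\explorableD$ because $\sigma$ is BIC, so (since the agents follow its recommendations) it only ever plays joint actions that are eventually-explorable given $\trueState$ in the sense of Definition~\ref{def:explorability}; and it contains $\explorableD$ because $\sigma$ is maximally-exploring. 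Since $H$ also records the realized utilities $f(a,\trueState)$ and $u_i(a,\trueState)$ for each joint action $a$ it plays, the pair $(\bot,H)$ determines exactly the data comprising $\signal(\explorableD)$ per \eqref{eq:signalB-def}. Hence $\pi^*$ is a valid sequel, and $\pi$ is BIC.

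For the reward, fix $t>T_\sigma$. The restriction of $\pi$ to round $t$ is a copy of $\pi^*$ applied to $\signal(\explorableD)$, which by the previous paragraph is a fixed function of the history generated by $\sigma$ (and, in particular, does not change across rounds $>T_\sigma$). Moreover, the joint law of $(\signal(\explorableD),\trueState)$ in round $t$ coincides with that in the stand-alone recommendation game with signal $\signal(\explorableD)$: the marginal of $\trueState$ is the prior $\prior$ in both, and $\signal(\explorableD)$ is the same fixed function of $\trueState$ (and of the deterministic utilities). Therefore $\REW_t(\pi)=\REW(\pi^*)=\REW^*[\signal(\explorableD)]$ by optimality of $\pi^*$, as needed.

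I expect the ``valid sequel'' step to be the main obstacle: one must argue that $H$ pins down not merely the utilities of the explored joint actions but the \emph{identity} of the set $\explorableD$ itself. This is precisely where both halves of the definition of a maximally-exploring subroutine are used---BIC keeps the explored set inside $\explorableD$, and the exploration guarantee forces it to cover $\explorableD$---and where one must be careful about the ``with probability $1$'' qualifiers and about the convention that agents follow the recommendations of a BIC policy (so that played joint actions, which are what $H$ records, coincide with recommended ones). The remaining steps reuse Claim~\ref{cl:warmup-composition}, Claim~\ref{cl:restricted-subset}, and \eqref{eq:explorableD} essentially verbatim.
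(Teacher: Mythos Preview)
Your proposal is correct and follows essentially the same approach as the paper: invoke Claim~\ref{cl:warmup-composition} for BIC of the composition, use \eqref{eq:explorableD} (which comes from Claim~\ref{cl:restricted-subset}) for the reward bound, and appeal to Corollary~\ref{cor:LP-rep} for the existence of the optimal $\pi^*$. The paper's own ``proof'' is just a one-sentence pointer to this machinery, so your write-up is in fact more complete. In particular, you explicitly verify the valid-sequel condition by arguing that the set of joint actions recorded in $H$ equals $\explorableD$ almost surely (using both the BIC property of $\sigma$ for the $\subset$ direction and the maximally-exploring property for the $\supset$ direction), a step the paper leaves implicit but which is indeed needed to conclude that $(\bot,H)$ determines $\signal(\explorableD)$.
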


\noindent The lemma easily follows from the machinery developed in Section~\ref{sec:warmup}. Namely, we use the notion and existence of $\explorableD$-optimal policy (see Corollary~\ref{cor:LP-rep}), the ``monotonicity-in-information" analysis which guarantees~\eqref{eq:explorableD}, and the ``composition of subroutines" analysis which guarantees that $\pi$ is BIC (via Claim~\ref{cl:warmup-composition}).

The existence of a maximally-exploring subroutine also allows to characterize the benchmark $\opt(\PiBIC)$ as the maximal per-round expected reward of a BIC policy that has access to the utilities of all explorable joint actions.

\begin{theorem}\label{thm:deterministic-benchmark-characterization}
For Bayesian Exploration with deterministic utilities,
$\opt(\PiBIC) = \REW^*[\explorableD]$.
\end{theorem}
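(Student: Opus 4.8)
The plan is to prove the two inequalities separately. The direction $\opt(\PiBIC) \le \REW^*[\explorableD]$ is exactly~\eqref{eq:explorableD}, which was already derived from Claim~\ref{cl:restricted-subset}: any BIC iterative recommendation policy only ever chooses joint actions in $\explorableD$ (by Definition~\ref{def:explorability}, no BIC policy can place positive probability on a joint action outside $\explorableD[\trueState]$ when the realized state is $\trueState$), so taking $B=\explorableD$ in Claim~\ref{cl:restricted-subset} gives $\REW_t(\pi) \le \REW^*[\signal(\explorableD)] = \REW^*[\explorableD]$ for every round $t$ and every BIC policy $\pi$, hence $\opt(\PiBIC) \le \REW^*[\explorableD]$.

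For the reverse inequality $\REW^*[\explorableD] \le \opt(\PiBIC)$, the idea is to exhibit, for every $\eps>0$, a BIC iterative recommendation policy whose per-round reward in some round is at least $\REW^*[\explorableD]-\eps$; then by definition of $\opt(\PiBIC)$ as a supremum over rounds and policies we are done. Here I would invoke Lemma~\ref{lm:exploring-reward}: let $\sigma$ be a maximally-exploring subroutine of duration $T_\sigma$, let $\pi^*$ be an optimal recommendation policy for the signal $\signal(\explorableD)$ (which exists by Corollary~\ref{cor:LP-rep}), and let $\pi$ be the composition $\sigma$ followed by sufficiently many copies of $\pi^*$. By Claim~\ref{cl:warmup-composition}, $\pi$ is BIC, and $\pi^*$ is a valid sequel for $\sigma$ because the output of a maximally-exploring subroutine determines $\signal(\explorableD)$ (it has explored all of $\explorableD$). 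In every round after $T_\sigma$, the restriction of $\pi$ is exactly $\pi^*$ acting on a signal at least as informative as $\signal(\explorableD)$, so its expected reward equals $\REW(\pi^*) = \REW^*[\signal(\explorableD)] = \REW^*[\explorableD]$. Choosing any round $t>T_\sigma$ gives $\REW_t(\pi) = \REW^*[\explorableD]$, hence $\opt(\PiBIC) \ge \REW^*[\explorableD]$ (in fact with equality, not just up to $\eps$).

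Combining the two inequalities yields $\opt(\PiBIC) = \REW^*[\explorableD]$. The one point that needs care — and which I view as the main obstacle — is justifying that a maximally-exploring subroutine exists at all and that its output determines $\signal(\explorableD)$, so that $\pi^*$ is genuinely a valid sequel; but the existence is exactly the content of the constructions promised in Section~\ref{sec:det} (the subroutine's whole purpose is to explore all of $\explorableD$, and once a joint action is explored its utilities are part of the history), so for the proof of this theorem it can be taken as given. A secondary subtlety is that $\explorableD$ is a $2^\cA$-valued random variable depending on $\trueState$, so $\signal(\explorableD)$ as defined in~\eqref{eq:signalB-def} bundles $\explorableD$ itself together with the relevant utilities; one should note that the maximally-exploring subroutine's history indeed reveals $\explorableD$ (the set of explored actions, when exploration is maximal, equals $\explorableD$), so identifying $\REW^*[\explorableD]$ with $\REW^*[\signal(\explorableD)]$ is legitimate.
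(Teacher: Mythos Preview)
Your proposal is correct and follows essentially the same approach as the paper: one inequality is~\eqref{eq:explorableD} via Claim~\ref{cl:restricted-subset}, and the reverse inequality comes from composing a maximally-exploring subroutine with the optimal recommendation policy $\pi^*$ for $\signal(\explorableD)$, which yields a BIC policy attaining $\REW^*[\explorableD]$ in every post-exploration round. The paper's argument in Section~\ref{sec:det-wrapup} is the same, just stated more tersely (it observes that the constructed $\pi$ witnesses the supremum in the definition of $\opt(\PiBIC)$).
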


\begin{corollary}\label{cor:deterministic-benchmark-characterization}
Consider Bayesian Exploration with deterministic utilities. Assume that there is only a single agent per round, the principal's reward coincides with the agent's utility, and all actions are explorable. Then the benchmark is the expected reward of the best action:
    \[ \opt(\PiBIC) = \E_{\trueState\sim\prior}\sbr{ \max_{a\in\cA} f(a,\trueState) }. \]
\end{corollary}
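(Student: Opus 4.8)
The plan is to derive Corollary~\ref{cor:deterministic-benchmark-characterization} from Theorem~\ref{thm:deterministic-benchmark-characterization}, which already gives $\opt(\PiBIC) = \REW^*[\signal(\explorableD)]$. Under the stated hypotheses there is a single agent, so a joint action is just an action $a \in \cA$; the principal's reward equals the agent's utility, $f(a,\theta) = u_1(a,\theta)$; and every action is explorable given every state, so $\explorableD = \cA$ with probability one. Hence it suffices to show $\REW^*[\signal(\cA)] = \E_{\trueState\sim\prior}[\max_{a\in\cA} f(a,\trueState)]$.

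First I would argue the upper bound $\REW^*[\signal(\cA)] \le \E_{\trueState\sim\prior}[\max_a f(a,\trueState)]$. This holds for any signal $S$ whatsoever and needs no incentive constraints: for any recommendation policy $\pi$ (BIC or not) for signal $S$, $\REW(\pi) = \E[f(\pi(S),\trueState)] \le \E[\max_a f(a,\trueState)]$, pointwise in the realization of $(S,\trueState)$ and the internal randomness. Taking a supremum over BIC policies gives the bound.

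Next I would argue the matching lower bound, which is where the real content lies. The signal $S = \signal(\cA)$ reveals, by its definition in \eqref{eq:signalB-def}, the full vector of utilities $(u_1(a,\trueState))_{a\in\cA}$ (equivalently $(f(a,\trueState))_{a\in\cA}$). Consider the policy $\pi^*$ that, given $S$, deterministically recommends an action $a^*(S) \in \argmax_{a\in\cA} f(a,\trueState)$ (breaking ties by any fixed rule). Its expected reward is exactly $\E[\max_a f(a,\trueState)]$, so it remains to check that $\pi^*$ is BIC in the sense of Definition~\ref{def:rec-BIC}. With a single agent the BIC condition for $\pi^*$ reads: for every action $a_1$ recommended with positive probability and every alternative $a_1'$, $\E[u_1(a_1,\trueState) - u_1(a_1',\trueState) \mid \pi^*(S) = a_1] \ge 0$. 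But conditioned on the event $\{\pi^*(S) = a_1\}$, the revealed utilities satisfy $u_1(a_1,\trueState) = f(a_1,\trueState) = \max_a f(a,\trueState) \ge f(a_1',\trueState) = u_1(a_1',\trueState)$ deterministically (the tie-breaking rule only ever picks a maximizer), so the conditional expectation of the difference is nonnegative. Thus $\pi^*$ is BIC and achieves the upper bound, giving $\REW^*[\signal(\cA)] = \E_{\trueState\sim\prior}[\max_a f(a,\trueState)]$; combined with Theorem~\ref{thm:deterministic-benchmark-characterization} and $\explorableD = \cA$, this completes the proof.

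The only genuine subtlety — and the step I would be most careful about — is the reduction $\explorableD = \cA$: Definition~\ref{def:explorability} says each individual action is explorable by \emph{some} BIC policy, and the hypothesis "all actions are explorable" must be read as $\explorableD[\theta] = \cA$ for every feasible $\theta$, so that the random set $\explorableD$ is deterministically $\cA$; then Theorem~\ref{thm:deterministic-benchmark-characterization} applies verbatim. Everything else is a one-line verification of incentive-compatibility made trivial by the fact that the recommended action is a pointwise utility-maximizer once the signal $\signal(\cA)$ has been revealed.
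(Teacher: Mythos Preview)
Your proposal is correct and follows exactly the approach the paper intends: the paper states the corollary immediately after Theorem~\ref{thm:deterministic-benchmark-characterization} without a separate proof, treating it as the evident specialization you spell out (set $\explorableD=\cA$, then observe that with a single agent whose utility equals $f$, the argmax policy on signal $\signal(\cA)$ is BIC and attains $\E_{\trueState\sim\prior}[\max_a f(a,\trueState)]$). Your careful verification of the BIC condition and the reading of ``all actions are explorable'' as $\explorableD[\theta]=\cA$ for every $\theta$ are precisely the details the paper leaves implicit.
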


The rest of this section is organized as follows. In Section~\ref{sec:det-RecGame-explore} we develop the theory of ``explorability" in a recommendation game. In Section~\ref{sec:det-maxExplore} we use this theory to define a natural BIC subroutine and prove that it is maximally-exploring (and has the desired per-round computation time). Finally, in Section~\ref{sec:det-wrapup} we put it all together to prove Theorem~\ref{thm:model-det} and Theorem~\ref{thm:deterministic-benchmark-characterization}.

\subsection{Explorability in the recommendation game}
\label{sec:det-RecGame-explore}

In this subsection we investigate which joint actions can be explored in the recommendation game. We adopt a very permissive definition of ``explorability", study some of its properties, and design a subroutine  which explores all such joint actions. Throughout, we consider a recommendation game with signal $S$ whose support is $\cX$.

\begin{definition}\label{def:signal-explorable}
Consider a recommendation game with signal $S$. A joint action $a$ is called \emph{signal-explorable}, for a given feasible signal $s\in \cX$, if there exists a BIC recommendation policy $\pi = \pi^{a,s}$ such that $\Pr[\pi(s)=a]>0$. The set of all such joint actions is denoted $\Ex_s[S]$. The \emph{signal-explorable set} is defined as $\Ex[S] = \Ex_S[S]$.
\end{definition}

Note that $\Ex_s[S]$ is a fixed subset of joint actions, determined by the feasible signal $s\in\cX$, whereas $\Ex[S]$ is a random variable (in fact, a $2^\cA$-valued signal) whose realization is determined by the realization of signal $S$.

We show that the signal-explorable set $\Ex[S]$ can only increase if signal $S$ becomes more informative. (The proof is deferred to Appendix~\ref{app:coupled-signals}.)

\begin{lemma}[monotonicity-in-information for explorability]
\label{lm:coupled-explore}
Consider a recommendation game with coupled signals $S,S'$. If signal $S$ is at least as informative as $S'$ then $\Ex[S] \supset \Ex[S']$.
\end{lemma}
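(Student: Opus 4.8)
The plan is to show that every joint action which is signal-explorable under $S'$ at a realized value of $S$ remains signal-explorable under $S$. Fix a feasible signal realization $(S=s, S'=s')$ that occurs with positive probability, and suppose $a \in \Ex_{s'}[S']$. By Definition~\ref{def:signal-explorable}, there is a BIC recommendation policy $\pi^{a,s'}$ for signal $S'$ with $\Pr[\pi^{a,s'}(s')=a] > 0$. I want to convert this into a BIC policy $\rho$ for signal $S$ that plays $a$ with positive probability when $S=s$. The natural candidate: because $S$ is at least as informative as $S'$, there is a deterministic map $g$ with $S' = g(S)$ in the "determines" special case, but in the general case the relationship is only that $S'$ is conditionally independent of $\trueState$ given $S$. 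I would first handle the clean sub-case where $S$ determines $S'$ (via $g$): then $\pi^{a,s'} \circ g$ is a recommendation policy for $S$, it is BIC given $S$ iff $\pi^{a,s'}$ is BIC given $S'$ (exactly the "induced policy" identification discussed right after Lemma~\ref{lm:coupled-exploit}), and it plays $a$ with positive probability on every $s$ with $g(s)=s'$, in particular on our fixed $s$. Hence $a \in \Ex_s[S]$.

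For the general case (conditional independence but not determinism), the plan is to reduce to the determines-case by a coupling/garbling argument, or alternatively to argue directly via the LP characterization. The LP route: by Claim~\ref{cl:LP-rep}, BIC policies for $S$ are exactly the points of the polytope $\bic[S]$, with decision variables $x_{a,t}$ indexed by feasible signals $t$ of $S$. Given the BIC policy for $S'$ with representation $y_{a,s'}$, define a lifted policy for $S$ by $x_{a,t} := \sum_{t'} \Pr[S'=t' \mid S=t]\, y_{a,t'}$ — i.e., simulate $S'$ from $S$ using the conditional distribution, then run $\pi^{a,s'}$. I would check that (i) this is a valid distribution over $\cA$ for each $t$, which is immediate, and (ii) it satisfies the BIC constraints in the LP for $S$; this is where the conditional-independence hypothesis $\Pr[\trueState=\theta \mid S=s, S'=s'] = \Pr[\trueState=\theta \mid S=s]$ is used, so that the expected-utility-difference computed under $S$ after this simulation equals the one computed under $S'$ (the "garbling" $S \to S'$ carries no information about $\trueState$ beyond what $S$ already reveals, so reconstructing $S'$ and acting on it is consistent with $S$'s posterior). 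Finally, since $y_{a,s'} > 0$ for our fixed $s'$ and $\Pr[S'=s' \mid S=s] > 0$ (as $(s,s')$ has positive joint probability), we get $x_{a,s} > 0$, i.e., $a \in \Ex_s[S]$. Taking this over all $a \in \Ex_{s'}[S']$ gives $\Ex_s[S] \supseteq \Ex_{s'}[S']$; since this holds pointwise for every realization, $\Ex[S] \supseteq \Ex[S']$ as random variables (i.e., as $2^\cA$-valued signals, the containment holds almost surely).

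The main obstacle I expect is step (ii): verifying that the lifted/garbled policy actually satisfies the BIC constraints of $\bic[S]$. The subtlety is that BIC is a conditional-expectation constraint given $\{\pi_i(S)=a_i\}$, and after garbling the conditioning event becomes $\{\pi^{a,s'}_i(S')=a_i\}$ with $S'$ simulated from $S$; one must check that the posterior over $\trueState$ conditioned on this event, computed in the $S$-world, matches the posterior in the $S'$-world that made the original policy BIC. This is exactly what Definition~\ref{def:more-informative} is engineered to give, but writing it out cleanly requires care with the joint law of $(\trueState, S, S', \text{recommendation})$. I suspect the cleanest writeup either invokes Lemma~\ref{lm:coupled-exploit}'s proof machinery from Appendix~\ref{app:coupled-signals} (which presumably already sets up this garbling formalism) or explicitly constructs the simulated-signal policy and does the one-line posterior computation; either way the inequality $\Ex[S] \supseteq \Ex[S']$ then follows, and the strictness is not claimed so there is nothing further to prove.
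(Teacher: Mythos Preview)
Your proposal is correct and follows essentially the same approach as the paper. The paper's proof (Appendix~\ref{app:coupled-signals}) defines precisely your ``lifted/garbled'' policy---called the \emph{induced policy} there, via $\Pr[\pi(s)=a]=\Pr[\pi'(S')=a\mid S=s]$---and verifies your step~(ii) by proving a general identity $\E[h(\pi(S),\trueState)]=\E[h(\pi'(S'),\trueState)]$ for all $h$ (Claim~\ref{cl:redundancy-policy}), then specializing $h$ to the BIC integrand; the only organizational difference is that the paper lifts a single max-support policy for $S'$ rather than a separate $\pi^{a,s'}$ for each action, which slightly streamlines the conclusion but is not a different idea.
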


\noindent Note that $\Ex[S]$ and $\Ex[S']$ are set-valued random variables, and the claim asserts that one random set is always contained in the other. (In general, if $X$ and $Y$ are random variables, we will write $X\in Y$ and $X\subset Y$ to mean that the corresponding event holds for all realizations of randomness.)

\xhdr{A recommendation policy with maximal support.}
Observe that policies $\pi^{a,s}$ in Definition~\ref{def:signal-explorable} can be replaced with a single BIC recommendation policy $\pimax$ such that $\Ex_s[S] = \support\rbr{\pimax(s)}$ for each feasible signal  $s\in\cX$. We will call such $\pimax$ a \emph{max-support} policy for signal $S$. For example, we can set
\begin{align}\label{eq:cl:max-support-policy}
\textstyle \pimax = \frac{1}{|\cX|} \,\sum_{s\in \cX} \frac{1}{|\;\Ex_s[S]\;|} \;
    \sum_{a\in \Ex_s[S]} \pi^{a,s}.
\end{align}
This policy is BIC as a convex combination of BIC policies, and max-support by design.

We compute a max-support policy as follows. For each joint action $a\in \cA$ and each feasible signal $s\in \cX$, we solve the following LP:
\begin{align}\label{eq:LP-Xas}
 \max_{x \in \bic[S]}  \eta_{a,s} \quad
 \mbox{subject to } \quad  x_{a, s} \geq \eta_{a,s}.
\end{align}
This LP always has a solution $\eta_{a,s}\geq 0$, because BIC policies exist and any BIC policy gives a solution $x\in\bic[S]$ with $\eta_{a,s}= 0$. Further, $\eta_{a,s}>0$ if and only if $a\in \Ex_s[S]$, in which case the solution $x$ is the LP-representation of a policy $\pi^{a,s}$ such that $\Pr[\pi^{a,s}(s)=a]>0$. Then the max-support policy $\pimax$ is computed via \eqref{eq:cl:max-support-policy}. The computational procedure is given in Algorithm~\ref{alg:MaxSupport}. Note that $\maxSupport$ computes subsets $\cA_s=\Ex_s[S]$, and the output $x$ is the LP-representation of the policy given by \eqref{eq:cl:max-support-policy}.

\begin{algorithm}[htb]
\begin{algorithmic}
\STATE {\bf Input:} the utility structure and the signal structure.
\STATE {\bf Output:} an LP-representation of a max-support policy.
\vspace{2mm}
\STATE For each joint action $a\in \cA$ and feasible signal $s\in \cX$:
\STATE \myTab solve the linear program~\eqref{eq:LP-Xas};
    let $x^{a,s}$ be a solution with objective value $\eta_{a,s}$.
\STATE Let $\cA_s = \{a\in \cA:\; \eta_{a,s}>0\}$ for each $s\in\cX$.
\STATE Output
    $x = \frac{1}{|\cX|} \,\sum_{s\in \cX} \frac{1}{|\cA_s|} \; \sum_{a\in \cA_s} x^{a,s}$.
\end{algorithmic}
\caption{$\maxSupport$: computes a max-support policy.}
\label{alg:MaxSupport}
\end{algorithm}

The ``quality" of a max-support policy $\pimax$ is, for our purposes, expressed by the minimal probability of choosing a joint action in its support:
\begin{align}\label{eq:pmin-pi}
\pmin(\pimax)
    := \quad\min_{s\in \cX,\; a\in \support\rbr{\pimax(s)}}\quad \Pr\sbr{\pimax(s)=a}.
\end{align}
We relate this quantity to the minimax probability in Definition~\ref{def:signal-explorable}:
\begin{align}\label{eq:det-pmin}
\psig[S]
    := \quad\min_{s\in \cX,\; a\in \Ex_s[S]}\quad
            \max_{x\in\bic[S]}\quad x_{a,s}.
\end{align}
It is easy to see that $\maxSupport$ constructs policy $\pimax$ with
    $\pmin(\pimax)\geq \frac{1}{|\cA|\cdot |\cX|}\; \psig[S]$.
To summarize, we have proved the following:

\begin{claim}\label{cl:max-support-policy}
There is a max-support policy $\pimax$ with
    $\pmin(\pimax)\geq \frac{1}{|\cA|\cdot |\cX|}\; \psig[S]$.
It can be computed by $\maxSupport$ in time polynomial in
$|\cA|$, $|\cX|$, and $|\Theta|$.
\end{claim}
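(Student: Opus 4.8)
The plan is to verify the four assertions packed into the claim — that Algorithm~\ref{alg:MaxSupport} terminates in the stated time, that its output is the LP-representation of a BIC policy $\pimax$, that this $\pimax$ is max-support, and that it meets the $\pmin$ bound — in that order, leaning throughout on Claim~\ref{cl:LP-rep} and Corollary~\ref{cor:LP-rep}. For the running time I would simply count: the algorithm solves the linear program \eqref{eq:LP-Xas} once for each of the $|\cA|\cdot|\cX|$ pairs $(a,s)$; each such LP has $|\cA|\cdot|\cX|+1$ variables and a number of constraints polynomial in $|\cA|$ and $|\cX|$ (the BIC inequalities, the $|\cX|$ normalizations, the nonnegativities, and the one extra inequality $x_{a,s}\ge\eta_{a,s}$), while each coefficient is a sum of at most $|\cA|\cdot|\Theta|$ terms of the utility/signal data, so assembling and solving each LP takes time polynomial in $|\cA|,|\cX|,|\Theta|$; the final convex combination is then formed in polynomial time as well.

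Next I would pin down the sets $\cA_s$. Inspecting \eqref{eq:LP-Xas}, its optimum value is $\eta_{a,s}=\max_{x\in\bic[S]}x_{a,s}$. If $a\in\Ex_s[S]$, some BIC policy $\pi^{a,s}$ plays $a$ on $s$ with positive probability, so by Claim~\ref{cl:LP-rep} its LP-representation lies in $\bic[S]$ and witnesses $\eta_{a,s}>0$; conversely, if $\eta_{a,s}>0$ then the returned $x^{a,s}\in\bic[S]$ is, again by Claim~\ref{cl:LP-rep}, the LP-representation of a BIC policy that plays $a$ on $s$ with probability $x^{a,s}_{a,s}\ge\eta_{a,s}>0$, so $a\in\Ex_s[S]$. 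Hence $\cA_s=\Ex_s[S]$, and moreover $\eta_{a,s}\ge\pmin[S]$ for each $a\in\cA_s$ directly from the definition \eqref{eq:det-pmin}.

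Then the output $x=\tfrac{1}{|\cX|}\sum_{s}\tfrac{1}{|\cA_s|}\sum_{a\in\cA_s}x^{a,s}$ is a convex combination of points of $\bic[S]$, hence lies in $\bic[S]$, so by Claim~\ref{cl:LP-rep} it is the LP-representation of a BIC policy $\pimax$ (this is exactly \eqref{eq:cl:max-support-policy} under the correspondence $\pi^{a,s}\leftrightarrow x^{a,s}$). To get the max-support property I would show $\support(\pimax(s))=\Ex_s[S]$: if $x_{a,s}>0$ then some term $x^{a',s'}_{a,s}>0$, which forces $\eta_{a,s}>0$ and hence $a\in\Ex_s[S]$; conversely for $a\in\Ex_s[S]=\cA_s$ the single term $(a'=a,\,s'=s)$ already contributes $\tfrac{1}{|\cX|\,|\cA_s|}x^{a,s}_{a,s}>0$. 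That same term also yields the quantitative bound: for $a\in\support(\pimax(s))$,
\[
\Pr[\pimax(s)=a]=x_{a,s}\ \ge\ \frac{1}{|\cX|}\cdot\frac{1}{|\cA_s|}\cdot x^{a,s}_{a,s}\ \ge\ \frac{\eta_{a,s}}{|\cX|\,|\cA_s|}\ \ge\ \frac{\pmin[S]}{|\cA|\,|\cX|},
\]
using $|\cA_s|\le|\cA|$ and $\eta_{a,s}\ge\pmin[S]$, which is precisely $\pmin(\pimax)\ge\tfrac{1}{|\cA|\,|\cX|}\pmin[S]$.

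I do not expect a serious obstacle here: the argument is bookkeeping on top of the LP machinery already established. The only points that warrant a little care are the identity $\eta_{a,s}=\max_{x\in\bic[S]}x_{a,s}$ together with its consequence $\eta_{a,s}\ge\pmin[S]$ — these are exactly what tie the algorithm's output back to the quantity $\pmin[S]$ in \eqref{eq:det-pmin} — and making sure the "$x_{a,s}>0$ implies some $x^{a',s'}_{a,s}>0$" step really concludes $a\in\Ex_s[S]$ (via $\eta_{a,s}>0$) rather than something weaker, so that $\support(\pimax(s))$ is \emph{exactly} $\Ex_s[S]$ and not merely contained in or containing it.
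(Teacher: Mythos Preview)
Your proposal is correct and follows essentially the same approach as the paper: the paper's ``proof'' is the discussion preceding the claim, which notes that $\eta_{a,s}>0$ iff $a\in\Ex_s[S]$, that the output is BIC as a convex combination of BIC policies, and that the $\pmin$ bound is ``easy to see'' --- you have simply spelled out these steps in full. One minor simplification: for the containment $\support(\pimax(s))\subseteq\Ex_s[S]$ you need not route through $\eta_{a,s}$; since $\pimax$ is itself BIC, any action it plays on $s$ with positive probability lies in $\Ex_s[S]$ directly by Definition~\ref{def:signal-explorable}.
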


\xhdr{Maximal exploration given a signal.}
Let us design a subroutine $\sigma$ which explores all signal-explorable joint actions. More specifically, given a recommendation game with signal $S$, we are looking for a subroutine $\sigma$ in the associated Generalized Bayesian Exploration game which explores all joint actions in $\Ex[S]$. Such subroutine is called \emph{maximally-exploring} for signal $S$.

\newcommand{\xmax}{x^{\max}}

We start with a max-support policy $\pimax$ returned by $\maxSupport$. Let $\xmax$ be its LP-representation, so that $\xmax_{a,s} := \Pr[\pimax(s)=a]$. Given a particular signal realization $s\in\cX$, we proceed as follows. We compute the signal-explorable set $\Ex_s[S]$ as the support of $\pimax(s)$. For each joint action $a\in \Ex_s[S]$, we choose the \emph{dedicated round} $\tau_s(a)$ when this action is chosen. The dedicated rounds are chosen uniformly at random, in the sense that an injective function
    $\tau_s:\Ex_s[S] \to [T_\sigma]$
is chosen uniformly at random among all such functions (where $T_\sigma$ is the duration of the subroutine). To guarantee that the subroutine is BIC, we ensure that
\begin{align}\label{eq:max-exploring-subroutine-prop}
\Pr\sbr{\sigma^t(s) =a} =\xmax_{a,s}
    \quad\text{for each round $t$ and joint action $a\in\cA$}.
\end{align}
To this end, for each non-dedicated round the joint action is chosen independently from a fixed distribution $\cD_s$ that is constructed to imply \eqref{eq:max-exploring-subroutine-prop}. Specifically, we write $N_s = |\Ex_s[S]|$ and define:
\begin{align}\label{eq:remainder-distribution}
    \cD_s(a) = \frac{ \xmax_{a,s}-1/T_\sigma}{T_\sigma-N_s}\quad \forall a\in\Ex_s[S].
\end{align}
It is easy to see that $\cD_s$ is a distribution, provided that the duration is large enough. Specifically, it suffices to set
    $T_\sigma = \max\left( 1+N_s ,\; \cel{1/\pmin(\pimax)}\right)$.
Then for each round $t$ and action $a\in \Ex_s[S]$,
\begin{align*}
\Pr\sbr{ \sigma^t(s) =a}
    = 1/T_\sigma + \cD_s(a)\cdot (T_\sigma - N_s)
    =\xmax_{a,s},
\end{align*}
where the $1/T_\sigma$ is the probability that round $t$ is dedicated for action $a$. This completes the description of the subroutine, and the proof that it is BIC. The computational procedure for this subroutine is summarized in Algorithm~\ref{alg:MaxEx}.

\begin{algorithm}[ht!]
\begin{algorithmic}
\STATE {\bf Input:} utility structure $\cU$, signal structure $\sigStruc$, and signal realization $s\in\cX$.
\vspace{2mm}

\COMMENT{compute the parameters}
\STATE $x \leftarrow \maxSupport(\cU,\sigStruc)$
    \COMMENT{LP-representation of a max-support policy $\pimax$}
\STATE $B \leftarrow \{ a:\in \cA:\; x_{a,s}>0\} $
    \COMMENT{signal-explorable set $\Ex_s[S]$}
\STATE $\pmin \leftarrow \min(x_{a,s'}:\; a\in B,\; s'\in\cX)$
    \COMMENT{computes $\pmin(\pimax)$}
\STATE $T\leftarrow \max(1+|B|,\; \cel{1/\pmin})$
    \COMMENT{the duration of the subroutine}
\STATE Pick injective function $\tau:B\to [T]$ u.a.r. from all such functions.
\STATE
    $\cD(a) \leftarrow \frac{x_{a,s}-1/T}{T-|B|}\quad \forall a\in B$.
    \COMMENT{distribution \eqref{eq:remainder-distribution} for non-dedicated rounds}
\vspace{2mm}

\COMMENT{issue the recommendations}
\FOR{rounds $t=1 \ldots T$}
\STATE {\bf if}
        $t=\tau(a)$ for some $a\in B$
    {\bf then}
        $a_t\leftarrow a$
    {\bf else}
         choose $a_t$ from distribution $\cD$
\STATE Recommend action $a_t$, observe the corresponding utilities
\ENDFOR

\vspace{2mm}
\STATE {\bf Output:} signal $\signal(\{a_1 \LDOTS a_T\})$.
\end{algorithmic}
\caption{Subroutine $\MaxEx$: maximal exploration given signal $S$.}
\label{alg:MaxEx}
\end{algorithm}

%

Our discussion of $\MaxEx$ can be summarized as the following claim:

\begin{claim}\label{cl:det-MaxExplore-signal}
$\MaxEx$ is a maximally-exploring subroutine for signal $S$. Its duration is at most
    $|\cA|\cdot |\cX|\;/\;\psig[S]$
rounds. The running time is polynomial in $|\cA|\cdot|\cX|$ in pre-processing, and linear in $|\cA|$ per each round.
\end{claim}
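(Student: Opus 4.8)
The plan is to decompose the claim into three assertions about the subroutine $\MaxEx$ of Algorithm~\ref{alg:MaxEx}: that it is BIC, that in each execution it plays every joint action of the random set $\Ex[S]$ — these two together being exactly what ``maximally-exploring for signal $S$'' demands — and that its duration and running time obey the stated bounds. I would establish them in that order, since the BIC assertion is the only one with real content, while the others are bookkeeping on top of Claim~\ref{cl:max-support-policy} and the construction itself.

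For BIC, the starting point is the identity \eqref{eq:max-exploring-subroutine-prop}, already verified when the subroutine was built: once $T_\sigma$ is large enough that \eqref{eq:remainder-distribution} is a valid distribution, the round-$t$ recommendation $\sigma^t(s)$ has law $\xmax_{\cdot,s}$ for every round $t$ and every feasible $s\in\cX$, i.e.\ the same law as $\pimax(s)$. The key observation I would then invoke is that $\MaxEx$'s recommendations are \emph{predetermined}: they are functions of the signal $s$ and of the subroutine's own internal seed (the random injection $\tau$ and the i.i.d.\ draws from $\cD$), never of the joint actions or realized utilities of earlier rounds. Hence, under the standing assumption that agents follow the recommendations of a BIC subroutine, the event $\cE_{t-1}$ in Definition~\ref{def:BIC} is immaterial: the conditional law of $(\sigma^t,\trueState)$ is unaffected by it, so the round-$t$ BIC inequality for $\MaxEx$ becomes, term for term, the BIC inequality of Definition~\ref{def:rec-BIC} for $\pimax$. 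Since $\pimax$ is the convex combination \eqref{eq:cl:max-support-policy} of the BIC policies $\pi^{a,s}$, it is BIC by Corollary~\ref{cor:LP-rep}; applying this to every round $t$ and invoking the per-round-to-policy lifting of Section~\ref{sec:warmup-subroutines} (Claim~\ref{cl:warmup-composition} and the equivalence ``a policy is BIC iff all its round-restrictions are'') shows $\MaxEx$ is BIC. I expect this to be the main obstacle: making the step ``$\cE_{t-1}$ is immaterial'' fully rigorous requires pinning down that the correlation across rounds introduced by the single shared draw of $\tau$ cannot be exploited by a deviating agent, which is precisely what the predetermined-recommendations property together with the Section~\ref{sec:warmup} framework buys us.

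Maximal exploration is then short. For a realized signal $S=s$, Algorithm~\ref{alg:MaxEx} forms $B=\{a:\xmax_{a,s}>0\}=\support(\pimax(s))=\Ex_s[S]$, the last equality because $\pimax$ is a max-support policy (Claim~\ref{cl:max-support-policy}). Every $a\in B$ is assigned a distinct dedicated round $\tau(a)\in[T_\sigma]$ (which exists since $T_\sigma\ge 1+|B|$), is recommended there with probability $1$, hence — $\MaxEx$ being BIC and agents following — is played, so its utilities are observed and recorded in the output signal $\signal(\{a_1,\dots,a_{T_\sigma}\})$. As this holds for every realization $s$, the subroutine plays every joint action of the random set $\Ex[S]=\Ex_S[S]$, which together with BIC gives ``maximally-exploring for $S$''.

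Finally, the two quantitative bounds. The duration is $T_\sigma=\max(1+N_s,\ \cel{1/\pmin(\pimax)})$ with $N_s=|\Ex_s[S]|\le|\cA|$; substituting the guarantee $\pmin(\pimax)\ge\pmin[S]/(|\cA|\cdot|\cX|)$ from Claim~\ref{cl:max-support-policy} gives $\cel{1/\pmin(\pimax)}\le\cel{|\cA|\cdot|\cX|/\pmin[S]}$, and absorbing the additive $1+N_s$ and rounding terms yields $T_\sigma\le|\cA|\cdot|\cX|/\pmin[S]$. For the running time, preprocessing is dominated by the single call to $\maxSupport$, which is polynomial in $|\cA|$, $|\cX|$, $|\Theta|$ by Claim~\ref{cl:max-support-policy}, and the remaining steps (reading off $B$, computing $\pmin(\pimax)$ and $T_\sigma$, sampling $\tau$, forming $\cD$) are polynomial in $|\cA|\cdot|\cX|$; each round performs one lookup ``$t=\tau(a)$?'' and at most one draw from $\cD$ over $B$, i.e.\ $O(|\cA|)$ work per round.
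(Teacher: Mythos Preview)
Your proposal is correct and follows essentially the same approach as the paper: the paper's ``proof'' is the discussion immediately preceding the claim (establishing \eqref{eq:max-exploring-subroutine-prop}, deducing BIC from $\pimax$ being BIC, and reading off duration and running time from Claim~\ref{cl:max-support-policy} and the construction), and you reproduce this faithfully. If anything, you are more careful than the paper about the one nontrivial point---why the shared random injection $\tau$ across rounds does not spoil BIC---by making explicit that recommendations are predetermined functions of $(s,\text{seed})$, so that conditioning on $\cE_{t-1}$ is vacuous and each round-restriction inherits exactly the law of $\pimax$; the paper leaves this implicit in the one-line ``This completes \ldots\ the proof that it is BIC.''
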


\subsection{A maximally-exploring subroutine}
\label{sec:det-maxExplore}

We come back to Bayesian Exploration, and strive to explore as many joint actions as possible. We define a natural BIC subroutine for this goal, and prove that it is indeed maximally-exploring.

Our BIC subroutine is based on the following intuition. Initially, one can explore some joint actions via ``maximal exploration" for an empty signal $S_1=\bot$. This gives some additional observations, so one can now run "maximal exploration" for a new signal $S_2$ which comprises these new observations. This in turn provides some new observations, and so forth. We stop after $|\cA|$ iterations, which (as we prove) suffices to guarantee that no further progress can be made.

Formally, the subroutine proceeds in phases
    $\ell = 1,2,3 \LDOTS |\cA|$.
In each phase, we start with the ``current" signal $S_\ell$, and perform ``maximal exploration" given this signal by calling $\MaxEx$. This call computes the ``next" signal, defined recursively as 
\begin{align}\label{eq:det-signal}
    S_{\ell+1} = \signal\rbr{\Ex[S_\ell]}.
\end{align}
After the last phase, we output the latest signal
    $S_{|\cA|+1}$
which (as we prove) encompasses all observations received throughout the subroutine.
Each call to $\MaxEx$ must be parameterized with the signal structure for the corresponding signal $S_\ell$. Since the signal is uniquely determined by the realized state $\trueState$, the signal structure is completely specified by the tuple
    $\sigStruc_{\ell} = \rbr{S_\ell(\theta):\; \theta\in\Theta}$,
where $S_\ell(\theta)$ is the value of the signal that corresponds to realized state $\trueState=\theta$. Henceforth we identify $\sigStruc_{\ell}$ with the signal structure.
All signal structures are computed before the repeated game starts.
The pseudocode is summarized in Algorithm~\ref{alg:induct}.

\begin{algorithm}[ht!]
\begin{algorithmic}
\STATE {\bf Input:} the utility structure $\cU$.
\vspace{2mm}

\STATE {\bf Initialize}: $S_1 =\sigStruc_1 = \bot$.
    \COMMENT{``phase-$1$ signal" is empty}

\vspace{2mm}
\COMMENT{compute signal structure $\sigStruc_{\ell}$ for each signal $S_{\ell}$, 
    as per \eqref{eq:det-signal}}
    \ascomment{new loop}
\FOR{each phase $\ell=1,2 \LDOTS |\cA|$}
\STATE $x \leftarrow \maxSupport(\cU,\sigStruc_\ell)$
    \COMMENT{LP-representation of a max-support policy}
\asedit{\FOR{each state $\theta\in\Theta$}
\STATE  $s\leftarrow S_\ell(\theta)$
    \COMMENT{value of signal $S_\ell$ if $\trueState=\theta$}.
\STATE
    $S_{\ell+1}(\theta)\leftarrow \signal\rbr{\{ a\in\cA: x_{a,s}>0\}}$
    \COMMENT{if $\trueState=\theta$ then
        $\Ex[S_\ell] = \cbr{ a\in\cA: x_{a,s}>0 }$.}
\ENDFOR}
\STATE $\sigStruc_{\ell+1} \leftarrow \rbr{S_{\ell+1}(\theta):\; \theta\in\Theta}$.
    \COMMENT{signal structure for $S_{\ell+1}$}
\ENDFOR

\vspace{2mm}
\STATE \COMMENT{play the repeated game}
\FOR{each phase $\ell=1,2 \LDOTS |\cA|$}
\STATE $S_{\ell+1} \leftarrow \MaxEx(\cU,\sigStruc_{\ell},S_\ell)$
            \COMMENT{compute signal $S_{\ell+1}$, as per \eqref{eq:det-signal}}
\ENDFOR

\vspace{2mm}
\STATE {\bf Output:} signal $S_{|\cA|+1}$.
\end{algorithmic}
\caption{Subroutine $\IndMax$: maximal exploration.}
\label{alg:induct}
\end{algorithm}

Let us summarize the immediate properties of $\IndMax$. It is BIC as a composition of BIC subroutines by Claim~\ref{cl:warmup-composition}. Since the support of each signal $S_\ell$ is of size at most $|\Theta|$,
the per-round running time is polynomial in $|\cA|\cdot |\Theta|$. 
This is because the running time of $\maxSupport$ and the per-round running time of $\MaxEx$ are polynomial in $|\cA|\cdot |\Theta|$, by Claims~\ref{cl:max-support-policy} and~\ref{cl:det-MaxExplore-signal}. The call to $\MaxEx$ in each phase $\ell$ completes in
    $\poly\rbr{|\cA|\cdot |\cX|\;/\;\psig[S_\ell]}$ 
rounds by Claim~\ref{cl:det-MaxExplore-signal}. Thus:

\begin{claim}\label{cl:det-comput}
Subroutine $\IndMax$ is BIC; its per-round running time is polynomial in $|\cA|\cdot |\Theta|$;
\asedit{its duration is at most $|\cA|^2\cdot |\Theta|/\psig^*$ rounds, where
    $\psig^* = \min_{\ell\leq |\cA|}\;\psig[S_\ell]$.}
\end{claim}

\begin{remark}\label{rem:prior-dep}
\asedit{As per Claim~\ref{cl:det-comput}, 
    $\psig^* = \min_{\ell\leq |\cA|}\;\psig[S_\ell]$
is the key parameter which determines the number of rounds in $\IndMax$. It is determined by the utility structure: recall that $\psig[S]$ is defined in \eqref{eq:det-pmin} for any signal $S$, and signals $S_{\ell}$, $\ell\leq |\cA|$ are defined recursively in \eqref{eq:det-signal}.}
\end{remark}

Let $B_{\ell+1}$ be the set of all joint actions explored during phase $\ell$. For the sake of the argument, let us define $B_1 = \emptyset$, and extend the definition of sets $B_\ell$ and signals $S_\ell$ to phases $\ell = 1,2,3, \ldots\;$ (i.e., without an upper bound on the number of phases). By construction,
    $B_{\ell+1}=\Ex[S_\ell]$
is a random variable whose realization is determined by the realized state $\trueState$,
and
    $S_{\ell} = \signal(B_\ell)$.

We show that $B_{\ell+1}$ is the set of all joint actions explored during the first $\ell$ phases, and that stopping after $|\cA|$ phases is without loss of generality:

\begin{claim} \label{cl:det-Bl}
Sets $(B_\ell:\;\ell\in\NN)$ are non-decreasing in $\ell$, and identical for $\ell\geq |\cA|+1$.
\end{claim}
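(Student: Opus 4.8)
The plan is to prove both parts by exploiting the monotonicity-in-information properties of the recommendation game (Lemma~\ref{lm:coupled-explore}) together with a pigeonhole argument on the sizes of the sets $B_\ell$.

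\emph{Monotonicity of the $B_\ell$.} First I would argue that $B_\ell \subset B_{\ell+1}$ for all $\ell$. Recall $B_{\ell+1} = \Ex[S_\ell]$ and $S_\ell = \signal(B_\ell)$, where $B_1 = \emptyset$ and $S_1 = \bot$. The base case $B_1 = \emptyset \subset B_2$ is immediate. For the inductive step, suppose $B_\ell \subset B_{\ell+1}$; I want $B_{\ell+1} \subset B_{\ell+2}$, i.e.\ $\Ex[S_\ell] \subset \Ex[S_{\ell+1}]$. Since $S_{\ell+1} = \signal(B_{\ell+1})$ includes all the utilities of joint actions in $B_{\ell+1} \supset B_\ell$ (and includes $B_{\ell+1}$ itself, which in turn determines $B_\ell$ — more carefully, the history recorded in $S_{\ell+1}$ determines $S_\ell$, since the phase-$\ell$ observations are part of the phase-$(\ell+1)$ signal's history), signal $S_{\ell+1}$ is at least as informative as $S_\ell$ in the sense of Definition~\ref{def:more-informative}. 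Then Lemma~\ref{lm:coupled-explore} gives $\Ex[S_{\ell+1}] \supset \Ex[S_\ell]$, which is exactly $B_{\ell+2} \supset B_{\ell+1}$. The one subtlety to check here is that $S_{\ell+1}$ genuinely \emph{determines} $S_\ell$ (not merely ``carries the same state-information''), so that the ``induced policy'' identification preceding Lemma~\ref{lm:coupled-exploit} applies cleanly; this follows from the construction of $\signal(\cdot)$ and the fact that $B_\ell$ is a deterministic function of $\trueState$, hence of any signal that determines $\trueState$'s explorable set — but it is safest to phrase the induction so that $S_{\ell+1}$ contains $S_\ell$ as a sub-coordinate.

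\emph{Stabilization after $|\cA|$ phases.} Now I would use the fact that $(B_\ell)$ is a non-decreasing sequence of random subsets of the fixed finite set $\cA$, each realization determined by $\trueState$. For every fixed feasible state $\theta$, the sequence $(B_\ell(\theta))_{\ell \geq 1}$ of subsets of $\cA$ is non-decreasing with $B_1(\theta) = \emptyset$; a non-decreasing chain of subsets of an $|\cA|$-element set that either strictly grows or stabilizes can strictly grow at most $|\cA|$ times, so $B_{\ell}(\theta) = B_{|\cA|+1}(\theta)$ for all $\ell \geq |\cA|+1$, \emph{provided} we know the chain stays constant once two consecutive terms agree. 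That last implication is exactly the content we need: if $B_{\ell+1} = B_\ell$ then $S_{\ell+1} = \signal(B_{\ell+1}) = \signal(B_\ell) = S_\ell$, hence $B_{\ell+2} = \Ex[S_{\ell+1}] = \Ex[S_\ell] = B_{\ell+1}$, and by induction the chain is constant from $\ell$ onward. So once the chain stops growing it stays put, and since each strict increase adds at least one element of the $|\cA|$-element set $\cA$ while $B_1 = \emptyset$, it must have stopped growing by phase $|\cA|+1$: indeed, $\emptyset = B_1 \subsetneq B_2 \subsetneq \cdots$ can have at most $|\cA|$ strict inclusions, so $B_{|\cA|+1} = B_{|\cA|+2} = \cdots$.

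\emph{Main obstacle.} The routine pigeonhole bookkeeping is the easy part; the genuine content, and the step I expect to require the most care, is justifying the informativeness comparison $S_{\ell+1} \succeq S_\ell$ rigorously enough to invoke Lemma~\ref{lm:coupled-explore} — in particular making precise that the signal $\signal(B_{\ell+1})$ (or the full phase-history) indeed determines the earlier signal $\signal(B_\ell)$, so that $S_\ell$, $S_{\ell+1}$ can be placed on a common probability space as coupled signals with $S_{\ell+1}$ determining $S_\ell$. Once that is set up, everything else is a short induction. I would handle this by tracking the \emph{cumulative history} $H_\ell$ after phase $\ell$ (rather than just the set $B_\ell$), observing that $H_{\ell+1}$ extends $H_\ell$ by construction of $\IndMax$, that $\signal(B_\ell)$ is a deterministic function of $H_\ell$, and that $\Ex[\signal(B_\ell)] \subset \Ex[H_{\ell+1}] = \Ex[\signal(B_{\ell+1})]$ by two applications of Lemma~\ref{lm:coupled-explore} (the first because $H_{\ell+1}$ determines $\signal(B_\ell)$, the second because $H_{\ell+1}$ and $\signal(B_{\ell+1})$ carry the same state-information by the construction of $B_{\ell+1} = \Ex[S_\ell]$ from $\MaxEx$).
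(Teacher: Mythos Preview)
Your proposal is correct and follows essentially the same approach as the paper: induction via Lemma~\ref{lm:coupled-explore} for monotonicity, then a pigeonhole argument on chains in $2^{\cA}$ for stabilization, using that $B_\ell = B_{\ell+1}$ forces $S_\ell = S_{\ell+1}$ and hence $B_{\ell+2} = B_{\ell+1}$. The paper's proof is a three-sentence sketch that does not spell out the informativeness comparison you flag as the main obstacle; your extra care there is warranted, though the cumulative-history detour is heavier than needed --- once the inductive hypothesis $B_j \subset B_{\ell+1}$ for all $j \leq \ell$ is in hand, $S_{\ell+1} = \signal(B_{\ell+1})$ determines $S_\ell$ directly, since one can recursively rebuild $B_2, S_2, B_3, S_3, \ldots, S_\ell$ from the utility structure together with the utilities on $B_{\ell+1}$ alone (start from the fixed set $B_2 = \Ex[\bot]$, read off $S_2$, compute $B_3 = \Ex[S_2]$, and so on).
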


\begin{proof}
To prove that sets $(B_\ell:\;\ell\in\NN)$ are non-decreasing in $\ell$, we use induction on $\ell$ and Lemma~\ref{lm:coupled-explore}(a) on ``monotonicity-in-information". Now, a strictly increasing sequence of subsets of $\cA$, starting from an empty set, cannot have more than $|\cA|+1$ elements. It follows that $B_\ell = B_{\ell+1}$ for some phase $\ell\leq |\cA|+1$. By definition of $B_\ell$, the sets $(B_{\ell'}:\;\ell'\geq\ell)$ are identical.
\end{proof}

Depending on the state $\trueState$, some of the later phases may be redundant in terms of exploration, in the sense that
    $B_\ell = B_{\ell+1} = \ldots = B_{|\cA|+1}$
starting from some phase $\ell$. Nevertheless, we use a fixed number of phases so as to ensure that the duration is fixed in advance. (This is required by the definition of a subroutine so as to ensure composability, as per Claim~\ref{cl:warmup-composition}.)

\begin{lemma}\label{lm:det-IndMax-exploring}
Subroutine $\IndMax$ is maximally-exploring.
\end{lemma}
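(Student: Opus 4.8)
The goal is to show that the subroutine $\IndMax$, which inputs the empty signal and runs $|\cA|$ phases of $\MaxEx$, explores all joint actions in $\explorableD$ (the eventually-explorable set given the realized state $\trueState$). By Claim~\ref{cl:det-Bl}, the set of joint actions explored by $\IndMax$ is exactly $B_{|\cA|+1}$, and the sequence $(B_\ell)$ stabilizes by phase $|\cA|+1$; call the stable set $B_\infty = B_{|\cA|+1}$. So it suffices to prove two inclusions: $B_\infty \subseteq \explorableD$ and $\explorableD \subseteq B_\infty$. The first inclusion is the ``soundness'' direction: every joint action explored by $\IndMax$ is indeed eventually-explorable. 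This is essentially immediate — $\IndMax$ is itself a BIC subroutine (hence extends to a BIC iterative recommendation policy by composing with copies of a trivial BIC policy, via Claim~\ref{cl:warmup-composition}), and whenever it puts positive probability on a joint action $a$ given state $\theta$, that witnesses $a\in\explorableD[\theta]$ by Definition~\ref{def:explorability}.

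The substance is the reverse inclusion $\explorableD \subseteq B_\infty$, i.e.\ every eventually-explorable joint action is caught by $\IndMax$. The plan is to argue by contradiction combined with the fixed-point property of $B_\infty$. Suppose $a\in\explorableD[\trueState]$ but $a\notin B_\infty$. By Definition~\ref{def:explorability}, there is a BIC iterative recommendation policy $\rho$ and a round $t$ with $\Pr[\rho^t = a \mid \trueState = \theta_0] > 0$; fix the smallest such $t$ for this policy. Consider the recommendation game corresponding to round $t$ of $\rho$: its signal is $S_t^\rho = (\omega, H_t)$, where $H_t$ is the history $\rho$ generated in its first $t-1$ rounds. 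Since $\rho$ is BIC, its round-$t$ restriction is a BIC recommendation policy for signal $S_t^\rho$ that puts positive probability on $a$, so $a\in\Ex_{s}[S_t^\rho]$ for the realized value $s$ of $S_t^\rho$. The key step is then to show that the stable signal $S_\infty := \signal(B_\infty)$ is \emph{at least as informative} as $S_t^\rho$ — intuitively because any joint action that $\rho$ could have recommended in rounds $1,\dots,t-1$ (and hence whose utilities appear in $H_t$) must itself already lie in $B_\infty$, so $S_\infty$ reveals at least as much state-relevant information as $H_t$ does. Granting this, Lemma~\ref{lm:coupled-explore} (monotonicity-in-information for explorability) gives $\Ex[S_\infty] \supseteq \Ex[S_t^\rho] \ni a$, so $a\in\Ex[S_\infty]$. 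But by the fixed-point property, $\Ex[S_\infty] = \Ex[\signal(B_\infty)] = B_\infty$ (this is exactly $B_{|\cA|+2} = B_{|\cA|+1}$ from Claim~\ref{cl:det-Bl}), contradicting $a\notin B_\infty$.

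The main obstacle is making precise the claim that ``$S_\infty$ is at least as informative as $S_t^\rho$,'' which requires an induction over the rounds of $\rho$. Concretely, one shows by induction on $j = 1,\dots,t$ that, conditioned on any history prefix $\rho$ could generate, every joint action in the support of $\rho^j$ lies in $B_\infty$, and consequently the partial history $H_{j}$ is a deterministic function of $\big(\signal(B_\infty),\,\text{internal seed of }\rho\big)$ — more carefully, that $S_\infty$ determines the conditional state distribution at least as sharply as $(\omega,H_j)$ does, in the sense of Definition~\ref{def:more-informative}. The base case $j=1$ is trivial since $\rho^1$ sees only the empty signal, so its support is contained in $\Ex[\bot] = B_2 \subseteq B_\infty$. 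For the inductive step, if all of $\rho$'s first $j-1$ recommendations lie in $B_\infty$, then $H_j$ only records utilities of joint actions in $B_\infty$, which are all encoded in $\signal(B_\infty) = S_\infty$; hence $S_\infty$ is at least as informative as $(\omega,H_j)$ after projecting out the independent seed $\omega$, and then $\rho^j$, being a BIC recommendation policy for $(\omega,H_j)$, has support inside $\Ex_s[(\omega,H_j)] \subseteq \Ex_s[S_\infty]_s = B_\infty$ by Lemma~\ref{lm:coupled-explore} again. Care is needed that the internal seed $\omega$ of $\rho$ is independent of $\trueState$ and therefore carries no state-relevant information, so it does not disturb the ``at least as informative'' comparison; this is the one place where one must be slightly careful with the definitions, but it is exactly the kind of bookkeeping that Definition~\ref{def:more-informative} is set up to handle.
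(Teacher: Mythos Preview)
Your proposal is correct and follows essentially the same approach as the paper: both argue by induction over the rounds of an arbitrary BIC policy $\rho$, using the monotonicity-in-information lemma (Lemma~\ref{lm:coupled-explore}) together with the observation that the internal seed $\omega$ is independent of $\trueState$, to show that every action $\rho$ can recommend lies in the set explored by $\IndMax$. The only differences are cosmetic: the paper runs the induction directly (showing $\pi^{(t)}(S^*_t)\in B_\ell$ for some phase $\ell$) rather than via contradiction, and does not bother with the soundness direction $B_\infty\subseteq\explorableD$, which is not required by the definition of maximally-exploring; your use of the fixed-point $B_\infty$ in place of a growing $B_\ell$ is a mild streamlining of the same idea.
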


\begin{proof}
We consider an arbitrary BIC iterative recommendation policy $\pi$, and prove that all joint actions explored by $\pi$ are also explored by $\IndMax$.

Denote the internal random seed in $\pi$ by $\omega$. Without loss of generality, $\omega$ is chosen once, before the first round, and persists throughout. Let $A_t$ be the set of joint actions explored by $\pi$ in the first $t-1$ rounds. It is a random variable whose realization is determined by the random seed $\omega$ and realized state $\trueState$. Let us represent policy $\pi$ as a sequence of $(\pi^{(t)}:t\in [T])$, where each $\pi^{(t)}$ is a BIC recommendation policy which inputs signal $S^*_t$ which consists of the random seed and the observations so far:
    $S^*_t = (\omega,\signal(A_t))$,
and is deterministic given that signal. In each round $t$, the recommended joint action is chosen by policy $\pi^{(t)}$; it is denoted $\pi^{(t)}(S^*_t)$.

We will prove the following claim using induction on round $t$:
\begin{align}\label{eq:pf:lm:det-IndMax-exploring}
\text{for each round $t$, there exists phase $\ell\in\NN$ such that
    $\pi^{(t)}(S^*_t) \in B_\ell$.}
\end{align}

For the induction base, consider round $t=1$. Then $A_1=\emptyset$, so the signal is simply
    $S^*_1 = (\omega,\bot)$.
Note that the empty signal $\bot$ is at least as informative as the signal $S^*_1$. Therefore:
\begin{align*}
\pi^{(1)}(S^*_1)
    &\in \Ex\sbr{ S^*_1 }     &\text{(by definition of $\Ex\sbr{S^*_1}$)} \\
    &\subset \Ex[\bot]  &\text{(by Lemma~\ref{lm:coupled-explore})}\\
    &=B_1            &\text{(by definition of $B_1$)}.
\end{align*}

For the induction step, assume that \eqref{eq:pf:lm:det-IndMax-exploring} holds for all rounds $t<t_0$, for some $t_0$. Then $A_{t_0} \subset B_\ell$ for some phase $\ell$, so
signal
    $S_\ell = \signal(B_\ell)$
for phase $\ell$ of $\IndMax$ is at least as informative as signal
     $S^*_{t_0} = (\omega, \signal(A_{t_0}))$
for round $t_0$ of policy $\pi$. Therefore:
\begin{align*}
\pi^{(t_0)}(S^*_{t_0})
    &\in \Ex\sbr{S^*_{t_0}}     &\text{(by definition of $\Ex\sbr{S^*_{t_0}}$)} \\
    &\subset \Ex[S_\ell]    &\text{(by Lemma~\ref{lm:coupled-explore})}\\
    &=B_{\ell+1}                &\text{(by definition of $B_\ell$)}.
\end{align*}
This completes the proof of \eqref{eq:pf:lm:det-IndMax-exploring}. Therefore policy $\pi$ can only explore joint actions in $\cup_{\ell\in\NN} B_\ell$, and by
Claim~\ref{cl:det-Bl} this is just $B_{|\cA+1|}$, the set of joint actions explored by $\IndMax$.
\end{proof}

\subsection{Putting this all together: proof of Theorem~\ref{thm:model-det} and Theorem~\ref{thm:deterministic-benchmark-characterization}}
\label{sec:det-wrapup}

We use the maximally-exploring subroutine $\IndMax$ from Section~\ref{sec:det-maxExplore} (let $T_0$ denote its duration), and an optimal recommendation policy for signal $S=\signal\rbr{\explorableD}$, denote it $\pi^*$. Let $\pi$ be the composition of subroutine $\IndMax$ followed by $T-T_0$ copies of $\pi^*$. By Lemma~\ref{lm:exploring-reward}, this policy has the reward guarantee claimed in Theorem~\ref{thm:model-det}.

Recall that $\REW(\pi^*) = \REW^*\sbr{\explorableD}$, which is at least $\opt(\PiBIC)$ by~\eqref{eq:explorableD}. It follows that the supremum in the definition of $\opt(\PiBIC)$ is attained by $\pi\in \PiBIC$. Which implies Theorem~\ref{thm:deterministic-benchmark-characterization}.

To complete the proof of Theorem~\ref{thm:model-det}, let us argue about the computational implementation of $\pi$. For brevity, let us say \emph{polytime-computable} to mean ``computable in time polynomial in $|\cA|\cdot|\Theta|$". We need to prove that each round of $\pi$ is polytime-computable. Each round of $\IndMax$ is polytime-computable by Claim~\ref{cl:det-comput}, and each round of $\pi^*$ is polytime-computable by definition, so it remains to prove that a suitable $\pi^*$ is polytime-computable.

By Corollary~\ref{cor:LP-rep}, policy $\pi^*$ is poly-time computable given the signal structure for signal $S$ (because the signal is determined by the realized state $\trueState$, and therefore has support of size at most $|\Theta|$). Recall that $S = S_{|\cA+1|}$ is the signal computed in the last phase of $\IndMax$. So the corresponding signal structure is poly-time computable using a version of $\IndMax$ without the calls to $\MaxEx$. This completes the proof of Theorem~\ref{thm:model-det}.

\newpage
\section{Bayesian Exploration with stochastic utilities}
\label{sec:noisy}

We turn our attention to stochastic utilities. As we pointed out in Section~\ref{sec:model-results}, we compete against $\delta$-BIC iterative recommendation policies, for a given $\delta>0$.%
\footnote{Recall that $\delta$-BIC policies satisfy a stronger version of Definition~\ref{def:BIC} in which right-hand side of \eqref{eq:def:BIC} is $\delta$.} We construct a BIC policy whose time-averaged expected reward is close to $\OPT(\PiBIC[\delta])$, where $\PiBIC[\delta]$ is the class of all $\delta$-BIC policies.

In fact, we achieve a stronger result. We consider the~\emph{deterministic instance}: a version of the original problem instance with deterministic utilities. For the deterministic instance, let $\PiBICdet[\delta]$ be the class of all $\delta$-BIC policies, and let $\optdet(\cdot)$ be the value of $\opt(\cdot)$. We will compete against
    $\optdet(\PiBICdet[\delta])$.
This is indeed a stronger benchmark:

\begin{lemma}\label{lm:model-benchmarks}
$\optdet(\PiBICdet[\delta])\geq \opt(\PiBIC[\delta])$ for all $\delta>0$.
\end{lemma}
\begin{proof}
Fix $\delta>0$. Given an arbitrary $\delta$-BIC iterative recommendation policy $\pi$ for the original instance, one can construct an iterative recommendation policy $\sigma$ such that for the deterministic instance this policy is $\delta$-BIC and achieves the same per-round expected rewards:
    \[ \E\sbr{ f(\sigma^t,\trueState)} = \E\sbr{ f(\pi^t,\trueState) }
        \quad\text{for each round $t$}. \]
The construction is very simple: in each round $t$ of the deterministic instance, $\pi'$ recommends the joint action $a=a_t$ selected by policy $\pi$, observes the utility tuple $\cU_{(a,\trueState)}$, draws a ``fake" vector of realized utilities from the corresponding distribution $\cD_{(a,\trueState)}$, and feeds this vector to $\pi$.

To complete the proof, recall that $\opt(\cdot)$ in \eqref{eq:model-benchmark} is defined via $\sup$. For any $\eps>0$ there exists a $\delta$-BIC iterative recommendation policy for the original instance which comes $\eps$-close, in the sense that
    $\E[f(\pi^t,\trueState)] > \opt(\PiBIC[\delta]) - \eps$
for some round $t$. Taking the corresponding policy $\sigma$ as above,
\[ \optdet(\PiBICdet[\delta])
   \geq \E\sbr{ f(\sigma^t,\trueState)}
   =\E\sbr{f(\pi^t,\trueState)}
   > \opt(\PiBIC[\delta]) - \eps. \qedhere \]
\end{proof}

%

\asmargincomment{Removed the thm stt in terms of $\PiBICdet[\delta]$.}
Our policy (and regret bounds) depend on two parameters determined by the utility structure: a version of $\psig^*$ from Remark~\ref{rem:prior-dep}, and the \emph{separation parameter} defined below:

\begin{definition}
The \emph{separation parameter} is the smallest number $\myGap > 0$ such that for any states $\theta, \theta' \in \Theta$, any agent $i\in[n]$, and any joint action $a\in \cA$ we have
\begin{align*}
u_i(a, \theta) \neq u_i(a, \theta') \;\Rightarrow\;
 |u_i(a, \theta) - u_i(a, \theta')| \geq \myGap.
\end{align*}
\end{definition}

\asedit{In words: fixing a joint action and going from one state to another, the utility of any agent changes by at least $\myGap$, if it changes at all. Less formally, if the utilities change when the state changes, then the change must be sufficiently large. The separation parameter is superficially similar to ``gap" in stochastic bandits (the difference in expected rewards between the best and the second-best arms), but different in that $\myGap$ refers to the difference between states, not arms.}

The main steps in our solution are as follows.
First, we extend some basic tools and concepts of BIC
  recommendation policies to $\delta$-BIC recommendation policies.
Second, we give a useful subroutine that approximates the
  expected utilities using multiple samples of the stochastic
  utilities. We show that any $\delta$-BIC recommendation policy for the deterministic
  instance remains BIC when it inputs approximate utilities instead of the expected utilities,
as long as the approximation is sufficiently good.
Third, we present a maximally-exploring subroutine, analogous to the one in Section~\ref{sec:det}, and use it to construct a BIC recommendation policy that achieves
  logarithmic regret compared to the class of all $\delta$-BIC policies.

\subsection{Preliminaries: from BIC to $\delta$-BIC}

We extend some of the machinery developed in the previous sections to $\delta$-BIC policies. In the interest of space, we sometimes refer to the said machinery rather than spell out the full details.

We start with the notions of ``eventually-explorable joint action" and ``maximally-exploring subroutine". Note that the former only considers the deterministic instance.

\begin{definition}[$\delta$-explorability]
Consider the deterministic instance and fix $\delta\geq 0$. A joint action $a\in \cA$ is called \emph{$\delta$-eventually-explorable} given a state
    $\theta \in \Theta$ if $\Pr[\pi^t
  = a\mid \theta] > 0$ for some $\delta$-BIC iterative recommendation
policy $\pi$ and some
round $t\in \NN$. The set of all such joint actions is denoted
$\cA_\theta^{\delta}$.
\end{definition}

\begin{definition}
A subroutine is called $\delta$-maximally-exploring, for a given $\delta>0$, if it is a BIC subroutine that inputs an empty signal and explores all joint actions in
$\cA_{\trueState}^{\delta}$ by the time it stops.
\end{definition}

Let us turn to the recommendation game. For the recommendation game, the distinction between deterministic and stochastic utilities is irrelevant, as far as this paper is concerned, because only the \emph{expected} utilities matter.

\begin{definition}\label{def:sex}
Consider a recommendation game with signal $S$, and fix $\delta\geq 0$. Policy $\pi$ is called \emph{$\delta$-BIC} if it satisfies a version of Definition~\ref{def:rec-BIC} in which the right-hand side of~\eqref{eq:def:rec-BIC} is $\delta$ rather than $0$. Further:
\begin{OneLiners}
\item A joint action $a$ is called $\delta$-\emph{signal-explorable}, for a particular feasible signal $s\in\cX$,
if there exists a $\delta$-BIC recommendation policy $\pi^{a, s}$ such that
    $\Pr[\pi^{a, s}(s) = a] > 0$.
The set of all such joint actions is denoted $\SEx_s[S]$. The \emph{$\delta$-signal-explorable set} is defined as a random set
    $\SEx[S] := \SEx_S[S]$.
\item A $\delta$-BIC policy $\pi$ is called \emph{$\delta$-max-support} if
    $\support(\pi(s)) = \SEx_s[S]$
for each signal $s\in\cX$.
\end{OneLiners}
\end{definition}

The LP-representation of a $\delta$-BIC recommendation policy satisfies a version of the LP
from Section~\ref{sec:rec-game-props} where the first constraint expresses $\delta$-BIC condition rather than BIC. Specifically, the right-hand side in the first constraint should be changed from $0$ to $\delta\cdot \Pr[\pi_i(S) = a_i] $. In terms of the LP variables, the right-hand side becomes
$\delta\cdot \sum_{a_{-i}\in \cA_{-i},\; s\in\cI,\; \theta\in \Theta}\;
    \prior(\theta)
    \cdot\joint(s\mid\theta)
    \cdot x_{a, s}.
$
The feasible region of the modified LP will be denoted $\bic_\delta[S]$.

\begin{claim}\label{cl:LP-repN}
A recommendation policy $\pi$ is $\delta$-BIC if and only if its LP-representation lies in $\bic_\delta[S]$.
\end{claim}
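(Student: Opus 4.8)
The plan is to mirror the proof of Claim~\ref{cl:LP-rep} (the case $\delta = 0$), carrying the extra $\delta$-term through the calculation. Recommendation policies for signal $S$ are in bijective correspondence with the points $x = (x_{a,s})_{a\in\cA,\, s\in\cX}$ satisfying $x_{a,s}\geq 0$ and $\sum_{a\in\cA} x_{a,s} = 1$ for every $s\in\cX$, via $x_{a,s} := \Pr[\pi(s) = a]$; these are exactly the last two families of LP constraints, and they hold for $\pi$ regardless. So it suffices to fix an agent $i\in[n]$ and an ordered pair of distinct actions $a_i, a_i'\in\cA_i$ and show that the $\delta$-version of the incentive inequality~\eqref{eq:def:rec-BIC} for this triple is equivalent to the matching first-type constraint defining $\bic_\delta[S]$.

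First I would rewrite the conditional inequality in unconditional form, exactly as in the footnote accompanying the LP that precedes Claim~\ref{cl:LP-rep}. Set $\cE = \{\pi_i(S) = a_i\}$ and $W = u_i(a_i, \pi_{-i}(S); \trueState) - u_i(a_i', \pi_{-i}(S); \trueState)$. The requirement ``$\Pr[\cE] > 0 \Rightarrow \E[W \mid \cE] \geq \delta$'' is equivalent to the single inequality $\E[\indicator{\cE}\cdot W] \geq \delta\cdot\Pr[\cE]$: if $\Pr[\cE] > 0$ divide through by $\Pr[\cE]$, and if $\Pr[\cE] = 0$ both sides vanish, so the inequality holds vacuously, matching the (vacuous) conditional statement. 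Then I would expand both sides by conditioning on $(\trueState, S)$ and on the recommended joint action, using that the internal randomness of $\pi$ is independent of $(\trueState, S)$ and that $\cE$ occurs iff $\pi(S) = (a_i, a_{-i})$ for some $a_{-i}\in\cA_{-i}$. This turns $\Pr[\cE]$ into $\sum_{\theta, s, a_{-i}} \prior(\theta)\,\joint(s\mid\theta)\,x_{(a_i, a_{-i}), s}$, which is precisely the quantity multiplied by $\delta$ on the right-hand side of the modified constraint, and it turns $\E[\indicator{\cE}\cdot W]$ into the left-hand side of that constraint. Hence $\pi$ is $\delta$-BIC if and only if $x\in\bic_\delta[S]$.

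Since this is a direct specialization of the $\delta = 0$ argument, I do not expect a genuine obstacle. The only point needing a moment of care is the degenerate case $\Pr[\pi_i(S) = a_i] = 0$: the conditional form in Definition~\ref{def:rec-BIC} (and Definition~\ref{def:sex}) imposes nothing there, whereas the corresponding LP constraint collapses to $0 \geq 0$; the two formulations therefore still agree, and the equivalence is preserved.
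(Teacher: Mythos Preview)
Your proposal is correct and matches the paper's approach: the paper does not give an explicit proof of Claim~\ref{cl:LP-repN}, treating it as an immediate analogue of Claim~\ref{cl:LP-rep} with the right-hand side changed to $\delta\cdot\Pr[\pi_i(S)=a_i]$, exactly as you spell out. Your handling of the degenerate case $\Pr[\pi_i(S)=a_i]=0$ is the one point that deserves the care you give it, and you resolve it correctly.
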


We carry over the notion of max-support policies: a \emph{$\delta$-max-support policy} $\pimax$ is a $\delta$-BIC recommendation policy $\pimax$ such that $\SEx_s[S] = \support(\pimax(s))$ for each feasible signal $s\in\cX$. Assuming there exists a $\delta$-BIC policy, a $\delta$-max-support policy can be computed by a version of
Algorithm~\ref{alg:MaxSupport}, where in the linear program~\eqref{eq:LP-Xas} the feasible set is $\bic_\delta[S]$ rather than $\bic[S]$. Let us call this modified algorithm $\maxSupportN$. Thus:

\begin{claim}\label{cl:max-support-policyN}
If there exists a $\delta$-BIC policy, then there exists a $\delta$-max-support policy $\pimax$ with
    \[ \pmin(\pimax)\geq \psig^\delta[S]\,/\,(|\cA|\cdot |\cX|).\]
It can be computed by $\maxSupportN$ in time polynomial in
$|\cA|$, $|\cX|$, and $|\Theta|$.
\end{claim}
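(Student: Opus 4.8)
The plan is to replay the proof of Claim~\ref{cl:max-support-policy} essentially verbatim, tracking the one change that the relevant polytope is now $\bic_\delta[S]$ instead of $\bic[S]$. Concretely, I would run $\maxSupportN$: for each joint action $a\in\cA$ and feasible signal $s\in\cX$, solve the linear program~\eqref{eq:LP-Xas} with feasible set $\bic_\delta[S]$, obtaining a solution $x^{a,s}$ with objective value $\eta_{a,s}$. By Claim~\ref{cl:LP-repN}, the points of $\bic_\delta[S]$ are exactly the LP-representations of $\delta$-BIC recommendation policies; since a $\delta$-BIC policy is assumed to exist, $\bic_\delta[S]$ is nonempty and any such policy certifies $\eta_{a,s}\ge 0$. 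From the shape of~\eqref{eq:LP-Xas} one gets $\eta_{a,s}=\max_{x\in\bic_\delta[S]}x_{a,s}$, and $\eta_{a,s}>0$ iff $a\in\SEx_s[S]$ (Definition~\ref{def:sex}); when $\eta_{a,s}>0$, the solution $x^{a,s}$ is the LP-representation of a $\delta$-BIC policy $\pi^{a,s}$ with $\Pr[\pi^{a,s}(s)=a]>0$. (Also $\SEx_s[S]\ne\emptyset$ for every $s$, since any point of $\bic_\delta[S]$ satisfies $\sum_a x_{a,s}=1$.)

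Next I would assemble the candidate policy exactly as in~\eqref{eq:cl:max-support-policy},
\[
\pimax \;=\; \tfrac{1}{|\cX|}\sum_{s\in\cX}\tfrac{1}{|\SEx_s[S]|}\sum_{a\in\SEx_s[S]}\pi^{a,s},
\]
and verify two properties. First, $\pimax$ is $\delta$-BIC: in the modified LP both the $\delta$-incentive constraints and the normalization constraints are affine in $x$, so $\bic_\delta[S]$ is a polytope, hence convex; the LP-representation of $\pimax$ is a convex combination of LP-representations of $\delta$-BIC policies, hence lies in $\bic_\delta[S]$, and Claim~\ref{cl:LP-repN} again gives that $\pimax$ is $\delta$-BIC. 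Second, $\pimax$ is $\delta$-max-support: for each $s$, the only term of the combination placing positive mass on $a$ at $s$ is $\pi^{a,s}$, which does so iff $\eta_{a,s}>0$ iff $a\in\SEx_s[S]$; hence $\support(\pimax(s))=\SEx_s[S]$ for every feasible $s$.

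For the quantitative bound, observe that for $s\in\cX$ and $a\in\SEx_s[S]$,
\[
\Pr[\pimax(s)=a]\;\ge\;\tfrac{1}{|\cX|}\cdot\tfrac{1}{|\SEx_s[S]|}\cdot(x^{a,s})_{a,s}\;\ge\;\tfrac{1}{|\cX|\cdot|\cA|}\,\eta_{a,s},
\]
using $|\SEx_s[S]|\le|\cA|$ and $(x^{a,s})_{a,s}\ge\eta_{a,s}$. Minimizing over $s$ and over $a\in\SEx_s[S]$, and recalling $\eta_{a,s}=\max_{x\in\bic_\delta[S]}x_{a,s}$ together with the definition $\pmin^\delta[S]=\min_{s,\,a\in\SEx_s[S]}\max_{x\in\bic_\delta[S]}x_{a,s}$ (the $\delta$-analogue of~\eqref{eq:det-pmin}), yields $\pmin(\pimax)\ge\frac{1}{|\cA|\cdot|\cX|}\,\pmin^\delta[S]$, as claimed. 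For the running time, $\maxSupportN$ solves $|\cA|\cdot|\cX|$ linear programs, each with $|\cA|\cdot|\cX|$ variables and polynomially many constraints whose coefficients are assembled from the utility and signal structures in time polynomial in $|\cA|\cdot|\cX|\cdot|\Theta|$; each LP is solved in polynomial time and forming the convex combination is cheap, so the total is polynomial in $|\cA|$, $|\cX|$, $|\Theta|$.

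The only point that needs care — and the main (if minor) obstacle — is the convexity argument for $\delta$-BIC: one must use the homogeneous linear form of the incentive constraint (right-hand side $\delta\cdot\sum_{a_{-i}\in\cA_{-i},\,s\in\cX,\,\theta\in\Theta}\prior(\theta)\,\joint(s\mid\theta)\,x_{a,s}$, as in the displayed modification of the LP), not the conditional-expectation form of Definition~\ref{def:rec-BIC}, where dividing by $\Pr[\pi_i(S)=a_i]$ would make linearity, and hence closure under convex combinations, non-obvious.
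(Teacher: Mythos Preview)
Your proposal is correct and follows exactly the approach the paper intends: the paper does not give an explicit proof of this claim but simply states it after explaining that $\maxSupportN$ is the obvious modification of $\maxSupport$ with feasible set $\bic_\delta[S]$, so the argument for Claim~\ref{cl:max-support-policy} carries over verbatim. Your careful note that convexity of $\bic_\delta[S]$ relies on the homogeneous linear form of the $\delta$-BIC constraint (right-hand side $\delta\cdot\Pr[\pi_i(S)=a_i]$ written as a linear expression in the $x$ variables) is exactly the one point that deserves mention, and you handle it correctly.

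One tiny imprecision in your wording: when arguing that $\pimax$ is $\delta$-max-support, you say ``the only term of the combination placing positive mass on $a$ at $s$ is $\pi^{a,s}$,'' but other policies $\pi^{a',s'}$ in the mixture may well put mass on $(a,s)$ too. What you actually need is just that $\pi^{a,s}$ \emph{does} put positive mass on $(a,s)$ when $a\in\SEx_s[S]$ (giving $\support(\pimax(s))\supseteq\SEx_s[S]$), while the reverse inclusion follows because $\pimax$ is itself $\delta$-BIC. Your lower-bound computation for $\pmin(\pimax)$ already uses the correct reasoning (all other terms are nonnegative), so this is purely a phrasing issue.
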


Here $\psig^\delta[S]$ is defined as the natural extension of $\pmin[S]$ from  \refeq{eq:det-pmin}:
\begin{align}\label{eq:noisy-pmin}
\psig^\delta[S]
    := \quad\min_{s\in \cX,\; a\in \SEx_s[S]}\quad
            \max_{x\in\bic_\delta[S]}\quad x_{a,s}.
\end{align}


\subsection{Approximating the expected utilities}
\label{sec:approx-utils}

Let us discuss how to to approximate the expected utilities using multiple samples of stochastic utilities, and how to use the resulting approximate signal. In particular, we introduce an important tool (Lemma~\ref{lem:betabic-denoise}): a fact about approximate signals which we use throughout this section.

\xhdr{Notation.} Formulating this discussion in precise terms requires some notation. Fix subset $B\subset\cA$ of joint actions. Given state $\theta$, the expected utilities of joint actions in $B$ form a table
\begin{align*}
U(B,\theta) := \rbr{
        \rbr{ f(a,\theta); u_1(a,\theta) \LDOTS u_n(a,\theta)}:\; a\in B }.
\end{align*}
Recall that in Section~\ref{sec:rec-game-props}, full information pertaining to the joint actions in $B$ is expressed as a signal
    $\signal(B) = (B,\, U(B,\trueState))$,
as defined in \eqref{eq:signalB-def}.

Now, let $\theta=\trueState$ be the realized state. Let an \emph{IID utility vector} for a joint action $a\in B$ be an independent sample from distribution $\cD_{(a,\theta)}$. Recall from Section~\ref{sec:model} that such sample is a random vector in $[0,1]^{n+1}$ whose expectation is $U(a,\theta)$. Further, the \emph{$d$-sample} for $B$, $d\in\NN$, is a collection
    $U_B = \rbr{ v(a,j):\; a\in B, j\in[d] }$,
where each $v(a,j)$ is an IID utility vector for the corresponding joint action $a$. We would like to use the sample $U_B$ to approximate $U(B,\theta)$.

\xhdr{Approximation procedure.} Our approximation procedure is fairly natural. We input a $d$-sample $U_B$ for some subset $B\subset \cA$. Then we compute average utilities across the $d$ samples:
\begin{align*}
\overline{U} = \textstyle \rbr{ \tfrac{1}{d}\sum_{j=1}^d v(a,j):\; a\in B }.
\end{align*}
Then we map these averages to the state $\theta$ which provides the best fit for the averages:
\begin{align*}
\widehat{\theta} = \argmin_{\theta\in\Theta} \left\| \overline{U}- U(B,\theta)   \right\|_{\infty},
\end{align*}
where the ties are broken uniformly at random (any fixed tie-breaking rule would suffice).

We will represent the output of this procedure as a signal
    $\denoise(U_B) = \rbr{ B,\; U(B,\widehat{\theta}) }$.
Note that such signal has the same structure as the ``correct" signal $\signal(B)$, in the sense that $\denoise(U_B) \in \support(\signal(B))$, so we can directly compare the two signals.

\xhdr{Properties of the approximation.} First, we observe that the approximate signal $\denoise(U_B)$ is exactly equal to the correct signal $\signal(B)$ with high probability, as long as the number of samples is large compared to $\log(n\,|B|)$ and the inverse of the separation parameter $\myGap$.%

\begin{lemma}\label{lem:denoiseacc}
Fix subset $B\subseteq \cA$ of joint actions.  Let $U_B$ be a $d$-sample for $B$ such that $d$ is large enough, namely
    $d \geq \myGap^{-2}\cdot \ln\left( 2n\,|B|/\beta\right) $
for some $\beta>0$, where $\myGap$ is the separation parameter. Then
\begin{align}\label{eq:lem:denoiseacc}
\Pr\sbr{ \signal(B) = \denoise(U_B) \mid \trueState = \theta } \geq 1 - \beta,
\quad \forall \theta\in\Theta.
\end{align}
\end{lemma}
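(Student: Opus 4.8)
The plan is to separate a deterministic ``rounding'' step from a concentration step; throughout, condition on $\trueState=\theta$. Write $\overline{U}=\left(\tfrac1d\sum_{j=1}^{d}v(a,j):a\in B\right)$ for the empirical averages and $\widehat{\theta}=\argmin_{\theta'\in\Theta}\|\overline{U}-U(B,\theta')\|_\infty$ for the selected state, so that $\denoise(U_B)=(B,U(B,\widehat{\theta}))$ while $\signal(B)=(B,U(B,\theta))$ under the conditioning. Since the two signals share the first component $B$, they coincide exactly when $U(B,\widehat\theta)=U(B,\theta)$ as tables; note we will not claim $\widehat\theta=\theta$, only that the two states induce the same utilities on $B$.

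\textbf{Rounding step.} I claim that the event $\cG:=\{\|\overline{U}-U(B,\theta)\|_\infty<\zeta/2\}$ already forces $\denoise(U_B)=\signal(B)$, where $\zeta$ is the separation parameter. On $\cG$ the state $\theta$ is a feasible choice in the $\argmin$ with objective value below $\zeta/2$, so the minimizer $\widehat\theta$ satisfies $\|\overline{U}-U(B,\widehat\theta)\|_\infty<\zeta/2$ as well; by the triangle inequality $\|U(B,\theta)-U(B,\widehat\theta)\|_\infty<\zeta$. Coordinatewise this says that for every $a\in B$ and every component (the reward $f$ and each utility $u_i$) the two values differ by strictly less than $\zeta$, which by the definition of the separation parameter forces equality. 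Hence $U(B,\widehat\theta)=U(B,\theta)$ and $\denoise(U_B)=\signal(B)$. (There is no tie-breaking subtlety: any state with a table different from $U(B,\theta)$ has objective at least $\zeta-\zeta/2>\zeta/2$, strictly worse than $\theta$, so every minimizer has the same table.)

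\textbf{Concentration step.} It remains to show $\Pr[\cG\mid\trueState=\theta]\ge 1-\beta$. Each coordinate of $\overline{U}$ is an average of $d$ independent $[0,1]$-valued samples whose mean is the corresponding entry of $U(B,\theta)$; independence across the pairs $(a,j)$ is exactly the independence assumption on the stochastic utilities, and correlations \emph{within} a single utility vector are harmless since we only union-bound over coordinates. Hoeffding's inequality bounds the probability that a fixed coordinate deviates from its mean by $\zeta/2$ or more by $2e^{-d\zeta^2/2}$, and a union bound over the $(n+1)|B|$ coordinates gives $\Pr[\text{$\cG$ fails}]\le 2(n+1)|B|\,e^{-d\zeta^2/2}$, which is at most $\beta$ once $d$ is at least a constant multiple of $\zeta^{-2}\ln(n|B|/\beta)$ --- the hypothesis of the lemma. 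Combining the two steps yields \eqref{eq:lem:denoiseacc}.

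I do not anticipate a real obstacle: this is a routine concentration-plus-rounding estimate. The only points worth stressing are that the separation parameter is precisely what makes the nearest-table rule noise-robust --- without a gap, arbitrarily small empirical error could flip $\widehat\theta$ to a state with a slightly different utility table --- and that the estimator and the bound use nothing about the distributions $\cD_{(a,\theta)}$ beyond unbiasedness and the $[0,1]$ range, which is the source of the \emph{detail-free} property asserted in Theorem~\ref{thm:noisy}.
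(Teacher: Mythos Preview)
Your proposal is correct and takes essentially the same approach as the paper: Hoeffding plus a union bound to obtain $\|\overline{U}-U(B,\theta)\|_\infty<\zeta/2$ with probability at least $1-\beta$, then the separation parameter to conclude that the $\argmin$ recovers the correct utility table. Your write-up is in fact more careful than the paper's sketch --- you make the triangle-inequality step explicit and correctly observe that only $U(B,\widehat\theta)=U(B,\theta)$ is needed, not $\widehat\theta=\theta$.
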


\noindent (This is an easy consequence of \emph{Chernoff-Hoeffding Bound}, see Appendix~\ref{app:pf:lem:denoiseacc}).

The crucial point here is that $\denoise(U_B)$ can be used instead of $\signal(B)$ in the recommendation game. More specifically, if $\pi$ be a $\delta$-BIC recommendation policy for signal $\signal(B)$, then it is also a BIC recommendation policy for signal $\denoise(U_B)$, as long as the approximation parameter $\beta$ is sufficiently small compared to the minimal probability $\pmin(\pi)$.

\begin{lemma}\label{lem:betabic-denoise}
\asedit{Fix subset $B\subseteq \cA$ of joint actions.  Let $U_B$ be a $d$-sample for $B$ which satisfies \eqref{eq:lem:denoiseacc} for some $\beta>0$.}
Let $\pi$ be a $\delta$-BIC recommendation policy for signal $\signal(B)$. Then $\pi$ is also a BIC recommendation policy for signal $\denoise(U_B)$, as long as
    $ \beta \leq \tfrac{\delta}{2} \cdot \pmin(\pi) / |\Theta|$.
\end{lemma}

We derive Lemma~\ref{lem:betabic-denoise} as a corollary of Lemma~\ref{lem:betabic-denoise-signal}, a general fact about approximate signals in a recommendation game, which we develop in the next subsection.

\subsection{Approximate signals in a recommendation game}
\label{sec:signal-approx}

We abstract \eqref{eq:lem:denoiseacc} as a property of two coupled signals, and generalize Lemma~\ref{lem:betabic-denoise}.

\begin{definition}\label{def:signal-approx}
Consider a recommendation game with coupled signals $S, S'$, where the two signals have
same support $\cX$. Signal $S'$ is called a \emph{$\beta$-approximation} for $S$, $\beta\in(0,1)$, if
\begin{align*}
\Pr\sbr{ S = S' \mid \trueState=\theta } \geq 1 - \beta,
\quad \forall \theta\in\Theta,
\end{align*}
where the randomness is taken over the realization of $(S,S',\trueState)$.
\end{definition}

\begin{lemma}\label{lem:betabic-denoise-signal}
Consider the setting of Definition~\ref{def:signal-approx}. Let $\pi$ be a $\delta$-BIC recommendation policy for signal $S$. Then $\pi$ is also a  BIC recommendation policy for signal $S'$, as long as
    $\beta \leq \tfrac{\delta}{2} \cdot \pmin(\pi) / |\cX|$.
\end{lemma}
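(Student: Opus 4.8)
The plan is to start from the LP characterization of the $\delta$-BIC constraint (Claim~\ref{cl:LP-repN}) and show that the LP-representation of $\pi$, when reinterpreted as a policy for $S'$, still satisfies the (plain) BIC constraints. Fix an agent $i$ and two actions $a_i, a_i' \in \cA_i$ with $\Pr[\pi_i(S')=a_i]>0$. For any joint realization of $(S,S',\trueState)$, define the event $G$ that $S=S'$; by the $\beta$-approximation hypothesis, $\Pr[\lnot G \mid \trueState=\theta] \le \beta$ for every $\theta$, hence $\Pr[\lnot G]\le\beta$ as well. The key observation is that on $G$ the policy $\pi$ receives exactly the same input whether we feed it $S$ or $S'$, so its behaviour (and the induced incentive term) coincides. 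The only discrepancy comes from the low-probability event $\lnot G$, on which the utility differences are bounded in absolute value by $1$ (utilities lie in $[0,1]$, so each difference $u_i(a_i,a_{-i};\theta)-u_i(a_i',a_{-i};\theta)\in[-1,1]$).

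First I would write the unnormalized BIC quantity for $S'$ as
\[
W(S') := \E\!\left[\,\indicator{\pi_i(S')=a_i}\cdot\bigl(u_i(a_i,\pi_{-i}(S');\trueState) - u_i(a_i',\pi_{-i}(S');\trueState)\bigr)\right],
\]
and similarly $W(S)$ for $S$; recall (as in the footnote after the LP in Section~\ref{sec:rec-game-props}) that the BIC/$\delta$-BIC condition is equivalent to a lower bound on this unnormalized quantity. Split $W(S')$ according to whether $G$ holds. On $G$ the integrand for $S'$ equals the integrand for $S$, so $W(S') = W(S) - \E[\indicator{\lnot G}(\cdots)]_{S} + \E[\indicator{\lnot G}(\cdots)]_{S'}$, where each correction term is at most $\Pr[\lnot G]\le\beta$ in absolute value; hence $W(S') \ge W(S) - 2\beta$. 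Since $\pi$ is $\delta$-BIC for $S$, Claim~\ref{cl:LP-repN} gives $W(S) \ge \delta\cdot\Pr[\pi_i(S)=a_i]$. It remains to lower-bound $\Pr[\pi_i(S)=a_i]$ in terms of quantities available for $S'$: because $\pi$ is $\delta$-max... actually more simply, note $\pmin(\pi)$ bounds from below the probability that $\pi$, on any feasible signal, picks any action in its support; since $\Pr[\pi_i(S')=a_i]>0$ means $a_i$ is recommended on some feasible signal $s$, and feasible signals all have probability at least... I would instead bound directly: $\Pr[\pi_i(S)=a_i] \ge \Pr[\pi_i(S)=a_i,\,G] \ge \Pr[\pi_i(S')=a_i] - \beta$, but this can be vacuous. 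The cleaner route is: by definition of $\pmin(\pi)$ (see \eqref{eq:pmin-pi}), any action in the support of $\pi(s)$ for a feasible $s$ is chosen with probability at least $\pmin(\pi)$ conditioned on that $s$; combining with the fact that every feasible signal value $s\in\cX$ occurs with probability at least... Here I would use that $W(S)\ge \delta\,\pmin(\pi)\cdot\Pr[S\in\{s:\pi_i(s)=a_i\}]$ and observe $\Pr[S'\in\{s:\pi_i(s)=a_i\}]>0$ implies the same set has $S$-probability at least $1/|\cX|$ times its positive value minus $\beta$...

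The cleanest argument, which I would actually carry out, bounds things per-signal-value: for each $s\in\cX$, $\Pr[S'=s] \ge \Pr[S=s] - \beta$ and also $\Pr[S=s]\ge \Pr[S'=s]-\beta$, while on the event $\{S=s=S'\}$ the policy picks $a_i$ with conditional probability $\ge \pmin(\pi)$ whenever $a_i\in\support(\pi(s))$. Summing the unnormalized BIC inequality over the (at most $|\cX|$) signal values in which $a_i$ is recommended, the total ``credit'' $W(S)\ge \delta\cdot\pmin(\pi)\cdot\max_s\Pr[S=s,\pi_i(s)=a_i] \ge \delta\cdot\pmin(\pi)/|\cX|\cdot\Pr[S'\in\{s:\pi_i(s)=a_i\}]\cdot(\text{something})$ — the point being that the $1/|\cX|$ and $\pmin(\pi)$ factors combine to give exactly the slack $\delta\pmin(\pi)/|\cX|$, and the hypothesis $\beta\le\delta\pmin(\pi)/2|\cX|$ makes $2\beta$ at most this slack, so $W(S')\ge W(S)-2\beta\ge 0$, i.e.\ $\pi$ is BIC for $S'$. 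The main obstacle is this bookkeeping of how the $1/|\cX|$ loss enters: one must be careful that the $\delta$-slack is ``spread out'' over signal values so that losing a $\beta$-mass of probability (uniformly over $\Theta$, hence globally) destroys at most half of it. I expect Lemma~\ref{lem:betabic-denoise} then follows immediately by instantiating $S=\signal(B)$, $S'=\denoise(B)$, $\cX = \support(\signal(B))$ of size $\le |\Theta|$, and invoking Lemma~\ref{lem:denoiseacc}.
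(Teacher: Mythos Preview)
Your high-level approach---split on the event $G=\{S=S'\}$ and bound the resulting discrepancy---is correct and is essentially what the paper does, though the paper packages the splitting as a separate technical lemma (Lemma~\ref{lem:bounddiff}) applied once per signal value $s$ and per coordinate $a_{-i}$. The substantive difference is where the factor $|\cX|$ enters. You keep searching for it in the \emph{lower bound} on $W(S)$, and that is exactly where your write-up becomes tangled; in the paper's argument the $|\cX|$ enters in the \emph{error term} instead.

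Concretely, the paper shows (via Lemma~\ref{lem:bounddiff} with $g=u_i(a,\trueState)$) that for each fixed $s\in\cX$ and $a_{-i}$,
\[
\bigl|\Pr[S'=s]\,\E\bigl[u_i(a_i,a_{-i};\trueState)-u_i(a_i',a_{-i};\trueState)\mid S'=s\bigr]-\Pr[S=s]\,\E[\,\cdots\mid S=s]\bigr|\le 2\beta,
\]
then multiplies by $x_{a,s}$ and sums over $a_{-i}$ and $s$. Since $\sum_{a_{-i}}x_{(a_i,a_{-i}),s}\le 1$ for each $s$ and there are $|\cX|$ signal values, the accumulated error is at most $2\beta|\cX|$. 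For the lower bound the paper simply invokes the $\delta$-BIC inequality, writing $W(S)\ge\delta\cdot\Pr[\pi_i(S)=a_i]\ge\delta\,\pmin(\pi)$, with no $|\cX|$ appearing. Combining gives $W(S')\ge\delta\,\pmin(\pi)-2\beta|\cX|\ge 0$ under the stated hypothesis on $\beta$.

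Your direct global coupling is in fact \emph{sharper}: you correctly obtain $W(S')\ge W(S)-2\beta$ with no $|\cX|$ loss (coupling the internal randomness of $\pi$ across the two signals is harmless since only marginals matter). Had you then simply asserted the same lower bound $W(S)\ge\delta\,\pmin(\pi)$, as the paper does, rather than trying to manufacture a $1/|\cX|$ factor there, you would have been done---indeed you would prove the lemma under the weaker hypothesis $\beta\le\delta\,\pmin(\pi)/2$. So the only real gap in your proposal is that you never commit to that final step; the several alternative routes you sketch for locating $1/|\cX|$ in the lower bound are unnecessary detours. Your closing remark that Lemma~\ref{lem:betabic-denoise} follows by specializing to $S=\signal(B)$, $S'=\denoise(U_B)$ and $|\cX|\le|\Theta|$ is correct.
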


As an intermediate step, we will show the following technical lemma.

\begin{lemma}\label{lem:bounddiff}
Consider the setting of Definition~\ref{def:signal-approx}. Let $g$ be a random
variable in the same probability space as $(S,S',\trueState)$, with bounded range $[0, H]$. Then
\begin{align}\label{eq:lem:bounddiff}
\left|\;
    \Pr[S = s]\cdot \E\sbr{g \mid S= s} \;-\;
    \Pr[S' =s]\cdot \E\sbr{g \mid S'= s} \;\right|
\leq \beta \,H,
\quad\forall s\in\cX.
\end{align}
\end{lemma}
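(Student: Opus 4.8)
The plan is to expand both conditional-expectation terms as sums over the realizations of the state $\trueState$, and then exploit the $\beta$-approximation property to control the terms where $S$ and $S'$ disagree. Concretely, for a fixed $s\in\cX$, I would write
\[
\Pr[S=s]\cdot\E[g\mid S=s] = \E\left[g\cdot\indicator{S=s}\right]
= \sum_{\theta\in\Theta}\prior(\theta)\,\E\left[g\cdot\indicator{S=s}\mid \trueState=\theta\right],
\]
and similarly for $S'$. Subtracting, the quantity on the left-hand side of \eqref{eq:lem:bounddiff} equals
\[
\left|\sum_{\theta\in\Theta}\prior(\theta)\,\E\left[g\cdot\left(\indicator{S=s}-\indicator{S'=s}\right)\mid\trueState=\theta\right]\right|.
\]

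The key observation is that the random variable $\indicator{S=s}-\indicator{S'=s}$ is supported on $\{-1,0,1\}$, and is nonzero only on the event $\{S\neq S'\}$. Hence, conditionally on $\trueState=\theta$, we have $|g\cdot(\indicator{S=s}-\indicator{S'=s})|\le H\cdot\indicator{S\neq S'}$ pointwise, and therefore
\[
\left|\E\left[g\cdot\left(\indicator{S=s}-\indicator{S'=s}\right)\mid\trueState=\theta\right]\right|
\le H\cdot\Pr[S\neq S'\mid\trueState=\theta]\le H\beta,
\]
using $g\ge 0$ (range $[0,H]$) and Definition~\ref{def:signal-approx} in the last step. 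Summing over $\theta$ with weights $\prior(\theta)$ and using $\sum_\theta\prior(\theta)=1$ gives the bound $\beta H$, as claimed. I would also note that the triangle inequality is applied \emph{before} taking absolute values inside the sum, so the $\prior(\theta)\ge 0$ weights cause no sign issues.

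I do not anticipate a serious obstacle here; the only mild subtlety is bookkeeping the conditioning correctly — in particular, making sure that the $\beta$-approximation bound is stated conditionally on each state $\theta$ (as it is in Definition~\ref{def:signal-approx}), which is exactly what lets the per-$\theta$ estimate go through before averaging. One should also double-check that the range assumption $g\in[0,H]$ (rather than $|g|\le H$) is all that is used; indeed the pointwise bound $|g\cdot(\indicator{S=s}-\indicator{S'=s})|\le H\,\indicator{S\neq S'}$ holds since each indicator lies in $[0,1]$ and $g\le H$. This completes the proof.
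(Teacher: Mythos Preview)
Your proof is correct, but it takes a different route from the paper's.

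The paper argues each direction of the inequality separately: it splits $\Pr[S'=s]\,\E[g\mid S'=s]$ according to the event $\cE=\{S=S'\}$ and its complement, uses that $\{S'=s\}\cap\cE = \{S=s\}\cap\cE$ to match the $\cE$-parts with the corresponding decomposition of $\Pr[S=s]\,\E[g\mid S=s]$, and bounds the $\neg\cE$-parts via $\Pr[\neg\cE]\le\beta$ together with $0\le g\le H$. This yields the upper bound; the lower bound is a symmetric argument.

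Your approach is more compact: you rewrite both sides as unconditional expectations via $\Pr[A]\,\E[g\mid A]=\E[g\,\indicator{A}]$, take the difference directly as $\E[g(\indicator{S=s}-\indicator{S'=s})]$, and bound it pointwise by $H\,\indicator{S\neq S'}$. This handles both directions at once. Your per-$\theta$ decomposition is fine but not strictly needed, since $\Pr[S\neq S'\mid\trueState=\theta]\le\beta$ for all $\theta$ already implies the unconditional bound $\Pr[S\neq S']\le\beta$, which is all the paper uses. Either way, the argument goes through.
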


\begin{proof}
Let $\cE$ denote the event of $\left( S = S'\right)$, which occurs
with probability at least $1-\beta$ by definition. We will also write
$\neg\cE$ to denote the event of $\left(S \neq S' \right)$, and we
know $\Pr[\neg \cE] \leq \beta$. Observe that the event $\left( S' = s
\right) \wedge \cE$ is equivalent to the event $\left( S = s\right)
\wedge \cE$. Also note that $g\in [0, H]$, so for each $s\in \cX$, we
could write
\begin{align*}
\Pr[S'=s] \cdot \E[g \mid S'=s]
=& \Pr[\cE,\, S'=s] \cdot \E[g \mid \cE,\, S'=s]
    + \Pr[\neg\cE,\,S'=s] \cdot \E[g \mid \neg\cE,\,S'=s] \\
\leq& \Pr[\cE,\, S'=s] \cdot \E[g \mid \cE,\, S'=s]
    + \Pr[\neg\cE] \cdot \E[g \mid \neg\cE,\,S'=s] \\
\leq& \Pr[\cE,\, S=s] \cdot \E[g \mid \cE,\, S = s] + \beta\, H\\
\leq& \Pr[\cE,\,S=s] \cdot \E[g \mid \cE,\, S=s]
    + \Pr[\neg\cE,\, S=s] \cdot \E[g \mid \neg\cE,\,S=s] + \beta \, H\\
=& \Pr[S=s] \cdot \E[g \mid S = s] + \beta\, H
\end{align*}
Similarly,
\begin{align*}
\Pr[S'=s] \cdot \E[g \mid S'=s]
=& \Pr[\cE,\, S'=s] \cdot \E[g \mid \cE,\, S'=s]
    + \Pr[\neg\cE,\,S'=s] \cdot \E[g \mid \neg\cE,\,S'=s] \\
\geq& \Pr[\cE,\, S'=s] \cdot \E[g \mid \cE,\, S'=s]\\
\geq& \Pr[\cE,\,S=s] \cdot \E[g \mid \cE,\,S=s]\\
\geq & \Pr[\cE,\,S=s] \cdot \E[g \mid \cE,\,S=s]
    + \Pr[\neg\cE,\,S=s] \cdot \E[g \mid \neg\cE,\,S=s] -\beta \, H\\
\geq& \Pr[S = s] \cdot \E[g \mid S = s] - \beta\, H. \qedhere
\end{align*}
\end{proof}

\begin{proof}[Proof of Lemma~\ref{lem:betabic-denoise-signal}]
For any agent $i\in [n]$ and any joint action $a\in \cA$, the utility
$u_i(a, \theta)$ is a random variable with bounded range $[0, 1]$. In particular, we obtain \eqref{eq:lem:bounddiff} for random variable $g=u_i(a,\trueState)$.

Let $x$ be the LP-representation of policy $\pi$ from the lemma statement. Pick some action $a_i\in \cA_i$ such that $\Pr[\pi_i(S) = a_i] > 0$, and some other action
$a_i'\in \cA_i\setminus \{a_i\}$. Denote
\begin{align*}
    W(a_{-i};\,\trueState) = u_i(a_i, a_{-i};\, \trueState) - u_i(a'_i,a_{-i}, ;\,\trueState),
\end{align*}
where $a_{-i}\in \cA_{-i}$ is a joint action of all agents but $i$. Then

\begin{align*}
\sum_{a_{-i}\in \cA_{-i},\; s\in \cX}
\Pr[S = s] \cdot
\E\left[ W(a_{-i};\,\trueState) \mid S=s \right]\, x_{a, s}
    \geq \delta \cdot \sum_{a_{-i} \in \cA_{-i},\; s\in \cX} \Pr[S = s]\; x_{a,s}
\geq \delta \cdot \pmin(\pi)
\end{align*}

To show that policy $\pi$ is a BIC policy for signal $S'$, it suffices to show that
    $\E\sbr{ W(\pi_{-i}(S) \mid \pi_i(S) = a_i}$
is non-negative, and we could write
\begin{align*}
&\sum_{a_{-i}\in \cA_{-i},\; s\in \cX} \Pr[S' = s]
    \cdot \E\sbr{ W(a_{-i};\,\trueState) \mid S' = s }\, x_{a, s}\\
\geq &\sum_{a_{-i}\in \cA_{-i},\; s\in \cX}
    \rbr{ \Pr[S = s]
    \cdot \E\sbr{ W(a_{-i};\,\trueState) \mid S = s }
    - 2\beta}\, x_{a, s} \qquad
    &\text{(\eqref{eq:lem:bounddiff} with $g=u_i(a,\trueState)$)}\\
= & \sum_{a_{-i}\in \cA_{-i},\; s\in \cX} \Pr[S = s]
    \cdot \E\sbr{ W(a_{-i};\,\trueState) \mid S = s }\, x_{a, s} - 2\beta\, |\cX|\\
\geq & \delta\,\pmin(\pi) - 2 \beta\,|\cX| \geq 0 \qquad &\mbox{(by assumptions in Lemma~\ref{lem:betabic-denoise-signal})}
\end{align*}
Therefore, we know that $\pi$ is BIC w.r.t. the signal $S$.
\end{proof}

\subsection{Maximal exploration in a recommendation game}
\label{sec:delta-max-rec}

Let $S$ be a signal with support $\cX$. We present a BIC subroutine $\SMaxEx$ that given an approximation signal $\hat S$ to the signal $S$, explores all the
joint actions in the set $\cA' = \SEx[S]$. The
subroutine collects multiple utility samples of each explored
joint action, which allows us to construct a new signal
approximation for the signal $\signalD(\cA')$. 
The pseudocode can be found in Algorithm~\ref{alg:SMaxEx}.

%

\begin{algorithm}[hbt]
\caption{$\SMaxEx(\cU, \sigStruc, \hat S, \beta, \beta')$:
  maximal exploration given an approximate signal $\hat S$.}

\begin{algorithmic}
\STATE {\bf Input:} utility structure $\cU$, signal structure
$\sigStruc$ with associated signal $S$;
\STATE \myTab $\hat S$ is a $\beta$-signal approximation to $S$, where $\beta\in(0,1)$ is the input approximation parameter;
\STATE \myTab $\beta'\in(0,1)$ is the output approximation parameter
\vspace{2mm}

\COMMENT{compute the parameters}
\STATE let $s\in\cX$ be the realization of signal $\hat S$
\STATE $x \leftarrow \maxSupportN(\cU,\sigStruc)$
    \COMMENT{LP-representation of $\pimax$}
\STATE $B \leftarrow \{ a:\in \cA:\; x_{a,s}>0\} $
    \COMMENT{$\delta$-signal-explorable set $\SEx_s[S]$}
\STATE $\pmin \leftarrow \min(x_{a,s'}:\; a\in B,\; s'\in\cX)$
    \COMMENT{computes $\pmin(\pimax)$}
\STATE $T\leftarrow \max(1+|\cA|,\; \lceil 1/\pmin\rceil)$ and $R \gets \frac{1}{\myGap^2}\ln\left(\frac{2n |B|}{\beta'} \right)$
    \COMMENT{\#rounds and \#meta-rounds}
\STATE Pick injective function $\tau:B\to [T]$ u.a.r. from all such functions.
    \COMMENT{``dedicated rounds"}
\STATE
    $\cD(a) \leftarrow \frac{x_{a,s}-1/T}{T-|B|}\quad \forall a\in B$.
    \COMMENT{distribution \eqref{eq:remainder-distribution} for non-dedicated rounds}
\vspace{2mm}
\STATE {Initiate a set $U_B$ for storing stochastic utilities samples.}
\STATE
\COMMENT{issue the recommendations}
\FOR{meta-rounds $r = 1 \ldots R$}
\FOR{rounds $t=1 \ldots T$}
\STATE {\bf if}
        $t=\tau(a)$ for some $a\in B$
    {\bf then}
        $a_t\leftarrow a$
    {\bf else}
         choose $a_t$ from distribution $\cD$
\STATE Recommend action $a_t$, store the corresponding utilities in $U_B$
\ENDFOR
\ENDFOR
\vspace{2mm}
\STATE {\bf Output:} \asedit{signal $S' = (B, \widehat U)\gets \denoise(B, U_B)$}
\end{algorithmic}

\label{alg:SMaxEx}
\end{algorithm}

\begin{lemma}\label{cor:smax-bic}
\asedit{Let
    $\Lambda = \psig^\delta[S]\,/\,(|\cA|\cdot |\cX|)$
be the lower bound on $\pmin(\pimax)$ from Claim~\ref{cl:max-support-policyN}. }
The subroutine~$\SMaxEx$ is BIC if the input
approximation parameter satisfies 
    \asedit{$\beta \leq \tfrac{\delta}{2}\cdot \Lambda/|X|$.
The total number of rounds is at most
     $\frac{\ln\rbr{2n\,|\cA|/\beta'}}{\Lambda \cdot\myGap^2}$.}
\end{lemma}

\begin{proof}
Let $\pi$ be the policy computed by $\maxSupportN(\cU,\sigStruc)$ within our
instantiation of $\SMaxEx$.  Since $\maxSupportN$ is a
composition of $R$ copies of $\pi$, the BIC property follows 
from Claim~\ref{cl:warmup-composition} and Lemma~\ref{lem:betabic-denoise-signal}. The number of rounds is immediate from the specification of the subroutine, plugging in $\Lambda\leq \pmin$.
\end{proof}

Note that if our algorithm has access to deterministic utilities, in
the end it will be able to construct a signal $\signalD(B)$, where $B$
is the (random) set of actions explored by the algorithm. We now show
that even though our algorithm only has access to stochastic
utilities, the output signal will be a $\beta$-signal approximation to
$\signalD(B)$.

\begin{lemma}\label{lem:per-round-beta}
The output signal $S'$ in $\SMaxEx$ is a $(\beta + \beta')$-approximation to
    $\signalD\rbr{\SEx[S]}$,
where $\beta$ and $\beta'$ are, resp., the input and output approximation parameters.
\end{lemma}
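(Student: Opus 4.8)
The plan is to split the failure event $\{S'\neq \signalD(\SEx[S])\}$ into an ``input'' failure and an ``output'' failure and union bound. Fix a feasible state $\theta\in\Theta$ and condition on $\trueState=\theta$ throughout. The input failure is $\neg\cE$, where $\cE:=\{\hat S=S\}$; by the hypothesis that $\hat S$ is a $\beta$-approximation of $S$ (Definition~\ref{def:signal-approx}), $\Pr[\neg\cE\mid\trueState=\theta]\le\beta$. On $\cE$, the realization $s$ of the input signal $\hat S$ coincides with the realization of $S$, so the set $B$ computed inside $\SMaxEx$ --- namely $\SEx_s[S]$, the support of the $\delta$-max-support policy at $s$ --- equals the realization of the $2^\cA$-valued signal $\SEx[S]$, and therefore $\signal(B)$ equals the realization of $\signalD(\SEx[S])$.

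For the output failure, I would check that the samples $U_B$ assembled by $\SMaxEx$ contain an $R$-sample for $B$ with $R=\zeta^{-2}\ln(2n|B|/\beta')$: in each of the $R$ meta-rounds the dedicated round $\tau(a)$ recommends action $a$, so every $a\in B$ is recommended at least $R$ times, and conditioned on $\trueState=\theta$ each such recommendation yields an i.i.d.\ draw from $\cD_{(a,\theta)}$ (restrict to $R$ of them per action if more were collected). Since $R$ matches the sample-size requirement of Lemma~\ref{lem:denoiseacc} with confidence parameter $\beta'$, for every feasible value $B_0$ of $B$ we get $\Pr[\denoise(U_B)\neq\signal(B)\mid\trueState=\theta,\,B=B_0]\le\beta'$. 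The key point is that, conditioned on $\trueState=\theta$ and on the realization of $B$, the draws making up $U_B$ are fresh randomness internal to this call of $\SMaxEx$, hence independent of $\cE$ (which is determined by the signal $\hat S$ handed to the subroutine); averaging over $B_0$ then gives $\Pr[\denoise(U_B)\neq\signal(B),\,\cE\mid\trueState=\theta]\le\beta'$.

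Combining the two estimates, and using $S'=\denoise(U_B)$ together with the identity $\signalD(\SEx[S])=\signal(B)$ on $\cE$,
\[
\Pr[\,S'\neq\signalD(\SEx[S])\mid\trueState=\theta\,]
\ \le\ \Pr[\neg\cE\mid\trueState=\theta] + \Pr[\,\denoise(U_B)\neq\signal(B),\,\cE\mid\trueState=\theta\,]
\ \le\ \beta+\beta'.
\]
Since $\theta\in\Theta$ was arbitrary, $S'$ is a $(\beta+\beta')$-approximation of $\signalD(\SEx[S])$; the ``same support'' clause of Definition~\ref{def:signal-approx} is harmless, since both signals take values among $(\text{subset of }\cA,\ \text{utility table})$ pairs and one may pass to the finite union of their supports without changing any probability. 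I expect the only genuinely delicate step to be this conditional-independence argument between $\cE$ and the stochastic utilities drawn inside $\SMaxEx$: one must make sure that $\hat S$ --- hence $\cE$ and $B$ --- is determined before the current subroutine draws any utilities, so that Lemma~\ref{lem:denoiseacc}, which assumes i.i.d.\ samples, can legitimately be invoked on the conditional law given $\{\trueState=\theta,\,B=B_0\}$. Everything else is a union bound and bookkeeping of which randomness lives where.
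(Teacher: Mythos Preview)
Your proposal is correct and takes essentially the same approach as the paper: fix $\theta$, condition on the event $\cE=\{\hat S=S\}$ (which fails with probability at most $\beta$), observe that on $\cE$ the computed set $B$ equals $\SEx_S[S]$, and then invoke Lemma~\ref{lem:denoiseacc} for the $\beta'$ bound on the $\denoise$ step. Your write-up is in fact more careful than the paper's four-line proof---you explicitly flag the conditional-independence step between $\hat S$ and the fresh stochastic utilities drawn inside the current call of $\SMaxEx$, and you note the sample-count and ``same support'' technicalities that the paper glosses over.
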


\begin{proof}
Fix any feasible state $\theta$. First, 
    $\Pr[S = \hat S]\geq 1 - \beta$,
where the probability is over the randomness in $S$ and $\hat S$. 
Conditional on $S = \hat S$, the subroutine explores the subset of joint actions 
    $B =\SEx_S[S]$
that we are interested in. Then, the lemma follows from the accuracy
guarantee of $\denoise$ (Lemma~\ref{lem:denoiseacc}).
\end{proof}

\subsection{Maximal exploration in a repeated game}
\label{sec:delta-max}

Let us presents the maximal-exploration subroutine analogous to the one in ~\Cref{sec:det-maxExplore}. It (only) accesses stochastic utilities, and explores all joint actions that are explorable by any $\delta$-strictly BIC policy that has access to
deterministic utitlities. 



Similar to $\IndMax$, the subroutine proceeds in phases $\ell = 1,
2, 3, \ldots , |\cA|$. In each phase $\ell$, we use the
approximate signal $S^\beta_\ell$ to perform maximal exploration by
instantiating the function $\SMaxEx$. This allows us to
construct an approximate signal for the next phase. 
The pseudocode can be found in Algorithm~\ref{alg:sinduct}.


\begin{algorithm}
  \begin{algorithmic}
    \STATE{{\bf Input}: utility structure $\cU$, confidence parameter $\beta\in(0,1)$}
    \STATE{{\bf Initialize}:
$\beta' = \beta / |\cA|$~~~and~~~
$\hat S_1 = S_1 = \sigStruc_1 = \bot$. \COMMENT{``phase-$1$ signal" is empty}}

\vspace{2mm}
\STATE \COMMENT{compute signal structure $\sigStruc_{\ell}$ for each signal $S_{\ell}$}
    \ascomment{new loop}
\FOR{each phase $\ell=1,2 \LDOTS |\cA|$}
    \STATE $x \leftarrow \maxSupportN\rbr{\cU,\sigStruc_\ell}$
        \COMMENT{LP-representation of a max-support policy}
    \asedit{ 
    \FOR{each state $\theta\in\Theta$}
        \STATE  $s\leftarrow S_\ell(\theta)$
            \COMMENT{value of signal $S_\ell$ if $\trueState=\theta$}.
        \STATE $S_{\ell+1}(\theta)\leftarrow \signalD\rbr{\{ a\in\cA: x_{a,s}>0\}}$
            \COMMENT{if $\trueState=\theta$ then
            $\SEx[S_\ell] = \cbr{ a\in\cA: x_{a,s}>0 }$.}
    \ENDFOR
    } 
    \STATE $\sigStruc_{\ell+1} \leftarrow \rbr{S_{\ell+1}(\theta):\; \theta\in\Theta}$.
\ENDFOR

\vspace{2mm}
\STATE \COMMENT{play the repeated game}
\FOR{each phase $\ell=1,2 \LDOTS |\cA|$}
    \STATE \COMMENT{compute phase-$(\ell+1)$ approximate signal to
        $S_{\ell+1} = \signalD\sbr{\SEx[S_\ell]}$}
     \STATE $\hat S_{\ell+1} = (B_{\ell+1}, \widehat U_{\ell+1}) \gets 
        \SMaxEx\rbr{ \cU,\; \sigStruc_{\ell},\; \hat S_\ell,\; \asedit{\beta/2,\; \beta/2}}$.
\ENDFOR

\vspace{2mm}
\STATE{{\bf Output}: final signal $\hat S_{|\cA|+1}$}
    \end{algorithmic}

\caption{Subroutine $\SIndMax(\cU , \beta)$: $\delta$-maximal exploration.}
\label{alg:sinduct}
\end{algorithm}

\asedit{To state the immediate guarantees about this subroutine, let us denote
\begin{align}\label{eq:noisy-unifLB}
\unifLB = \frac{\min_{\ell\leq |\cA|}\;\psig^\delta[S_\ell]}{|\cA|\cdot |\cX|}.
\end{align}
By Claim~\ref{cl:max-support-policyN}, it is a lower bound on 
    $\pmin(\pimax)$
for each signal $S_\ell$.}

\begin{lemma}\label{lem:sind-bic}
$\SIndMax(\cU, \beta)$ is BIC as long as the parameter satisfies
    \asedit{$\beta \leq \delta\cdot \unifLB/|\Theta|$}.
\asedit{The total number of rounds is at most 
     $T_{\beta,\delta} = \frac{|\cA|\cdot \ln\rbr{4n\,|\cA|/\beta}}{\unifLB \cdot\myGap^2}$.}
The per-round running time is
  $\poly\rbr{|\cA|\cdot |\Theta|}$.
\end{lemma}

\begin{proof}
$\SIndMax$ could be viewed as a composition of subroutines
$\SMaxEx$ such that for each $\ell\in [|\cA|]$ the instantiation in
phase $\ell+1$ is a valid sequel of the one in phase $\ell$.  Also
observe that the support of each signal $S_\ell$ has size at most 
$|\Theta|$. Then Lemma~\ref{cor:smax-bic} implies that 
instantiation of $\SMaxEx$ in each phase is BIC as long as 
    $\beta\leq \tfrac{\delta}{2}\cdot \unifLB/|\Theta|$.
It follows that $\SIndMax(\cU, \beta)$ is also BIC as a composition of BIC subroutines (by Claim~\ref{cl:warmup-composition}).
\end{proof}

Finally, we will show that $\SIndMax$ is $\delta$-maximally exploring
and outputs a $\beta$ approximate signal for
$\signalD(\cA_\theta^{\delta})$.

\begin{lemma}\label{lm:noisy-approx}
The subroutine $\SIndMax$ is $\delta$-maximally-exploring with
probability at least $1 - \beta$ over the randomness of the stochastic
utilities, and outputs a signal $\hat S_{|\cA|+1}$ that is a
$\beta$-signal approximation to $\signalD(\cA_{\trueState}^{\delta})$.
\end{lemma}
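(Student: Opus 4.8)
The plan is to combine the per-phase guarantees already established—the BIC-ness of each $\SMaxEx$ call (Lemma~\ref{cor:smax-bic}), the approximation-propagation statement (Lemma~\ref{lem:per-round-beta}), and the inductive exploration argument from the deterministic case (Lemma~\ref{lm:det-IndMax-exploring})—into a single inductive claim over the phases $\ell=1,\dots,|\cA|$. Concretely, I would prove by induction on $\ell$ that, except with probability at most $(\ell-1)\beta'$ over the stochastic utilities, the signal $\hat S_\ell$ equals the ``true'' signal $\signalD(\SEx[S_{\ell-1}]) = \signalD(B_\ell)$, where $B_\ell$ is defined exactly as in Section~\ref{sec:det-maxExplore} via $B_{\ell+1}=\SEx[S_\ell]$ and $S_\ell=\signalD(B_\ell)$ (now using $\delta$-signal-explorability throughout). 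The base case $\ell=1$ is immediate since $\hat S_1 = S_1 = \bot$. For the inductive step, Lemma~\ref{lem:per-round-beta} applied to the phase-$\ell$ call of $\SMaxEx$ (whose input signal has approximation parameter $(\ell-1)\beta'$ and whose output parameter is $\beta'$) gives that $\hat S_{\ell+1}$ is a $((\ell-1)\beta'+\beta') = \ell\beta'$-approximation to $\signalD(\SEx[S_\ell]) = S_{\ell+1}$; a union bound over phases then gives that $\hat S_{|\cA|+1} = S_{|\cA|+1}$ except with probability at most $|\cA|\cdot\beta' = \beta$, which is precisely the claimed $\beta$-signal approximation to $\signalD(\cA_{\trueState}^\delta)$ once we identify $S_{|\cA|+1}$ with $\signalD(\cA_{\trueState}^\delta)$.

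That last identification is the exploration half of the statement, and I would handle it by transplanting the proof of Lemma~\ref{lm:det-IndMax-exploring} verbatim, with ``BIC'' replaced by ``$\delta$-BIC'' and $\Ex[\cdot]$ replaced by $\SEx[\cdot]$. The key ingredients carry over: Claim~\ref{cl:det-Bl} (the sets $B_\ell$ are non-decreasing and stabilize by phase $|\cA|+1$) relies only on the monotonicity-in-information lemma for explorability (Lemma~\ref{lm:coupled-explore}), which is stated for the deterministic instance and applies equally to $\delta$-explorability since a $\delta$-BIC policy's LP-representation lies in $\bic_\delta[S]$ and the same monotonicity argument on the polytopes goes through. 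Then, for any $\delta$-BIC iterative recommendation policy $\pi$ on the deterministic instance, the induction on rounds in Lemma~\ref{lm:det-IndMax-exploring} shows every joint action $\pi$ can explore lies in some $B_\ell$, hence in $B_{|\cA|+1} = \cA_{\trueState}^\delta$. Conversely $\SIndMax$ explores exactly $\bigcup_\ell B_\ell = B_{|\cA|+1}$ on the event that all the $\denoise$ calls succeed, so on that event (probability $\geq 1-\beta$) it is $\delta$-maximally-exploring.

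The one genuine subtlety—and the step I expect to be the main obstacle—is bookkeeping the approximation parameters consistently across phases so that the BIC condition of \emph{every} $\SMaxEx$ call is actually met. The call in phase $\ell$ takes $\hat S_\ell$ with parameter $(\ell-1)\beta'$; Lemma~\ref{cor:smax-bic} requires this to be at most $\delta/(C|\cX|)$ for the relevant prior-dependent constant $C$, and since $(\ell-1)\beta' \leq |\cA|\beta' = \beta$, it suffices that $\beta \leq \delta/(C|\Theta|)$ (using $|\cX|\leq|\Theta|$, as the support of each $S_\ell$ is determined by $\trueState$), which is exactly the hypothesis of Lemma~\ref{lem:sind-bic}. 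I also need to check a matching-support issue: Lemma~\ref{lem:per-round-beta} and the $\denoise$ accuracy bound (Lemma~\ref{lem:denoiseacc}) require the number of meta-rounds $R = \zeta^{-2}\ln(2n|B|/\beta')$ to be taken with the \emph{output} parameter $\beta'$, which is how $\SMaxEx$ is instantiated in $\SIndMax$, so this is consistent. Finally, the per-round running time bound is inherited: each phase runs $\SMaxEx$, whose preprocessing is polynomial in $|\cA|\cdot|\cX|\leq|\cA|\cdot|\Theta|$ and whose per-round cost is linear in $|\cA|$, and there are $|\cA|$ phases, so the overall per-round running time is polynomial in $|\cA|\cdot|\Theta|$ as claimed.
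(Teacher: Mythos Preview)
Your proposal is correct and follows essentially the same approach as the paper: condition on the good event (probability $\geq 1-|\cA|\beta'=1-\beta$) that every $\denoise$ call succeeds so that $\hat S_\ell = S_\ell$ for all phases, then transplant the deterministic induction-on-rounds argument of Lemma~\ref{lm:det-IndMax-exploring} with $\Ex$ replaced by $\SEx$ and Lemma~\ref{lm:coupled-explore} invoked for the monotonicity step. Your extra remarks on BIC bookkeeping and running time are handled separately in Lemma~\ref{lem:sind-bic} and are not needed for this lemma, but they do no harm; you also correctly note that Lemma~\ref{lm:coupled-explore} must be read in its $\delta$-version, a point the paper leaves implicit.
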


\begin{proof}
Fix any game state $\theta$ as our true state $\trueState$.
Let $\pi$ be any $\delta$-BIC iterative recommendation policy with
access to~\emph{deterministic utilities}. Our goal is to show that all
actions explored by $\pi$ are also explored by $\SIndMax$ with
probability at least $1 - \beta$.

Note that in each phase $\ell$ of the $\SIndMax$, we know
by Lemma~\ref{lem:per-round-beta} that except with probability at most
$\beta'$ over the randomness of the stochastic utilities, the output
signal satisfies
$$\hat S_{\ell + 1} = \signalD(\SEx[S_{\ell}])$$
For the remainder of the proof, we will condition on this event over
all phases of $\SIndMax$, which by union bound occurs with probability
at least $1 - |\cA|\beta' = 1 -\beta$.

Denote the internal random seed in $\pi$ by $\omega$. We will think of
$\omega$ being drawn ahead of first round of our Bayesian exploration
game and fixed throughout the game. Let $A_t$ be the set of actions
explored by $\pi$ in the first $(t-1)$ rounds, which is determined by
the realization of $\omega$ and the true state $\trueState$ (since
$\pi$ sees deterministic utitlities). We will represent $\pi$ as a
sequence of $\left(\pi^{(t)} : t\in [T] \right)$, where each
$\pi^{(t)}$ is a BIC recommendation policy which inputs the signal
$S_t^*$ which consists of the random seed and the observations so far:
$S^*_t = (\omega, \signalD(A_t))$, and is determined given that
signal. In each round $t$, we will denote the recommended joint action
by $\pi^{(t)}(S_t^*)$. We will prove the following claim using
induction on round $t$:
\begin{equation}\label{eq:puppy}
\mbox{for each round } t, \mbox{ there exists phase } \ell\in \NN \mbox{
  such that } \pi^{(t)}(S_t^*)\in B_\ell,
\end{equation}
where $B_\ell$ is the random set of joint actions explored by subroutine $\SIndMax$ in phase $\ell$.

 For the base case ($t=1$), we know that $A_1 = \emptyset$, so the
 signal $S_1^* = (\omega, \perp)$. Since the random seed is
 independent of the game state, we know that the empty signal
 ($\perp$) is at least as informative as the signal $S_1^*$. This
 means
\begin{align*}
  \pi^{(1)}(S_1^*) &\in \SEx[S_1^*] \qquad   (\mbox{by Definition~\ref{def:sex}})&\\
  &\subseteq \SEx[\perp] \qquad (\mbox{by Lemma~\ref{lm:coupled-explore}})&\\
  &=B_1 \qquad (\mbox{by definition of } B_1)&
\end{align*}

For the induction step, assume that~(\ref{eq:puppy}) holds for all
rounds $t < t_0$, for some $t_0\in \NN$. This implies
$A_{t_0}\subseteq B_\ell$ for some phase $\ell$ of $\SIndMax$, and so
the signal $\hat S_\ell$ is at least as informative as the signal
$S_t^*=(\omega, \signalD(A_{t_0}))$. It follows that
\begin{align*}
  \pi^{(t_0)}(S_{t_0}^*) &\in \SEx[S_{t_0}^*] \qquad (\mbox{by Definition~\ref{def:sex}})&\\
  &\subseteq \SEx[\hat S_\ell]\qquad(\mbox{by Lemma~\ref{lm:coupled-explore}})&\\
  &=B_\ell\qquad(\mbox{by definition of } B_\ell)&
\end{align*}
This gives a proof for the claim in~\eqref{eq:puppy}. By the same
reasoning of Claim~\ref{cl:det-Bl}, the $(B_\ell:\;\ell\in\NN)$ are
non-decreasing in $\ell$, and identical for all $\ell\geq |\cA|+1$, so
we have shown that all joint actions explored by $\pi$ must be
contained in $B_{|\cA|+1}$. It follows that $\SIndMax$ is
$\delta$-maximally exploring given access to stochastic utitlities,
and $\hat S_{\ell+1} = \signalD(\cA_{\trueState}^\delta)$.
\end{proof}

\subsection{Putting this all together: proof of  Theorem~\ref{thm:model-noisy}}
\label{sec:stoc-wrapup}

\asedit{We show that $\SIndMax$ leads to a near-optimal iterative recommendation policy. The remaining piece is that whenever a subroutine outputs an approximate signal $\hat S$ for $\signalD(\cA_{\trueState}^\delta)$, we can use $\hat S$ afterwards to
obtain near-optimal rewards.}

\begin{lemma}\label{lem:signal-opt}
Let $\delta, \beta \in (0, 1/2)$ and $\sigma$ be \asedit{any subroutine}
\asmargincomment{$\sigma$ does not need to be max-exploring here!}
with duration $T_\sigma$ that outputs a
$\beta$-signal approximation $\hat S$ to the signal $S^* =
\signalD(\cA_\theta^{\delta})$. Let $\pi^*$ be an optimal $\delta$-BIC
recommendation policy for signal $S^*$, and $\pi$ be the composition
of subroutine $\sigma$ followed by $T - T_\sigma$ copies of
$\pi^*(\hat S)$. Then $\pi$ is BIC as long as 
\asedit{$\beta \leq \tfrac{\delta}{2}\cdot \unifLB/|\Theta|$,
where $\unifLB$ is from \eqref{eq:noisy-unifLB}}. Moreover,
 \[
 \REW(\pi) \geq (T - T_\sigma) (1 - \beta)\, \optdet(\PiBICdet[\delta]).
 \]
\end{lemma}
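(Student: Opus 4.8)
The statement has two parts — that $\pi$ is BIC and that it achieves the claimed reward — and the plan is to prove them in that order, reducing each to tools already developed in this section.

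For the BIC part, the plan is to exhibit $\pi$ as a composition of BIC subroutines and invoke Claim~\ref{cl:warmup-composition}. The subroutine $\sigma$ is BIC by hypothesis. Each of the $T-T_\sigma$ copies of $\pi^*(\hat S)$ is a one-round subroutine whose signal is $\hat S$; since $\hat S$ is part of the output of $\sigma$, every copy is a valid sequel of everything preceding it, so it remains only to check that $\pi^*(\hat S)$ is itself a (0-)BIC recommendation policy. Here I would use that $S^*=\signalD(\cA^\delta_{\trueState})$ is a function of the realized state, so its support $\cX$ — shared with the $\beta$-approximation $\hat S$ — has size at most $|\Theta|$; since $\pi^*$ is $\delta$-BIC for $S^*$, Lemma~\ref{lem:betabic-denoise-signal} gives that $\pi^*$ is BIC for $\hat S$ whenever $\beta\le \delta\,\pmin(\pi^*)/(2|\cX|)$, which is implied by $\beta\le \delta/(2C|\Theta|)$ once we set $C:=1/\pmin(\pi^*)$ — a positive quantity fixed by the utility structure and $\delta$. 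Claim~\ref{cl:warmup-composition} then yields that $\pi$ is BIC.

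For the reward, I would first record the $\delta$-BIC analogue of \eqref{eq:explorableD}: by the $\delta$-BIC version of Claim~\ref{cl:restricted-subset} (its proof carries over verbatim, using monotonicity-in-information for $\delta$-BIC policies, Lemma~\ref{lm:coupled-exploit}), every $\delta$-BIC iterative recommendation policy for the deterministic instance has per-round reward at most $\REW(\pi^*)$, so $\REW(\pi^*)\ge\optdet(\PiBICdet[\delta])$. Then I would look at the rounds after $T_\sigma$, in each of which the recommended joint action is $\pi^*(\hat S)$. Conditioning on $\trueState=\theta$: the signal $S^*$ takes the fixed value $s_\theta$, and the $\beta$-approximation property gives $\Pr[\hat S=s_\theta\mid\theta]\ge 1-\beta$; since the internal randomness of $\pi^*$ is independent of $(\hat S,\trueState)$ and rewards are non-negative,
\[
\E\!\left[f(\pi^*(\hat S),\theta)\mid\theta\right]\;\ge\;\Pr[\hat S=s_\theta\mid\theta]\cdot\E\!\left[f(\pi^*(s_\theta),\theta)\right]\;\ge\;(1-\beta)\,\E\!\left[f(\pi^*(s_\theta),\theta)\right].
\]
Averaging over $\trueState\sim\prior$ and using $S^*=s_{\trueState}$, each such round has expected reward at least $(1-\beta)\,\REW(\pi^*)\ge(1-\beta)\,\optdet(\PiBICdet[\delta])$; summing over the $T-T_\sigma$ rounds after $\sigma$ and bounding $\sigma$'s $T_\sigma$ rounds below by $0$ gives the claim.

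This lemma is largely an assembly of earlier results, so I do not expect a deep obstacle; the two points that need care are (i) getting the clean constant $C$, which I handle by fixing a particular optimal $\delta$-BIC policy $\pi^*$ and taking $C=1/\pmin(\pi^*)$, and (ii) the conditioning step in the reward bound, where it is essential that $S^*=\signalD(\cA^\delta_{\trueState})$ is a \emph{deterministic} function of the realized state — this is exactly what upgrades the guarantee "$\Pr[\hat S=S^*]\ge 1-\beta$ conditional on $\theta$" into the multiplicative $(1-\beta)$ factor, rather than the weaker additive loss one would get from applying Lemma~\ref{lem:bounddiff} directly.
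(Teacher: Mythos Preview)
Your proposal is correct and follows essentially the same approach as the paper: establish BIC via Lemma~\ref{lem:betabic-denoise-signal} (the paper cites the special case Lemma~\ref{lem:betabic-denoise}) together with Claim~\ref{cl:warmup-composition}, then bound the per-round reward by conditioning on $\trueState=\theta$, using the $\beta$-approximation event, and comparing to $\optdet(\PiBICdet[\delta])$ via Claim~\ref{cl:restricted-subset}. Your handling of the constant $C=1/\pmin(\pi^*)$ and your explicit invocation of the $\delta$-BIC version of Claim~\ref{cl:restricted-subset} are in fact slightly cleaner than the paper's own write-up, which equates $\REW(\pi^*)$ with $\REW^*[S^*]$ without comment.
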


\begin{proof}
Since the support of $S^*$ is at most the number of states $|\Theta|$, it
follows from Lemma~\ref{lem:betabic-denoise} that $\pi^*\in \bic[\hat S]$,
and since $\sigma$ is BIC, we also have that $\pi$ is BIC
by Claim~\ref{cl:warmup-composition}.

In the following, we will write $s^*$ and $\hat s$ to denote the
realization of signals $S^*$ and $\hat S$ respectively, and use
$\psi_*$ to denote the joint distribution over the two signals $S^*$,
$\hat S$ and the state $\theta$. By the definition of $\beta$-signal
approximation in Definition~\ref{def:signal-approx}, we have for each $\theta\in \Theta$,
\[
\psi_*\rbr{ \hat S = S^* \mid \trueState = \theta } \geq 1 - \beta.
\]
For $t > T_\sigma$, we can write
the reward at round $t$ as
\begin{align*}
\E_{\pi, \psi_*} \sbr{ f(\pi_t(\hat s), \trueState)}
&= \sum_{\theta\in \Theta} \psi(\theta)
    \E_{\pi, \psi_*}\sbr{ f(\pi_t(\hat s), \theta) \mid \trueState = \theta }\\
&=\sum_{\theta\in \Theta} \psi(\theta)
    \biggl( \psi_*(\hat S = S^* \mid \trueState = \theta)\cdot
    \E_{\pi, \psi_*}\sbr{ f(\pi_t(\hat s), \theta) \mid \trueState = \theta, \hat S =
  S^* } \\
  &\qquad\qquad+ \psi_*(\hat S \neq S^*\mid\theta)\cdot \E_{\pi, \psi_*}\sbr{
  f(\pi_t(\hat s), \theta) \mid \trueState = \theta, \hat S \neq S^*
  }\biggr)\\
&\geq\sum_{\theta\in \Theta} \psi(\theta)(1-
\beta)\E_{\pi, \psi_*}\sbr{ f(\pi_t(\hat s), \theta) \mid \trueState = \theta, \hat S =
  S^* } \\
&= (1-\beta)\sum_{\theta\in \Theta} \psi(\theta) \E_{\pi, \psi_*}\sbr{ f(\pi_t(s^*), \theta) \mid \trueState = \theta, \hat S =
  S^* }
\end{align*}

Note that the random variable $\E_{\pi, \psi_*}\sbr{ f(\pi_t(s^*),
  \theta) \mid \trueState = \theta }$ is independent of event 
$\cbr{\hat S= S^*}$ since the reward is fully determined by the
state and the randomness of $\pi$. In particular, for each
$\theta\in \Theta$,
$$\E_{\pi, \psi_*}\sbr{  f(\pi_t(s^*), \theta) \mid \trueState = \theta, \hat S = S^* } = \E_\pi\sbr{ f(\pi_t(S^*(\theta), \theta)) }$$
In other words,
\[
\E_{\pi, \psi_*} \sbr{ f(\pi_t(\hat s), \theta)}
\geq (1-\beta)\sum_{\theta\in \Theta} \psi(\theta)
    \E_\pi\sbr{ f(\pi_t(S^*(\theta), \theta)) }
= (1 - \beta)\;\REW^*[S^*]
\]
Note that we also have $\REW^*[S^*] \geq
\optdet(\PiBICdet[\delta])$
by Claim~\ref{cl:restricted-subset}. Therefore, in the last $(T -
T_\sigma)$ rounds, the expected reward of $\pi$ is at least 
    $(1 - \beta)\,\optdet(\PiBICdet[\delta])$.
\end{proof}

\asmargincomment{simplified the concluding argument, spelled out the "constants"}
\asedit{To derive Theorem~\ref{thm:model-noisy}, we invoke Lemma~\ref{lem:signal-opt} for subroutine
    $\SIndMax(\cU, \beta)$.
Its duration is upper-bounded by $T_{\beta,\delta}$ in Lemma~\ref{lem:sind-bic}, and it is guaranteed to output a $\beta$-signal-approximation to
    $\signalD(\cA_{\trueState}^{\delta})$ 
by Lemma~\ref{lm:noisy-approx}. Taking
    $\beta = \min\rbr{ \tfrac{1}{T},\; \tfrac{\delta}{2}\cdot \unifLB/|\Theta|}$,
we obtain
 \begin{align}\label{eq:noisy-final}
 T\, \optdet\rbr{\PiBICdet[\delta]} - \REW(\pi) 
    \leq  1+ \frac{|\cA|\cdot \ln\rbr{4n\,|\cA|/\beta}}{\unifLB \cdot\myGap^2},
 \quad\text{where $\unifLB$ is from \eqref{eq:noisy-unifLB}}.
 \end{align}
By Lemma~\ref{lm:model-benchmarks} the benchmark $\optdet\rbr{\PiBICdet[\delta]}$ is at least as strong as the one used in Theorem~\ref{thm:model-noisy}.
}

\section{Conclusions and open questions}
\label{sec:conclusions}

We introduce Bayesian Exloration, a fundamental model which captures incentivizing exploration in Bayesian games, and resolve the first-order issues in this model: explorability and constant/logarithmic regret. Our policies are computationally efficient, in that the per-round running time is polynomial in the input size under a generic game representation. The number of rounds is polynomial in the input size (in the same sense), and also depends on the Bayesian prior.

We make some simplifying assumptions for the sake of tractability.
Let us discuss them in more detail, and outline the corresponding directions for future work.

First, we assume that the game does not change over time, even though there is a fresh set of agents in each round. A more realistic model would have each round and/or each agent identified by "context", such as the current weather and the source-destination pair in the traffic routing example, and allow the tuple of contexts to change from one round to another. Then we (still) have a parameterized game matrix that the principal wishes to learn, but the parametrization is now by states \emph{and} contexts. Absent BIC restriction, this is special case of ``contextual bandits", a generalization of multi-armed bandits that is well-studied in machine learning
\citep[\eg see Chapter 8 in][]{slivkins-MABbook}.
Extending Bayesian Exploration to a model with contexts would be parallel to extending from multi-armed bandits to contextual bandits in the machine learning literature. For the special case of a single agent per round, such extensions have been accomplished in prior work \citep{ICexploration-ec15-conf,ICexploration-ec15} and subsequent work \citep{Jieming-multitypes18}.


\asedit{Second, we assume that the principal can commit to a particular policy, and the agents rationally respond to the incentives created by this policy. Such assumptions are standard in theoretical economics, and tremendously useful in many scenarios. Mitigating them tends to be very challenging. A subsequent paper \citep{Jieming-unbiased18} accomplishes this for the basic version of Bayesian Exploration with one agent per round. A part of the challenge is that it is fundamentally unclear what are the ``right" assumptions on economic behavior that one should strive for.}

\asedit{Third, our results for stochastic utilities only compete against $\delta$-BIC policies, for a given $\delta>0$, whereas the standard notion of BIC corresponds to $\delta=0$. Essentially, this is because our approach requires at least $\log(1/\delta)$ samples from each explorable joint action. While such approach cannot handle $\delta=0$, one could hope to compete against $\delta$-BIC policies for all $\delta>0$ at once, call such policies \emph{strictly-BIC}. This is equivalent to competing against $\delta$-BIC policies for a sufficiently small $\delta = \delta_{\min}>0$, which needs to be found by the algorithm. However, such $\delta_{\min}$ may depend on the true state $\trueState$ (because the set of explorable joint actions may depend on $\trueState$, as per Remark~\ref{rem:explorable}), so it cannot be computed in advance. Estimating $\delta_{\min}$ on the fly may be feasible, but it appears to require the $\SMaxEx$ subroutine to have a data-dependent duration (because of the required number of samples mentioned above). This breaks our analysis on a very basic level, as per Remark~\ref{rem:composition}, because we cannot ensure that the composition of BIC subroutines is itself BIC. So, competing against strictly-BIC policies remains elusive.}

There is also a major assumption that we do \emph{not} make: we do not assume that the game matrix has a succinct representation, \eg via an (unknown) low-dimensional parameter vector. More generally, we do not make any ``inference assumptions" that would allow one to infer something about the rewards of one joint action from the rewards of other actions.%
\footnote{Similarly, "inference assumptions" do not appear in the fundamental results on multi-armed bandits \citep[such as][]{Gittins-index-79,bandits-ucb1,bandits-exp3}, and instead are relegated to the subsequent work.}
Absent such "inference assumptions", exploring all explorable joint actions is typically necessary to optimize the recommendation policy, and computation must take time polynomial in the size of the game matrix. In contrast, a succinct game representation may potentially allow for must faster exploration and/or computation.
\asedit{A subsequent paper \citep{IncentivizedRL} addresses this question in the special case of one agent per round, focusing on incentivizing exploration in reinforcement learning; here the succinct representation is given by the MDP.}  A simultaneous paper \citep{Dughmi-stoc16} addresses a similar question for computationally solving the Bayesian Persuasion game.

\OMIT{ 
Our results pave the way for future work in several directions. The most immediate direction is \emph{computational}: can we achieve polynomial per-round running time if the game has a succinct representation? A simultaneous paper \citep{Dughmi-stoc15} studies a similar question for Bayesian Persuasion. In terms of statistical guarantees, one may want to improve the regret bounds, \eg reduce the dependence of the asymptotic constants on the number of joint actions and parameters of the prior. In the \emph{economics} direction, it is appealing to address agent heterogeneity by incorporating idiosyncratic signals that can be observed and/or elicited by the principal.
} 

\clearpage
\addcontentsline{toc}{section}{References}

\bibliographystyle{plainnat}
\bibliography{refs,bib-abbrv,bib-AGT,bib-bandits,bib-slivkins,bib-competition}

\begin{thebibliography}{50}
\providecommand{\natexlab}[1]{#1}
\providecommand{\url}[1]{\texttt{#1}}
\expandafter\ifx\csname urlstyle\endcsname\relax
  \providecommand{\doi}[1]{doi: #1}\else
  \providecommand{\doi}{doi: \begingroup \urlstyle{rm}\Url}\fi

\bibitem[Aridor et~al.(2020)Aridor, Mansour, Slivkins, and
  Wu]{CompetingBandits-merged}
Guy Aridor, Yishay Mansour, Aleksandrs Slivkins, and Steven Wu.
\newblock Competing bandits: The perils of exploration under competition.,
  2020.
\newblock Working paper. Subsumes conference papers in \emph{ITCS 2018} and
  \emph{ACM EC 2019}. Available at {\tt https://arxiv.org/abs/2007.10144}.

\bibitem[Athey and Segal(2013)]{AtheySegal-econometrica13}
Susan Athey and Ilya Segal.
\newblock An efficient dynamic mechanism.
\newblock \emph{Econometrica}, 81\penalty0 (6):\penalty0 2463--2485, November
  2013.
\newblock A preliminary version has been available as a working paper since
  2007.

\bibitem[Auer et~al.(2002{\natexlab{a}})Auer, Cesa-Bianchi, and
  Fischer]{bandits-ucb1}
Peter Auer, Nicol{\`o} Cesa-Bianchi, and Paul Fischer.
\newblock Finite-time analysis of the multiarmed bandit problem.
\newblock \emph{Machine Learning}, 47\penalty0 (2-3):\penalty0 235--256,
  2002{\natexlab{a}}.

\bibitem[Auer et~al.(2002{\natexlab{b}})Auer, Cesa-Bianchi, Freund, and
  Schapire]{bandits-exp3}
Peter Auer, Nicol{\`o} Cesa-Bianchi, Yoav Freund, and Robert~E. Schapire.
\newblock The nonstochastic multiarmed bandit problem.
\newblock \emph{SIAM J. Comput.}, 32\penalty0 (1):\penalty0 48--77,
  2002{\natexlab{b}}.
\newblock Preliminary version in {\em 36th IEEE FOCS}, 1995.

\bibitem[Babaioff et~al.(2014)Babaioff, Sharma, and Slivkins]{MechMAB-ec09}
Moshe Babaioff, Yogeshwer Sharma, and Aleksandrs Slivkins.
\newblock Characterizing truthful multi-armed bandit mechanisms.
\newblock \emph{SIAM J. on Computing (SICOMP)}, 43\penalty0 (1):\penalty0
  194--230, 2014.
\newblock Preliminary version in \emph{10th ACM EC}, 2009.

\bibitem[Babaioff et~al.(2015{\natexlab{a}})Babaioff, Dughmi, Kleinberg, and
  Slivkins]{DynPricing-ec12}
Moshe Babaioff, Shaddin Dughmi, Robert~D. Kleinberg, and Aleksandrs Slivkins.
\newblock Dynamic pricing with limited supply.
\newblock \emph{{ACM} Trans. on Economics and Computation}, 3\penalty0
  (1):\penalty0 4, 2015{\natexlab{a}}.
\newblock Special issue for \emph{13th ACM EC}, 2012.

\bibitem[Babaioff et~al.(2015{\natexlab{b}})Babaioff, Kleinberg, and
  Slivkins]{Transform-ec10-jacm}
Moshe Babaioff, Robert Kleinberg, and Aleksandrs Slivkins.
\newblock Truthful mechanisms with implicit payment computation.
\newblock \emph{J. of the ACM}, 62\penalty0 (2):\penalty0 10,
  2015{\natexlab{b}}.
\newblock Subsumes conference papers in \emph{ACM EC 2010} and \emph{ACM EC
  2013}.

\bibitem[Bahar et~al.(2016)Bahar, Smorodinsky, and Tennenholtz]{Bahar-ec16}
Gal Bahar, Rann Smorodinsky, and Moshe Tennenholtz.
\newblock Economic recommendation systems.
\newblock In \emph{16th ACM Conf. on Electronic Commerce (ACM-EC)}, 2016.

\bibitem[Bahar et~al.(2019)Bahar, Smorodinsky, and Tennenholtz]{Bahar-ec19}
Gal Bahar, Rann Smorodinsky, and Moshe Tennenholtz.
\newblock Social learning and the innkeeper's challenge.
\newblock In \emph{ACM Conf. on Economics and Computation (ACM-EC)}, pages
  153--170, 2019.

\bibitem[Bergemann and Morris(2013)]{BM-econometrica13}
Dirk Bergemann and Stephen Morris.
\newblock Robust predictions in games with incomplete information.
\newblock \emph{Econometrica}, 81\penalty0 (4):\penalty0 1251--1308, 2013.

\bibitem[Bergemann and Morris(2019)]{BergemannMorris-survey19}
Dirk Bergemann and Stephen Morris.
\newblock Information design: A unified perspective.
\newblock \emph{Journal of Economic Literature}, 57\penalty0 (1):\penalty0
  44--95, March 2019.

\bibitem[Bergemann and V\"{a}lim\"{a}ki(2010)]{DynPivot-econometrica10}
Dirk Bergemann and Juuso V\"{a}lim\"{a}ki.
\newblock The dynamic pivot mechanism.
\newblock \emph{Econometrica}, 78\penalty0 (2):\penalty0 771--789, 2010.
\newblock Preliminary versions have been available since 2006.

\bibitem[Bergemann and V{\"a}lim{\"a}ki(2000)]{bergemann2000experimentation}
Dirk Bergemann and Juuso V{\"a}lim{\"a}ki.
\newblock Experimentation in markets.
\newblock \emph{The Review of Economic Studies}, 67\penalty0 (2):\penalty0
  213--234, 2000.

\bibitem[Besbes and Zeevi(2009)]{BZ09}
Omar Besbes and Assaf Zeevi.
\newblock Dynamic pricing without knowing the demand function: Risk bounds and
  near-optimal algorithms.
\newblock \emph{Operations Research}, 57\penalty0 (6):\penalty0 1407--1420,
  2009.

\bibitem[Bimpikis et~al.(2018)Bimpikis, Papanastasiou, and
  Savva]{Bimpikis-exploration-ms17}
Kostas Bimpikis, Yiangos Papanastasiou, and Nicos Savva.
\newblock Crowdsourcing exploration.
\newblock \emph{Management Science}, 64\penalty0 (4):\penalty0 1477--1973,
  2018.

\bibitem[Bolton and Harris(1999)]{Bolton-econometrica99}
Patrick Bolton and Christopher Harris.
\newblock {Strategic Experimentation}.
\newblock \emph{Econometrica}, 67\penalty0 (2):\penalty0 349--374, 1999.

\bibitem[Bradonjic et~al.(2014)Bradonjic, Ercal, Meyerson, and
  Roytman]{PriceOfMediation-DM14}
Milan Bradonjic, Gunes Ercal, Adam Meyerson, and Alan Roytman.
\newblock The price of mediation.
\newblock \emph{Discrete Mathematics {\&} Theoretical Computer Science},
  16\penalty0 (1):\penalty0 31--60, 2014.

\bibitem[Bubeck and Cesa-Bianchi(2012)]{Bubeck-survey12}
S\'{e}bastien Bubeck and Nicolo Cesa-Bianchi.
\newblock {Regret Analysis of Stochastic and Nonstochastic Multi-armed Bandit
  Problems}.
\newblock \emph{Foundations and Trends in Machine Learning}, 5\penalty0
  (1):\penalty0 1--122, 2012.
\newblock Published with \emph{Now Publishers} (Boston, MA, USA). Also
  available at {\tt https://arxiv.org/abs/1204.5721}.

\bibitem[Che and H\"{o}rner(2018)]{Che-13}
Yeon-Koo Che and Johannes H\"{o}rner.
\newblock Recommender systems as mechanisms for social learning.
\newblock \emph{Quarterly Journal of Economics}, 133\penalty0 (2):\penalty0
  871–925, 2018.
\newblock Working paper since 2013, titled 'Optimal design for social
  learning'.

\bibitem[Devanur and Kakade(2009)]{DevanurK09}
Nikhil Devanur and Sham~M. Kakade.
\newblock The price of truthfulness for pay-per-click auctions.
\newblock In \emph{10th ACM Conf. on Electronic Commerce (ACM-EC)}, pages
  99--106, 2009.

\bibitem[Dughmi and Xu(2016)]{Dughmi-stoc16}
Shaddin Dughmi and Haifeng Xu.
\newblock Algorithmic bayesian persuasion.
\newblock In \emph{48th ACM Symp. on Theory of Computing (STOC)}, 2016.

\bibitem[Engelbrecht-Wiggans(1986)]{Wiggans-86}
Richard Engelbrecht-Wiggans.
\newblock On the value of private information in an auction: ignorance may be
  bliss.
\newblock faculty working paper 1242, Bureau of Economic and Business Research
  (BEBR), University of Illinois at Urbana-Champaign, 1986.

\bibitem[Frazier et~al.(2014)Frazier, Kempe, Kleinberg, and
  Kleinberg]{Frazier-ec14}
Peter Frazier, David Kempe, Jon~M. Kleinberg, and Robert Kleinberg.
\newblock Incentivizing exploration.
\newblock In \emph{ACM Conf. on Economics and Computation (ACM-EC)}, pages
  5--22, 2014.

\bibitem[Fudenberg and Levine(1998)]{FLbook}
Drew Fudenberg and David~K. Levine.
\newblock \emph{The Theory of Learning in Games}.
\newblock MIT Press, 1998.

\bibitem[Ghosh and Hummel(2013)]{Ghosh-itcs13}
Arpita Ghosh and Patrick Hummel.
\newblock Learning and incentives in user-generated content: multi-armed
  bandits with endogenous arms.
\newblock In \emph{Innovations in Theoretical Computer Science Conf. (ITCS)},
  pages 233--246, 2013.

\bibitem[Gittins(1979)]{Gittins-index-79}
J.~C. Gittins.
\newblock {Bandit processes and dynamic allocation indices (with discussion)}.
\newblock \emph{J. Roy. Statist. Soc. Ser. B}, 41:\penalty0 148--177, 1979.

\bibitem[Gittins et~al.(2011)Gittins, Glazebrook, and Weber]{Gittins-book11}
John Gittins, Kevin Glazebrook, and Richard Weber.
\newblock \emph{{Multi-Armed Bandit Allocation Indices}}.
\newblock John Wiley \& Sons, Hoboken, NJ, USA, 2nd edition, 2011.
\newblock The first edition, single-authored by John Gittins, has been
  published in 1989.

\bibitem[Golub and Sadler(2016)]{Golub-survey16}
Benjamin Golub and Evan~D. Sadler.
\newblock Learning in social networks.
\newblock In Yann Bramoullé, Andrea Galeotti, and Brian Rogers, editors,
  \emph{The Oxford Handbook of the Economics of Networks}. Oxford University
  Press, 2016.

\bibitem[Ho et~al.(2016)Ho, Slivkins, and Vaughan]{RepeatedPA-ec14}
Chien-Ju Ho, Aleksandrs Slivkins, and Jennifer~Wortman Vaughan.
\newblock Adaptive contract design for crowdsourcing markets: Bandit algorithms
  for repeated principal-agent problems.
\newblock \emph{J. of Artificial Intelligence Research}, 55:\penalty0 317--359,
  2016.
\newblock Preliminary version appeared in \emph{ACM EC 2014}.

\bibitem[H\"{o}rner and Skrzypacz(2017)]{Horner-survey16}
Johannes H\"{o}rner and Andrzej Skrzypacz.
\newblock Learning, experimentation, and information design.
\newblock In Bo~Honor\'{e}, Ariel Pakes, Monika Piazzesi, and Larry Samuelson,
  editors, \emph{Advances in Economics and Econometrics: 11th World Congress},
  volume~1, page 63–98. Cambridge University Press, 2017.

\bibitem[Immorlica et~al.(2019)Immorlica, Mao, Slivkins, and
  Wu]{Jieming-multitypes18}
Nicole Immorlica, Jieming Mao, Aleksandrs Slivkins, and Steven Wu.
\newblock Bayesian exploration with heterogenous agents.
\newblock In \emph{The Web Conference (formerly known as \emph{WWW})}, pages
  751--761, 2019.

\bibitem[Immorlica et~al.(2020)Immorlica, Mao, Slivkins, and
  Wu]{Jieming-unbiased18}
Nicole Immorlica, Jieming Mao, Aleksandrs Slivkins, and Steven Wu.
\newblock Incentivizing exploration with selective data disclosure.
\newblock In \emph{ACM Conf. on Economics and Computation (ACM-EC)}, 2020.
\newblock Working paper available at {\tt https://arxiv.org/abs/1811.06026}.

\bibitem[Kakade et~al.(2013)Kakade, Lobel, and Nazerzadeh]{Kakade-pivot-or13}
Sham~M. Kakade, Ilan Lobel, and Hamid Nazerzadeh.
\newblock Optimal dynamic mechanism design and the virtual-pivot mechanism.
\newblock \emph{Operations Research}, 61\penalty0 (4):\penalty0 837--854, 2013.

\bibitem[Kamenica(2019)]{Kamenica-survey19}
Emir Kamenica.
\newblock Bayesian persuasion and information design.
\newblock \emph{Annual Review of Economics}, 11\penalty0 (1):\penalty0
  249--272, 2019.

\bibitem[Kamenica and Gentzkow(2011)]{Kamenica-aer11}
Emir Kamenica and Matthew Gentzkow.
\newblock {Bayesian Persuasion}.
\newblock \emph{American Economic Review}, 101\penalty0 (6):\penalty0
  2590--2615, 2011.

\bibitem[Keller and Rady(2003)]{keller2003price}
Godfrey Keller and Sven Rady.
\newblock Price dispersion and learning in a dynamic differentiated-goods
  duopoly.
\newblock \emph{RAND Journal of Economics}, pages 138--165, 2003.

\bibitem[Kessler(1998)]{Kessler-98}
Anke Kessler.
\newblock The value of ignorance.
\newblock \emph{The RAND J. of Economics}, 29\penalty0 (2):\penalty0 339--354,
  1998.

\bibitem[Kleinberg and Leighton(2003)]{KleinbergL03}
Robert~D. Kleinberg and Frank~T. Leighton.
\newblock The value of knowing a demand curve: Bounds on regret for online
  posted-price auctions.
\newblock In \emph{IEEE Symp. on Foundations of Computer Science (FOCS)}, pages
  594--605, 2003.

\bibitem[Kremer et~al.(2014)Kremer, Mansour, and Perry]{Kremer-JPE14}
Ilan Kremer, Yishay Mansour, and Motty Perry.
\newblock Implementing the ``wisdom of the crowd".
\newblock \emph{J. of Political Economy}, 122\penalty0 (5):\penalty0 988--1012,
  2014.
\newblock Preliminary version in \emph{ACM EC 2013}.

\bibitem[Lai and Robbins(1985)]{Lai-Robbins-85}
Tze~Leung Lai and Herbert Robbins.
\newblock {Asymptotically efficient Adaptive Allocation Rules}.
\newblock \emph{Advances in Applied Mathematics}, 6:\penalty0 4--22, 1985.

\bibitem[Lattimore and Szepesv\'{a}ri(2020)]{LS19bandit-book}
Tor Lattimore and Csaba Szepesv\'{a}ri.
\newblock \emph{Bandit Algorithms}.
\newblock Cambridge University Press, Cambridge, UK, 2020.
\newblock Versions available at {\tt https://banditalgs.com/} since 2018.

\bibitem[Mansour et~al.(2015)Mansour, Slivkins, and
  Syrgkanis]{ICexploration-ec15-conf}
Yishay Mansour, Aleksandrs Slivkins, and Vasilis Syrgkanis.
\newblock Bayesian incentive-compatible bandit exploration.
\newblock In \emph{16th ACM Conf. on Economics and Computation (ACM-EC)}, pages
  565--582, 2015.

\bibitem[Mansour et~al.(2016)Mansour, Slivkins, Syrgkanis, and
  Wu]{ICexplorationGames-ec16-conf}
Yishay Mansour, Aleksandrs Slivkins, Vasilis Syrgkanis, and Zhiwei~Steven Wu.
\newblock Bayesian exploration: Incentivizing exploration in bayesian games.
\newblock In \emph{17th ACM Conf. on Economics and Computation (ACM-EC)}, 2016.
\newblock Working paper available at {\tt https://arxiv.org/abs/1602.07570}
  since Feb 2016.

\bibitem[Mansour et~al.(2020)Mansour, Slivkins, and
  Syrgkanis]{ICexploration-ec15}
Yishay Mansour, Aleksandrs Slivkins, and Vasilis Syrgkanis.
\newblock Bayesian incentive-compatible bandit exploration.
\newblock \emph{Operations Research}, 68\penalty0 (4):\penalty0 1132--1161,
  2020.
\newblock Preliminary version in \emph{ACM EC 2015}.

\bibitem[Sellke and Slivkins(2021)]{Selke-BE-2020}
Mark Sellke and Aleksandrs Slivkins.
\newblock The price of incentivizing exploration: A characterization via
  thompson sampling and sample complexity, 2021.
\newblock Working paper since Feb'20, available at
  \texttt{https://arxiv.org/abs/2002.00558}.

\bibitem[Simchowitz and Slivkins(2021)]{IncentivizedRL}
Max Simchowitz and Aleksandrs Slivkins.
\newblock Incentives and exploration in reinforcement learning, 2021.
\newblock Working paper, available at
  \texttt{https://arxiv.org/abs/2103.00360}.

\bibitem[Singla and Krause(2013)]{Krause-www13}
Adish Singla and Andreas Krause.
\newblock Truthful incentives in crowdsourcing tasks using regret minimization
  mechanisms.
\newblock In \emph{22nd Intl. World Wide Web Conf. (WWW)}, pages 1167--1178,
  2013.

\bibitem[Slivkins(2019)]{slivkins-MABbook}
Aleksandrs Slivkins.
\newblock Introduction to multi-armed bandits.
\newblock \emph{Foundations and Trends$\circledR$ in Machine Learning},
  12\penalty0 (1-2):\penalty0 1--286, November 2019.
\newblock Published with \emph{Now Publishers} (Boston, MA, USA). Also
  available at {\tt https://arxiv.org/abs/1904.07272}.

\bibitem[Syrgkanis et~al.(2015)Syrgkanis, Kempe, and Tardos]{Syrgkanis-ec15}
Vasilis Syrgkanis, David Kempe, and Eva Tardos.
\newblock Information asymmetries in common-value auctions with discrete
  signals.
\newblock In \emph{ACM Conf. on Economics and Computation (ACM-EC)}, 2015.

\bibitem[Wang et~al.(2014)Wang, Deng, and Ye]{Wang-OR14}
Zizhuo Wang, Shiming Deng, and Yinyu Ye.
\newblock Close the gaps: {A} learning-while-doing algorithm for single-product
  revenue management problems.
\newblock \emph{Operations Research}, 62\penalty0 (2):\penalty0 318--331, 2014.

\end{thebibliography}

\newpage
\appendix
\section{Monotonicity in information: formulations and proofs}
\label{app:coupled-signals}

Recall the recommendation game with coupled signals $S,S'$ such that $S$ is at least as informative as $S'$ (as per Definition~\ref{def:more-informative}). We prove the following lemma which unifies Lemma~\ref{lm:coupled-exploit} and Lemma~\ref{lm:coupled-explore}. We state this lemma for $\delta$-BIC policies, as it is also used for stochastic utilities in Section~\ref{sec:noisy}.

\begin{lemma}[monotonicity-in-information]
\label{lm:coupled}
Consider a recommendation game with coupled signals $S,S'$ such that
$S$ is at least as informative as $S'$. Then
\begin{itemize}
\item[(a)]
    $\REW^*[S]\geq \REW^*[S']$
\item[(b)] Fix $\delta\geq 0$. Then $\SEx[S']\subset \SEx[S]$ with probability $1$.
\end{itemize}
\end{lemma}


Throughout this section, signals $S,S'$ are as in Lemma~\ref{lm:coupled}, and $\cX,\cX'$ are their respective supports. All expectations are over the random choice of $(S,S',\trueState)$, and also the internal randomness of recommendation policies (if applicable).


We will use the notion of ``at least as informative" via the following corollary (which is immediate from Definition~\ref{def:more-informative}):

\begin{claim}\label{cl:redundancy}
For any function $h:\Theta\to\RR$ and any feasible signal $s\in S, s'\in S'$
\begin{align*}
\E\sbr{ h(\trueState) \mid S=s} = \E\sbr{ h(\trueState) \mid S=s,S'=s'}.
\end{align*}
\end{claim}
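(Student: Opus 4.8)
The plan is to unwind both conditional expectations into finite sums over $\Theta$ and invoke Definition~\ref{def:more-informative} term by term. Concretely, fix a feasible signal $s\in\cX$ and a value $s'\in\cX'$ for which the joint event $\{S=s,\,S'=s'\}$ has positive probability (only such pairs make the right-hand side well-defined; pairs of zero probability can be ignored, or the claim restricted to them). Since $\Theta$ is finite, I would write
\begin{align*}
\E[h(\trueState)\mid S=s,\,S'=s']
 \;=\; \sum_{\theta\in\Theta} h(\theta)\,\Pr[\trueState=\theta\mid S=s,\,S'=s'].
\end{align*}

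Next I would apply the hypothesis that $S$ is at least as informative as $S'$, which by Definition~\ref{def:more-informative} means precisely that $\Pr[\trueState=\theta\mid S=s,\,S'=s']=\Pr[\trueState=\theta\mid S=s]$ for every $\theta\in\Theta$ and every such pair $(s,s')$. Substituting this into each summand gives
\begin{align*}
\E[h(\trueState)\mid S=s,\,S'=s']
 \;=\; \sum_{\theta\in\Theta} h(\theta)\,\Pr[\trueState=\theta\mid S=s]
 \;=\; \E[h(\trueState)\mid S=s],
\end{align*}
where the last equality is again just the definition of conditional expectation over the finite set $\Theta$. This is exactly the claimed identity.

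There is essentially no obstacle here: the statement is a direct consequence of Definition~\ref{def:more-informative} once both sides are expanded, and the only point requiring a word of care is the implicit assumption that the conditioning events have positive probability (so the definition applies to the relevant pairs $(s,s')$). I would state this restriction explicitly at the start of the proof. The claim will then be used in the sequel exactly as a licence to replace $\E[h(\trueState)\mid S]$ by $\E[h(\trueState)\mid S,S']$ (and vice versa) whenever $h$ is a payoff- or reward-derived quantity, which is the mechanism underlying the monotonicity-in-information arguments for both $\REW^*[\cdot]$ and $\Ex[\cdot]$.
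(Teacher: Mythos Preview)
Your proposal is correct and is exactly the obvious unwinding the paper has in mind: the paper states Claim~\ref{cl:redundancy} as an immediate corollary of Definition~\ref{def:more-informative} without giving a separate proof, and your expansion of both conditional expectations over the finite state space $\Theta$ followed by termwise substitution of the defining identity is precisely that corollary made explicit. Your remark about restricting to pairs $(s,s')$ with $\Pr[S=s,\,S'=s']>0$ is appropriate and matches how the claim is used downstream.
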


Given a policy $\pi'$ for signal $S'$, one can define the \emph{induced} policy $\pi$ for signal $S$ by setting
\begin{align*}
\Pr\sbr{\pi(s)=a} = \Pr\sbr{ \pi'(S')=a \mid S=s}
    \qquad \forall a\in\cA,\, s\in \cX.
\end{align*}

We use Claim~\ref{cl:redundancy} to derive a more elaborate technical property: essentially, that policies $\pi$ and $\pi'$ are equivalent when applied to any given function of
$h:\cA\times \Theta\to\RR$.

\begin{claim}\label{cl:redundancy-policy}
Let $\pi'$ be an arbitrary recommendation policy for signal $S'$, and let $\pi$ be the induced policy for signal $S$. Then for any function $h:\cA\times \Theta\to\RR$,
\begin{align}\label{eq:cl:redundancy-policy}
\E\sbr{ h(\pi(S),\trueState)} = \E\sbr{h(\pi'(S'),\trueState)}.
\end{align}
\end{claim}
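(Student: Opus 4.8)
The plan is to prove \eqref{eq:cl:redundancy-policy} by conditioning on the realization of $S$ and reducing to an identity of conditional expectations. Since the internal random seed of each policy is independent of $(S,S',\trueState)$, conditioning on $S=s$ gives
\[
\E[h(\pi(S),\trueState)\mid S=s] = \sum_{a\in\cA}\Pr[\pi(s)=a]\cdot\E[h(a,\trueState)\mid S=s],
\]
and summing this against $\Pr[S=s]$ recovers $\E[h(\pi(S),\trueState)]$; the same holds for $\pi'$ after conditioning on $S$. Hence it suffices to show, for every feasible $s\in\cX$, that $\E[h(\pi(S),\trueState)\mid S=s] = \E[h(\pi'(S'),\trueState)\mid S=s]$.

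Next I would expand the right-hand side one level further over the realizations $s'\in\cX'$ of $S'$:
\[
\E[h(\pi'(S'),\trueState)\mid S=s]
= \sum_{s'\in\cX'}\Pr[S'=s'\mid S=s]\;\sum_{a\in\cA}\Pr[\pi'(s')=a]\cdot\E[h(a,\trueState)\mid S=s,\,S'=s'].
\]
The one substantive ingredient is Claim~\ref{cl:redundancy}, applied for each fixed $a\in\cA$ to the function $\theta\mapsto h(a,\theta)$: because $S$ is at least as informative as $S'$, we get $\E[h(a,\trueState)\mid S=s,\,S'=s'] = \E[h(a,\trueState)\mid S=s]$. Substituting this and pulling $\E[h(a,\trueState)\mid S=s]$ out of the sum over $s'$ leaves $\sum_{a\in\cA}\E[h(a,\trueState)\mid S=s]\cdot\bigl(\sum_{s'}\Pr[S'=s'\mid S=s]\,\Pr[\pi'(s')=a]\bigr)$. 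The parenthesized quantity is exactly $\Pr[\pi'(S')=a\mid S=s]$ (again using that $\pi'$'s seed is independent of the signals), which equals $\Pr[\pi(s)=a]$ by the definition of the induced policy $\pi$. So the right-hand side collapses to $\sum_{a\in\cA}\Pr[\pi(s)=a]\cdot\E[h(a,\trueState)\mid S=s] = \E[h(\pi(S),\trueState)\mid S=s]$, completing the reduction.

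I expect the only delicate point to be the bookkeeping of the conditioning and the repeated use of the independence between each policy's random seed and $(S,S',\trueState)$; beyond that the argument is a short chain of equalities whose single nontrivial step is Claim~\ref{cl:redundancy}. Summing the resulting per-$s$ identity against $\Pr[S=s]$ then yields \eqref{eq:cl:redundancy-policy}.
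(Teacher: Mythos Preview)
Your proposal is correct and follows essentially the same route as the paper: condition on $S=s$, expand over realizations of $S'$, invoke Claim~\ref{cl:redundancy} to drop the extra conditioning on $S'=s'$, and then collapse the sum via the definition of the induced policy. The only cosmetic difference is that the paper first reduces to the case of deterministic $\pi'$ and then averages over its internal randomness at the end, whereas you keep $\Pr[\pi'(s')=a]$ in the calculation throughout; your version is slightly more streamlined but otherwise identical in substance.
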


\begin{proof}
Fix a feasible signal $s\in\cX$. Assume for now that policy $\pi'$ is deterministic. Then for any joint action $a\in\cX$ we have
\begin{align}
&\Pr[\pi(s) = a \mid S=s] \nonumber \\
&\qquad = \Pr[\pi(s) = a ]
    & \text{(because randomization in $\pi$ is independent)} \nonumber \\
&\qquad = \Pr[\pi'(S') = a \mid S=s]
    & \text{(by definition of induced policy)} \nonumber \\
&\qquad = \sum_{s'\in \cX'}
    \Pr[\pi'(s')=a] \cdot \Pr[S'=s' \mid S=s] \nonumber \\
&\qquad = \sum_{s'\in \cX':\; \pi'(s')=a} \; \Pr[S'=s' \mid S=s]
    & \text{(since $\pi'$ is deterministic)}. \label{eq:cl:reduncancy-pf}
\end{align}
Therefore,
\begin{align*}
&\E\sbr{h(\pi(S),\trueState) \mid S=s} \\
&\qquad = \sum_{a\in\cA}\; \E\sbr{h(a,\trueState) \mid S=s} \cdot \Pr[\pi(s)=a \mid S=s]  \\
&\qquad = \quad \sum_{a\in\cA,\; s'\in\cX':\; \pi'(s')=a}\;
    \E\sbr{h(a,\trueState) \mid S=s} \cdot  \Pr[S'=s' \mid S=s]
    &\text{(by \eqref{eq:cl:reduncancy-pf})}\\
&\qquad = \sum_{s'\in\cX'}\;
    \E\sbr{h(\pi'(s'),\trueState) \mid S=s} \cdot  \Pr[S'=s' \mid S=s] & \\
&\qquad = \sum_{s'\in\cX'}\;
    \E\sbr{h(\pi'(s'),\trueState) \mid S=s,\, S=s'} \cdot  \Pr[S'=s' \mid S=s]
    &\text{(by Claim~\ref{cl:redundancy})}\\
&\qquad = \E\sbr{h(\pi'(s'),\trueState) \mid S=s}.
\end{align*}
Taking expectations over the realizations of signal $S$, we obtain the claim (namely, \eqref{eq:cl:redundancy-policy}) for the special case when policy $\pi'$ is deterministic. For a randomized policy $\pi'$, we obtain the claim by taking expectation over all possible realizations of $\pi'$.
\end{proof}

We use Claim~\ref{cl:redundancy-policy} in two ways: to argue about the rewards and to argue about BIC.

\begin{corollary}\label{cor:redundancy-policy}
Let $\pi'$ be a policy for signal $S'$, and let $\pi$ be the induced policy for signal $S$. \begin{OneLiners}
\item[(a)] $\REW(\pi) \geq \REW(\pi')$.
\item[(b)] Fix $\delta\geq 0$. If $\pi'$ is $\delta$-BIC, then the induced policy $\pi$ is $\delta$-BIC, too.
\end{OneLiners}
\end{corollary}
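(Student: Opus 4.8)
The plan is to read off both parts directly from Claim~\ref{cl:redundancy-policy}, choosing the test function $h$ appropriately, and feeding the BIC constraint through in its ``unconditional-expectation'' form.

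For part~(a), take $h=f$, the reward function, viewed as a map $\cA\times\Theta\to\RR$. Claim~\ref{cl:redundancy-policy} then gives $\REW(\pi)=\E[f(\pi(S),\trueState)]=\E[f(\pi'(S'),\trueState)]=\REW(\pi')$; in particular $\REW(\pi)\geq\REW(\pi')$. (We actually obtain equality, but the inequality is all that is needed downstream.)

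For part~(b), fix an agent $i\in[n]$ and two distinct actions $b_i,b_i'\in\cA_i$, and define $h\colon\cA\times\Theta\to\RR$ by $h\big((a_i,a_{-i}),\theta\big)=\indicator{a_i=b_i}\cdot\big(u_i(b_i,a_{-i};\theta)-u_i(b_i',a_{-i};\theta)\big)$. Plugging in $a=\pi'(S')$ and $\theta=\trueState$, the quantity $\E[h(\pi'(S'),\trueState)]$ equals $\Pr[\pi'_i(S')=b_i]\cdot\E\!\big[u_i(b_i,\pi'_{-i}(S');\trueState)-u_i(b_i',\pi'_{-i}(S');\trueState)\mid\pi'_i(S')=b_i\big]$, which is $0$ if $\Pr[\pi'_i(S')=b_i]=0$ and nonnegative by the BIC inequality of Definition~\ref{def:rec-BIC} otherwise; in either case $\E[h(\pi'(S'),\trueState)]\geq 0$. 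By Claim~\ref{cl:redundancy-policy}, $\E[h(\pi(S),\trueState)]=\E[h(\pi'(S'),\trueState)]\geq 0$, i.e.\ $\E\!\big[\indicator{\pi_i(S)=b_i}\cdot(u_i(b_i,\pi_{-i}(S);\trueState)-u_i(b_i',\pi_{-i}(S);\trueState))\big]\geq 0$. Whenever $\Pr[\pi_i(S)=b_i]>0$, dividing by this probability yields exactly the conditional BIC inequality of Definition~\ref{def:rec-BIC} for $\pi$ at $(i,b_i,b_i')$. Since $i$, $b_i$, $b_i'$ were arbitrary, $\pi$ is BIC.

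The argument is essentially bookkeeping, and I do not expect a genuine obstacle: all the substantive work (the use of ``at least as informative'' via Claim~\ref{cl:redundancy}) is already encapsulated in Claim~\ref{cl:redundancy-policy}. The only mild point to be careful about is the equivalence between the conditional form of the BIC constraint in Definition~\ref{def:rec-BIC} and the ``unconditional'' form $\E[\indicator{\cdot}\cdot W]\geq 0$ --- the same reformulation used in the LP derivation in Section~\ref{sec:rec-game-props} --- since Claim~\ref{cl:redundancy-policy} applies to plain expectations of functions of $(\pi(S),\trueState)$ and not to conditional expectations.
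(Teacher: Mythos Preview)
Your proof is correct and follows essentially the same approach as the paper: both parts apply Claim~\ref{cl:redundancy-policy} with $h=f$ for (a) and with $h(b,\theta)=\indicator{b_i=a_i}\cdot(u_i(a_i,b_{-i};\theta)-u_i(a_i',b_{-i};\theta))$ for (b). The only cosmetic difference is that the paper additionally observes $\Pr[\pi_i(S)=a_i]=\Pr[\pi'_i(S')=a_i]$ to conclude that the two \emph{conditional} expectations coincide, whereas you work directly with the unconditional form and divide at the end; your route is slightly more streamlined but the content is the same.
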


\begin{proof}
For part (a), simply use Claim~\ref{cl:redundancy-policy} with function $h(a,\theta)=f(a,\theta)$.

For part (b), fix agent $i$ and any two actions $a_i, \tilde{a}_i\in \cA_i$. Let us use the following shorthand:
\begin{align*}
\Delta u(b,\theta)
    :=u_i\left( (a_i, b_{-i}) , \theta\right) -
        u_i\left( (\tilde{a}_i, b_{-i}), \theta \right),
        \quad b\in\cA,\, \theta\in\Theta.
\end{align*}
\noindent Use Claim~\ref{cl:redundancy-policy} with function
\begin{align*}
h(b,\theta) := \indicator{b_i=a_i} \cdot \Delta u(b,\theta),
\quad b\in\cA,\, \theta\in\Theta.
\end{align*}
It follows that
\begin{align*}
\E\sbr{ \indicator{\pi_i(S)=a_i} \cdot \Delta u(\pi(S); \trueState) }
=\E\sbr{ \indicator{\pi'_i(S')=a_i} \cdot \Delta u(\pi'(S'); \trueState) }.
\end{align*}
Consequently, since
    $\Pr[\pi_i(S)=a_i] = \Pr[\pi'_i(S')=a_i]$,
we have
\begin{align*}
\E\sbr{ \Delta u(\pi(S); \trueState)      \mid \pi_i(S)=a_i }
=\E\sbr{ \Delta u(\pi'(S'); \trueState) \mid \pi'_i(S')=a_i },
\end{align*}
whenever $\Pr[\pi_i(S)=a_i]>0$. The right-hand side of this equation is greater or equal to $\delta$ because policy $\pi'$ is $\delta$-BIC for signal $S'$. Since this holds for any agent $i$ and any two actions $a_i, \tilde{a}_i\in \cA_i$, the induced policy $\pi$ is $\delta$-BIC, too.
\end{proof}

\begin{proof}[Proof of Lemma~\ref{lm:coupled}]
To prove that
    $\REW^*[S]\geq \REW^*[S']$,
let $\pi'$ be an optimal policy for signal $S'$, and $\pi$ be the corresponding induced policy for signal $S$. Then by Corollary~\ref{cor:redundancy-policy} it follows that policy $\pi$ is BIC and has the same expected reward; therefore,
    $\REW^*[S]\geq \REW(\pi) = \REW(\pi') = \REW^*[S']$.

To prove that
     $\SEx[S']\subset \SEx[S]$,
let $\pi'$ be a $\delta$-max-support policy for signal $S'$, and $\pi$ be the corresponding induced policy for signal $S$. Policy $\pi$ is $\delta$-BIC by Corollary~\ref{cor:redundancy-policy}(b). Moreover, for all feasible signals $s\in\cX, s'\in\cX'$ such that
    $\Pr[S'=s'\mid S=s]>0$ we have:
\begin{align}
\Ex_s[S]
    &= \cbr{ a\in\cA:\, \Pr[\pi(s)=a]>0} \nonumber \\
    &= \cbr{ a\in\cA:\, \Pr[\pi'(S')=a \mid S=s]>0}
        &\text{(since $\pi$ is an ``induced policy")}\nonumber \\
    &\supset \cbr{ a\in\cA:\, \Pr[\pi'(s')=a]>0}
        &\text{(see below)} \label{eq:pf:lm:coupled}\\
    &= \Ex_{s'}[S']. \nonumber
\end{align}
It follows that $\SEx[S]\supset \SEx[S']$, as claimed.

It remains to prove that the inclusion $\supset$ holds in \eqref{eq:pf:lm:coupled}. Fix some joint action
 $a\in\cA$ such that $\Pr[\pi'(s')=a]>0$. Recall that
 $\Pr[S'=s'\mid S=s]>0$
by assumption. Then
\begin{align*}
\Pr[\pi'(S')=a \mid S=s]
    &\geq \Pr[S'=s'\mid S=s]\cdot \Pr[\pi'(S'=s')=a \mid S=s,\, S'=s']  \\
    &= \Pr[S'=s'\mid S=s]\cdot \Pr[\pi'(s')=a] \\
    &>0. \qedhere
\end{align*}
\end{proof}

\section{Approximating the expected utilities: proof of Lemma~\ref{lem:denoiseacc}}
\label{app:pf:lem:denoiseacc}

We will use the \emph{Chernoff-Hoeffding Bound}, a standard result on concentration of measure.

\begin{lemma}[Chernoff-Hoeffding Bound]\label{chbound}
Let $X_1,\ldots , X_n$ be i.i.d. random variables with $\Ex[X_i] =
\mu$ and $a\leq X_i\leq b$ for all $i$. Then for every $\delta > 0$,
\[
\Pr\left[\left|\frac{\sum_i X_i}{n} -\mu \right|\geq \delta\right]
\leq 2\exp\left(\frac{-2\delta^2 n}{(b-a)^2} \right).\]
\end{lemma}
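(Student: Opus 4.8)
The plan is to prove this via the standard exponential moment method (Chernoff's method), the crux of which is a sub-Gaussian estimate for bounded random variables. First I would reduce to a one-sided bound: it suffices to show $\Pr[\tfrac{1}{n}\sum_i X_i - \mu \ge \delta] \le \exp(-2\delta^2 n/(b-a)^2)$, since applying the same inequality to the variables $-X_i$ (which have mean $-\mu$ and an interval of the same width $b-a$) controls the lower tail, and a union bound over the two tails produces the factor of $2$ in the statement.

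For the one-sided bound, fix $t>0$ and set $Y_i = X_i - \mu$, so that $\mathbb{E}[Y_i]=0$ and $Y_i \in [a-\mu,\,b-\mu]$. Applying Markov's inequality to the nonnegative random variable $\exp(t\sum_i Y_i)$ gives
\[
\Pr\Big[\textstyle\sum_i Y_i \ge n\delta\Big] = \Pr\big[e^{t\sum_i Y_i} \ge e^{tn\delta}\big] \le e^{-tn\delta}\,\mathbb{E}\big[e^{t\sum_i Y_i}\big] = e^{-tn\delta}\prod_{i=1}^n \mathbb{E}\big[e^{tY_i}\big],
\]
where the last equality uses independence of the $Y_i$.

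The heart of the argument — and the step I expect to be the main obstacle — is bounding each factor $\mathbb{E}[e^{tY_i}]$. This is \emph{Hoeffding's lemma}: for any mean-zero variable $Y$ supported on an interval of length $\ell = b-a$, one has $\mathbb{E}[e^{tY}] \le \exp(t^2 \ell^2/8)$. I would prove it by convexity. Writing the support interval as $[c,d]$ with $c=a-\mu \le 0 \le d=b-\mu$ and $d-c=\ell$, convexity of $y\mapsto e^{ty}$ gives the chord bound $e^{ty} \le \frac{d-y}{d-c}e^{tc} + \frac{y-c}{d-c}e^{td}$ for $y\in[c,d]$; taking expectations and using $\mathbb{E}[Y]=0$ yields $\mathbb{E}[e^{tY}] \le \frac{d}{d-c}e^{tc} - \frac{c}{d-c}e^{td}$. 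Writing the right-hand side as $e^{\varphi(t)}$, one checks $\varphi(0)=\varphi'(0)=0$, and $\varphi''(t)$ equals the variance of a two-point (tilted) distribution supported in an interval of length $\ell$, hence $\varphi''(t)\le \ell^2/4$ for all $t$. Taylor's theorem then gives $\varphi(t)\le t^2\ell^2/8$.

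Combining the two steps, $\Pr[\sum_i Y_i \ge n\delta] \le \exp(-tn\delta + nt^2\ell^2/8)$. Finally I would optimize the exponent over $t>0$: it is minimized at $t^\star = 4\delta/\ell^2$, yielding exponent $-2n\delta^2/\ell^2 = -2\delta^2 n/(b-a)^2$. This establishes the one-sided bound, and the symmetrization and union bound noted at the outset complete the proof.
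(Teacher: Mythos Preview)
Your proof is correct and is the standard textbook argument via the exponential moment method and Hoeffding's lemma. Note, however, that the paper does not actually prove this statement: it simply introduces the Chernoff--Hoeffding bound as ``a standard result on concentration of measure'' and states it without proof, using it as a black box in the proof of Lemma~\ref{lem:denoiseacc}. So there is no paper proof to compare against; your derivation supplies what the paper takes for granted.
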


To show that $S'$ is a $\beta$-approximation for $S$, we will
first show that for any fixed state $\theta$, with probability at
least $1 - \beta$ over the realization of the stochastic utilities, the
realizations of $S'$ matches with $S$. In particular, we will
show that the average utilities $\overline U = \left(\overline u_i(a)
\right)_{i\in [n], a\in \cA}$ are close to the expected utilities
$u_i(a, \theta)$. Note that each realized utilities has bounded range
$u_i^j(a)\in [0,1]$, by
Chernoff-Hoeffding bound and an application of union bound, we know
with probability at least $1 - \beta$, the average utilities satisfy
\[
\mbox{for all } i\in [n], a\in B, \qquad |\overline u_i(a) -
u_i(a)| \leq \sqrt{\frac{1}{2d}\ln\left(\frac{2|B|n}{\beta} \right)}
\]
Note that for $d \geq \frac{1}{\zeta^2} \ln\left(
\frac{2|B|n}{\beta}\right)$, then we will have $|\overline u_i(a) -
u_i(a)| < \zeta/ 2$ for all $i\in [n], a\in B$ with probability at
least $1-\beta$. We will condition on this event. This means
$\left\|\overline U - U(\theta)\right\|_\infty < \zeta$, so
$\widehat U = U(\theta)$ and therefore, $S' = S$.

\newpage
\section{Lower bound for stochastic utilities}
\label{sec:LB}

Let us argue that the $O(\log T)$ regret rate in Theorem~\ref{thm:model-noisy} is essentially optimal for Bayesian Exploration with stochastic utilities. We draw on the logarithmic lower bound in multi-armed bandits, a well-known result that dates back to \citet{Lai-Robbins-85}.%
\footnote{A more modern and lucid exposition of this result can be found in \citet[][Section 2.3]{Bubeck-survey12}.}

We start with a simpler formulation of our lower bound:

\begin{theorem}\label{thm:LB-regret-simple}
Consider Bayesian Exploration with stochastic utilities. Focus on a special case with only a single agent per round and only two actions. For any $C>0$ there is an absolute constant $\delta_0$ and a problem instance for which any iterative recommendation policy $\pi$ (BIC or not) suffers Bayesian regret
\begin{align}\label{eq:LB-regret-LB-simple}
\opt(\PiBIC[\delta])-\REW(\pi) \geq C\cdot \log t
    \qquad \text{for infinitely many rounds $t$ and any $\delta\in(0,\delta_0)$}.
 \end{align}
\end{theorem}

A stronger and more informative result (Theorem~\ref{thm:LB-regret}, which implies Theorem~\ref{thm:LB-regret-simple}) is formulated and proved later in this section, preceded by some background and discussion.

\subsection{Background and discussion}

\xhdr{Lower bounds in bandits.}
On a high level, a lower bound on regret in a multi-armed bandit problem asserts that a given family $\cF$ of problem instances is ``hard" for any algorithm. A precise statement of such result can take one of the following three forms found in the literature:
\begin{itemize}
\item[(i)] any algorithm incurs high expected regret on \emph{some} problem instance in $\cF$.
\item[(ii)] if a problem instance is drawn from some distribution over $\cF$, any algorithm incurs high expected regret.
\item[(iii)] any algorithm incurs high regret on \emph{every} problem instance in $\cF$, perhaps under some assumption on the algorithm.
\end{itemize}
Form (ii) is stronger than (i). Form (iii) without any assumption is stronger than (ii). The two most known lower bounds are the $\sqrt{T}$ bound from \citet{bandits-exp3} and the instance-dependent constant times $\log(T)$ bound from \citet{Lai-Robbins-85}; they are in the form (ii) and (iii), respectively. The Lai-Robbins lower bound has a weaker but more lucid corollary in the form (i).

In all bandit lower bounds that we are aware of, the construction of the family $\cF$ tends to be quite simple and intuitive, whereas the actual regret bound is obtained by a lengthy and intricate mathematical analysis based on KL-divergence. Thus, when proving a new lower regret bound it is very desirable to reduce it to an existing one rather than derive it from first principles.

\xhdr{Additional difficulties in Bayesian Exploration.}
Connecting Bayesian Exploration to the Lai-Robbins lower bound is subtle, for several reasons.

First, we'd like a lower bound for the same benchmark as the upper bound in Theorem~\ref{thm:model-noisy}, \ie for $\opt(\PiBIC[\delta])$, whereas the bandit lower bound is relative to the best action. While
in Bayesian Exploration some actions might not be explorable, we use the tools from \citet{ICexploration-ec15} to argue that all actions are explorable in the problem instances that we focus on. Moreover, the benchmark
    $\opt(\PiBIC[\delta])$
is about $\delta$-BIC policies, $\delta>0$, whereas the results from \citet{ICexploration-ec15} are only about $\delta=0$ as stated. We need to be careful with our assumptions so as to observe that the technique from \citet{ICexploration-ec15} yields in $\delta$-BIC policies, for any sufficiently small $\delta$. In the end, we conclude that
    $\opt(\PiBIC[\delta])$
equals the best-action benchmark.

Second, the Lai-Robbins lower bound includes a non-trivial assumption that, essentially, the algorithm's regret is not too high. So, the lower bound for Bayesian Exploration should either include a similar assumption explicitly, or derive it inside the proof (we accomplish the latter). Moreover, while a policy for Bayesian Exploration can be seen as a bandit algorithm, in Bayesian Exploration the algorithm inputs a prior, whereas in the Lai-Robbins lower bound it doesn't. In particular, the Lai-Robbins machinery immediately applies only when the prior is fixed. In order to resolve these difficulties and apply the Lai-Robbins lower bound after all, we focus on a carefully constructed instance of Bayesian Exploration, and formulate a non-standard version of Lai-Robbins whose proof is implicit in \citet[][Section 2.3]{Bubeck-survey12}.

\subsection{Formulation and proof}

We focus on a relatively simple special case of Bayesian Exploration for which the techniques from  \citet{ICexploration-ec15} and \citet{Lai-Robbins-85} can be usefully applied.

\begin{definition}
Consider a special case of Bayesian Exploration with only a single agent per round ($n=1$), and only two actions, $a_1$ and $a_2$. We posit that principal's reward equals agent's utility, and that the (realized) agents' utilities are Bernoulli random variables. The prior is independent:
\[ \Pr\sbr{ u(a_1) = \nu_1 \text{ and } u(a_2)=\nu_2} =
        \Pr[u(a_1)=\nu_1] \cdot \Pr[u(a_2)=\nu_2] ,\]
where $u(a)$ denotes the expected utility of action $a$.

Further, expected utilities can only take exactly three distinct values:
\begin{align*}
    u(a) \in \{\nu_1,\nu_2,\nu_3\} \subset (0,1)
    \qquad\text{for each action $a$},
\end{align*}
and the prior assigns a positive probability to each action-value pair:
    $\Pr[u(a_i)=\nu_j]>0$.
The two actions have distinct prior mean rewards:
    $\E[u(a_1)]\neq \E[ u(a_2)]$.

Such problem instances are called {\bf\em two-by-three instances} with support
    $\{\nu_1,\nu_2,\nu_3 \}$.
\end{definition}

\begin{lemma}\label{lm:LB-benchmarks}
For any two-by-three instance of Bayesian Exploration, there exists $\delta_0>0$ such that
    \[ B_\cU = \opt(\PiBIC[\delta])
        \qquad \text{for any $\delta\in (0,\delta_0)$}. \]
\end{lemma}

\begin{proof}
Fix a two-by-three problem instance, let $\cU$ denote its utility structure. Since the two actions have distinct prior mean rewards, without loss of generality assume that $\E[u(a_1)]> \E[u(a_2)]$.

It is easy to see that this problem instance satisfies the following property. Let $S^k$ be the the random variable representing the first $k$ rewards received from action $a_1$.
Let
\begin{align*}
X^k =  \E\sbr{\mu_2-\mu_{1} | S^k}.
\end{align*}
There exist constants $k_\cU, \tau_\cU, \rho_\cU>0$, possibly depending on $\cU$,
\begin{align*}
\Pr\sbr{ X_{i}^k> \tau_\cU} \geq \rho_{\cU} \qquad\forall k>k_\cU.
\end{align*}
\citet[][Section 4]{ICexploration-ec15} proved that with this property, both actions are eventually explorable. Moreover, they provided a BIC policy which explores both actions at least $k$ times, for any given $k$, and terminates at some round $T=T_{k,\cU}$ determined by the $k$ and the utility structure. Let us extend this policy by one more round, so that in the last round it recommends an arm with a highest posterior mean reward. It is easy to verify that this extended policy is in fact $\delta$-BIC for any $\delta\in (0,\delta_\cU)$ and $k>k'_\cU$, for some constants $\delta_\cU$ and $k'_\cU$ that are determined by the utility structure. Let $f(k)$ be the Bayesian-expected reward of this policy in the last round, as a function of parameter $k$. It is easy to see that
    $f(k) \to B_\cU$
as $k\to\infty$.
\end{proof}

We consider iterative recommendation policies $\pi$ that run indefinitely, without a built-in time horizon. In light of Lemma~\ref{lm:LB-benchmarks}, we focus on regret with respect to the best-action benchmark. For technical convenience, we define it here for each round $t$ and \emph{before} taking the expectation over the prior:
    \[ R(t) := t\cdot \sbr{ \max_{a\in\cA} u(a,\trueState)}
        - \sum_{s=1}^t u(\pi^s,\trueState). \]

\begin{theorem}\label{thm:LB-regret}
Fix two distinct numbers
    $\nu_1,\nu_2\in [\tfrac14,\tfrac34]$.
There exists a number
    $\nu \in (0,1)\setminus \{\nu_1,\nu_2\} $
and an absolute constant $C_0$ with the following property. Consider any two-by-three instance of Bayesian Exploration with support
    $\{\nu,\nu_1,\nu_2\}$
and prior probabilities
\begin{align}\label{eq:LB-thm-probs}
    \Pr\sbr{u(a_i)=\nu_j}\in [\tfrac14,\tfrac34], \quad i,j=\{1,2\}.
\end{align}
On this problem instance, any iterative recommendation policy (BIC or not) suffers regret
\begin{align}\label{eq:LB-regret-LB}
\E[R(t)]\geq \frac{C_0\cdot \log t}{|\nu_1-\nu_2|}
    \qquad \text{for infinitely many rounds $t$}.
\end{align}
\end{theorem}

To prove Theorem~\ref{thm:LB-regret}, we make a connection to a basic model of multi-armed bandits in which there are only two actions $a_1,a_2$, the reward of each action $a$ is a Bernoulli random variable with a fixed but unknown expectation $\mu(a)$, and no Bayesian prior is available to the algorithm. Henceforth, we refer to this model as \emph{Basic Bandits}. An instance of Basic Bandits is specified by the mean rewards $\mu(a_1),\mu(a_2)$.
Any iterative recommendation policy for a two-by-three instance can be interpreted as an algorithm for Basic Bandits.

We use a non-standard version of the Lai-Robbins lower bound, stated below. It is implicit in the proof of the standard version, as presented in \citet[][Section 2.3]{Bubeck-survey12}.

\begin{theorem}[\citet{Lai-Robbins-85,Bubeck-survey12}]\label{thm:LB-bandit-LB}
Fix two distinct numbers
    $\nu_1,\nu_2\in [\tfrac14,\tfrac34]$.
There exists a number
    $\nu \in (0,1)\setminus \{\nu_1,\nu_2\} $
with the following property. Consider any algorithm for Basic Bandits which achieves a ``not-too-high" regret for each problem instance with mean rewards in $\{\nu,\nu_1,\nu_2\}$. Specifically, assume this algorithm achieves regret
\begin{align}\label{eq:thm-standard-LB-assn}
\E[R(t)]\leq O(C_{\cI,\alpha}\; t^\alpha)
\quad\text{for each round $t$ and each $\alpha>0$},
\end{align}
where the constant $C_{\cI,\alpha}$ is determined by the problem instance and the $\alpha$. Then for any problem instance with distinct mean rewards in
    $\{\nu_1,\nu_2\}$
this algorithm suffers regret
\begin{align}\label{eq:thm-standard-LB-LB}
\E[R(t)]\geq \frac{C\cdot \log t}{|\nu_1-\nu_2|}
    \qquad \text{for any time $t>t_0$},
 \end{align}
where $C$ is an absolute constant and $t_0$ can depend on the problem instance.
\end{theorem}

\begin{proof}[Proof of Theorem~\ref{thm:LB-regret}]
Fix an iterative recommendation policy $\pi$, BIC or not. Pick $\nu$ as in Theorem~\ref{thm:LB-bandit-LB}. Fix a two-by-three problem instance $\cI$ with support
    $\{\nu,\nu_1,\nu_2\}$
and prior probabilities as in \eqref{eq:LB-thm-probs}.  For the sake of contradiction, suppose policy $\pi$ does not satisfy the lower bound \eqref{eq:LB-regret-LB} for this problem instance. Then, letting
    $c_0 = 1/|\nu_1-\nu_2|$,
for some time $t_0$ it holds that
\begin{align*}
\E[R(t)]\leq  O(c_0\,\log t)
    \qquad \text{for any time $t>t_0$}
\end{align*}
(where the $O()$ notation hides absolute constants). It follows that policy $\pi$ satisfies this  upper bound for any instance of Basic Bandits with mean rewards in $\{\nu,\nu_1,\nu_2\}$. Then by Theorem~\ref{thm:LB-bandit-LB} policy $\pi$ suffers the lower bound \eqref{eq:thm-standard-LB-LB}
for the instance of Basic Bandits with mean rewards
    $\mu(a_1) = \nu_1$ and $\mu(a_2) = \nu_2$.
This instance occurs with constant probability with respect to the prior in $\cI$, which implies the desired lower bound \eqref{eq:LB-regret-LB}.
\end{proof}

\end{document}
